\def\mode{1}
	\let\eaddfnt\@undefined
	\newfont{\eaddfnt}{phvr at 9pt}	%
	\def\email#1{{{\eaddfnt{\vskip 0pt#1}}}}
	\title{Optimal Algorithms for Ranked Enumeration of \\Answers to Full Conjunctive Queries}
    \author{
    \alignauthor
    Nikolaos Tziavelis
	\orcidicon{0000-0001-8342-2177}\\
           \affaddr{Northeastern University}\\
           \affaddr{Boston, MA, USA}\\
           \email{ntziavelis@ccs.neu.edu}
    \alignauthor
    Deepak Ajwani
    \orcidicon{0000-0001-7269-4150}\\
           \affaddr{University College Dublin}\\
           \affaddr{Dublin, Ireland}\\
            \email{deepak.ajwani@ucd.ie}
    \alignauthor 
	Wolfgang Gatterbauer 
	\orcidicon{0000-0002-9614-0504}\\
           \affaddr{Northeastern University}\\
           \affaddr{Boston, MA, USA}\\
            \email{w.gatterbauer@northeastern.edu}
    \and  %
    \alignauthor Mirek Riedewald
	\orcidicon{0000-0002-6102-7472}\\
           \affaddr{Northeastern University}\\
           \affaddr{Boston, MA, USA}\\
            \email{m.riedewald@northeastern.edu}
    \alignauthor Xiaofeng Yang\titlenote{Work performed while PhD student at Northeastern University.}\hspace{-1.5mm}\orcidicon{0000-0002-4364-3458}\\
           \affaddr{VMware}\\
           \affaddr{Palo Alto, CA, USA}\\
            \email{alice.xiaofeng.yang@gmail.com}
    }
	\title[Optimal Algorithms for Ranked Enumeration]{Optimal Algorithms for Ranked Enumeration of \\Answers to Full Conjunctive Queries}
	\author{Nikolaos Tziavelis}
	\email{ntziavelis@ccs.neu.edu}
	\affiliation{%
	    \institution{Northeastern University}
	    \city{Boston}
	    \state{Massachusetts}
	    \country{USA}
	}
	\author{Deepak Ajwani}
	\email{deepak.ajwani@ucd.ie}
	\affiliation{%
	    \institution{University College Dublin}
	    \city{Dublin}
	    \country{Ireland}
	}
	\author{Wolfgang Gatterbauer}
	\email{w.gatterbauer@northeastern.edu}
	\affiliation{%
	    \institution{Northeastern University}
	    \city{Boston}
	    \state{Massachusetts}
	    \country{USA}
	}
	\author{Mirek Riedewald}
	\email{m.riedewald@northeastern.edu}
	\affiliation{%
	    \institution{Northeastern University}
	    \city{Boston}
	    \state{Massachusetts}
	    \country{USA}
	}
	\author{Xiaofeng Yang}
	\email{alice.xiaofeng.yang@gmail.com}
	\affiliation{%
	    \institution{Northeastern University}
	    \city{Boston}
	    \state{Massachusetts}
	    \country{USA}
	}
\def\eox{\unskip\kern 10pt{\unitlength1pt\linethickness{.4pt}$\diamondsuit${}}} %
\newcommand{\hide}[1]{}
\newtheorem{theorem}{Theorem} %
\newaliascnt{corollary}{theorem}
\newtheorem{corollary}[corollary]{Corollary}
\newaliascnt{example}{theorem}
\newtheorem{example}[example]{Example}
\newaliascnt{definition}{theorem}
\newtheorem{definition}[definition]{Definition}
\newaliascnt{proposition}{theorem}
\newtheorem{proposition}[proposition]{Proposition}
\newaliascnt{lemma}{theorem}
\newaliascnt{conjecture}{theorem}
\newtheorem{questionW}{Question}
\newtheorem{resultW}{Result}
\let\footnotesize\scriptsize
	\definecolor[named]{ACMDarkBlue}{cmyk}{1,0.58,0,0.21}
	\renewcommand{\@opargbegintheorem}[3]{%
	    \parskip 0pt %
	    \trivlist
	    \item[%
	    	\hskip 10\p@										%
	        \hskip \labelsep
	        {\sc #1\ #2\             %
	   \setbox\@tempboxa\hbox{(#3)}  %
	        \ifdim \wd\@tempboxa>\z@ %
	            \hskip 0\p@\relax    %
	            \box\@tempboxa       %
	        \fi.}%
	    ]
	    \it
	}
	\def\@begintheorem#1#2{%
	    \parskip 0pt %
	    \trivlist
	    \item[%
	    	\hskip 10\p@										%
	        \hskip \labelsep
	        {{\sc #1}\hskip 5\p@\relax#2.}%
	    ]
	    \it
	}
\newcommand{\resultbox}[1]{
\begin{tcolorbox}[
	enhanced jigsaw,		%
	colback=red!5,
	colframe=red!75!black,	
	arc=0mm,
	left skip=-1mm,
	right skip=-1mm,	
	left=0mm,
	topsep at break=1mm,			%
	right=0mm,
	top=0mm,
	bottom=0mm,		%
	breakable,		%
	parbox = false		%
]
\emph{#1}
\end{tcolorbox}
}
\DeclareRobustCommand*\uell{\mathpalette\@uell\relax}
\newcommand*\@uell[2]{
  \setbox0=\hbox{$#1\ell$}
  \setbox1=\hbox{\rotatebox{10}{$#1\ell$}}
  \dimen0=\wd0 \advance\dimen0 by -\wd1 \divide\dimen0 by 2
  \mathord{\lower 0.1ex \hbox{\kern\dimen0\unhbox1\kern\dimen0}}
}
\newcommand{\smallsection}[1]{\vspace{2mm}\noindent\textbf{#1.}} %
\newcommand{\introparagraph}[1]{\textbf{#1.}} %
\renewcommand{\epsilon}{\varepsilon} %
\newcommand{\define}{:=} %
\newcommand{\datarule}{{\,:\!\!-\,}} %
\renewcommand{\vec}[1]{\boldsymbol{\mathbf{#1}}}
\newcommand{\N}{\mathbb{N}} %
\newcommand{\R}{{\mathbb{R}}} %
\newcommand{\bigO}{{\mathcal{O}}} %
\renewcommand{\O}{{\mathcal{O}}} %
\newcommand{\val}{\textup{\textrm{val}}}
\definecolor{orcidlogocol}{HTML}{A6CE39}
\tikzset{
  orcidlogo/.pic={
    \fill[orcidlogocol] svg{M256,128c0,70.7-57.3,128-128,128C57.3,256,0,198.7,0,128C0,57.3,57.3,0,128,0C198.7,0,256,57.3,256,128z};
    \fill[white] svg{M86.3,186.2H70.9V79.1h15.4v48.4V186.2z}
                 svg{M108.9,79.1h41.6c39.6,0,57,28.3,57,53.6c0,27.5-21.5,53.6-56.8,53.6h-41.8V79.1z M124.3,172.4h24.5c34.9,0,42.9-26.5,42.9-39.7c0-21.5-13.7-39.7-43.7-39.7h-23.7V172.4z}
                 svg{M88.7,56.8c0,5.5-4.5,10.1-10.1,10.1c-5.6,0-10.1-4.6-10.1-10.1c0-5.6,4.5-10.1,10.1-10.1C84.2,46.7,88.7,51.3,88.7,56.8z};
  }
}
\DeclareRobustCommand\orcidicon[1]{\href{https://orcid.org/#1}{\mbox{\scalerel*{
\begin{tikzpicture}[yscale=-1,transform shape]
\pic{orcidlogo};
\end{tikzpicture}
}{|}}}}
\newcommand{\new}[1]{#1}
\newcommand{\adom}{{\mathtt{ADom}}}
\newcommand{\aggr}{+}
\newcommand{\aggrsum}{\sum}
\newcommand{\Choices}{\mathtt{Choices}}
\newcommand{\Suc}{\mathtt{Succ}}
\newcommand{\stages}{\ell}
\newcommand{\sgiter}{i}
\newcommand{\sglim}{r}
\newcommand{\Dec}{E}
\newcommand{\DecR}{\mathbb{E}}
\newcommand{\Sset}{S}
\newcommand{\SsetR}{\mathbb{S}}
\newcommand{\prank}{j}
\newcommand{\best}{\mathtt{best}} %
\newcommand{\concat}{\circ}
\newcommand{\Ch}{C}
\newcommand{\parent}{pr}
\newcommand{\chnum}{\lambda}
\newcommand{\solW}{\pi}
\newcommand{\sol}{\Pi}
\newcommand{\weight}{w}	%
\newcommand{\tree}{\concat}	
\newcommand{\PSQL}{\textsc{PSQL}\xspace}
\newcommand{\NAIVE}{\textsc{Batch}\xspace}
\newcommand{\LAZY}{\textsc{Lazy}\xspace}
\newcommand{\EAGER}{\textsc{Eager}\xspace}
\newcommand{\MIN}{\textsc{All}\xspace}
\newcommand{\HEAP}{\textsc{Take2}\xspace}
\newcommand{\REDP}{\textsc{anyK-rec}\xspace}
\newcommand{\RPDP}{\textsc{anyK-part}\xspace}
\newcommand{\RECURSIVE}{\textsc{Recursive}\xspace}
\newcommand{\NPRR}{\textsc{NPRR}\xspace}
\newcommand{\PANDA}{\textsc{Panda}\xspace}
\newcommand{\GenJoin}{\textsc{Generic-Join}\xspace}
\definecolor{dkgreen}{rgb}{0,0.6,0}
\newcommand{\cc}{olive}
\newcommand{\algocomment}[1]{\textcolor{\cc}{{//#1}}}
\newcommand{\TTF}{\ensuremath{\mathrm{TTF}}}
\newcommand{\TTL}{\ensuremath{\mathrm{TTL}}\xspace}
\newcommand{\TT}{\ensuremath{\mathrm{TT}}}
\newcommand{\Del}{\ensuremath{\mathrm{Delay}}}
\newcommand{\MEM}{\ensuremath{\mathrm{MEM}}}
\renewcommand{\H}{\mathcal{H}}
\renewcommand{\G}{\mathcal{G}}
\newcommand{\HD}{\textit{HD}}
\newcommand{\Cand}{\mathtt{Cand}} %
\newcommand{\Union}{\mathtt{Union}} %
\newcommand{\minweight}{min-weight-projection\xspace} 
\newcommand{\Minweight}{Min-weight-projection\xspace} 
\newcommand{\allweights}{all-weight-projection\xspace}
\newcommand{\Allweights}{All-weight-projection\xspace}
\newcommand{\varset}{\textup{\texttt{var}}}
\newcommand{\0}{\bar{0}}
\newcommand{\1}{\bar{1}}
\newcommand{\ghtw}{\textsf{ghw}\xspace}
\newcommand{\fhtw}{\textsf{fhw}\xspace}
\newcommand{\subw}{\textsf{subw}\xspace}
\newcommand{\lex}{\textrm{L}}
\newcommand{\BMM}{\textsc{BMM}\xspace}
\newcommand{\SBMM}{\textsc{sparseBMM}\xspace}
\newcommand{\TRIANGLE}{\textsc{Triangle}\xspace}
\newcommand{\HYPERCLQ}{\textsc{Hyperclique}\xspace}
\newcommand{\markZwicky}[1][]{\pgfutil@ifnextchar({\mark@Zwicky{#1}}{\mark@Zwicky{#1}()}}
\def\mark@Zwicky#1(#2)#3{%
   \tikz[every Zwicky picture,#1]{%
     \node[every Zwicky node,draw=none,inner sep=+\z@,outer sep=+\z@] {#3};
     \def\tikz@Mark@name{#2}%
     \ifx\tikz@Mark@name\pgfutil@empty\else
       \tikzset{every Zwicky node/.append style={name={#2}}}%
     \fi
     \node[every Zwicky node,overlay] {\phantom{#3}};
   }%
}
\newcommand{\tikzZwicky}[1][]{%
  \def\tikz@Zwicky@args{#1}%
  \let\tikz@Zwicky@list\pgfutil@gobble
  \let\tikz@Zwicky@first\pgfutil@empty
  \pgfutil@ifnextchar(\tikz@Zwicky@collect\tikz@Zwicky@finish
}
\def\tikz@Zwicky@collect(#1){%
  \ifx\tikz@Zwicky@first\pgfutil@empty
    \edef\tikz@Zwicky@first{#1}%
  \else
    \edef\tikz@Zwicky@list{\tikz@Zwicky@list,#1}%
  \fi
  \pgfutil@ifnextchar(\tikz@Zwicky@collect\tikz@Zwicky@finish
}
\def\tikz@Zwicky@finish{%
  \tikz[remember picture,overlay]
    \draw[every Zwicky connector,/expanded=\tikz@Zwicky@args]
      (\tikz@Zwicky@first) [/expanded={@Zwicky@list/.list={\tikz@Zwicky@list}}] [every Zwicky connect finish/.try];
}
\tikzset{
  @Zwicky@list/.style={insert path={to[every Zwicky connector how/.try] (#1)}},
  every Zwicky picture/.style={
    baseline,
    remember picture,
  },
  every Zwicky node/.style={
    remember picture,
    anchor=base,
    inner sep=+2pt
  },
  every Zwicky connector/.style={
    ultra thick,
    red!80!black,
    draw opacity=.5,
    line cap=round,
    line join=round
  }
}
\begin{document}

\if 0\mode
    \maketitle
\fi

\begin{abstract} 
We study ranked enumeration of join-query results according to very general
orders defined by selective dioids.
Our main contribution is a framework for ranked enumeration over a class of
dynamic programming problems that generalizes seemingly
different problems that had been studied in isolation.
To this end, we extend classic algorithms that find the $k$-shortest paths
in a weighted graph.
For full conjunctive queries, including cyclic ones,
our approach is optimal in terms of the time to return the top result
and the delay between results.
These optimality properties are derived for the widely used notion of
data complexity, which treats query size as a constant.
By performing a careful cost analysis, we are able to
uncover a previously unknown trade-off between two incomparable 
enumeration approaches: one has lower complexity when the number of
returned results is small, the other when the number is very large.
We theoretically and empirically demonstrate the superiority of our
techniques over batch algorithms, which produce the full result and then sort it.
Our technique is not only faster for returning the first few
results, but on some inputs beats the batch algorithm even when all
results are produced.
\end{abstract}

\if 1\mode
    \maketitle
\fi

\section{Introduction}

Joins are an essential building block of queries in relational and graph databases,
and recent work on worst-case optimal joins for cyclic queries renewed interest in their
efficient evaluation~\cite{ngo2018worst}.
Part of the excitement stems from the fact that
conjunctive query (CQ) evaluation is equivalent to other key problems
such as constraint satisfaction~\cite{KOLAITIS2000302}
and hypergraph homomorphism~\cite{Friedgut1998}.
Similar to \cite{ngo2018worst}, we consider \emph{full conjunctive queries}, 
yet we are interested in 
\emph{ranked enumeration}, recently identified as an important open problem \cite{dagstuhl19enumeration}:
return output tuples in the order determined by a given ranking function.
Here success is measured not only in the time for total result computation,
but the main challenge lies in
\emph{returning the top-ranked result(s) as quickly as possible}.

We share this motivation with top-$k$ query evaluation~\cite{ilyas08survey},
which defines the importance of an output tuple based on the \emph{weights}
of its participating input tuples.
However, many top-$k$ approaches, including the famous
Threshold Algorithm~\cite{fagin03},
were developed for a {middleware-centric} cost model that charges
an algorithm only for accesses to external data sources,
but does not take into account the cost associated with
potentially huge intermediate results. We want optimality guarantees
in the standard RAM-model of computation
for (1) the time until the first result is returned
and (2) the delay between results.

\begin{example}[4-cycle query]\label{ex:4cycle}
Let $w$ be a function that returns the real-valued weight of a tuple and
consider 4-cycle query $Q_{C4}$
over $R_1(A_1, A_2)$, $R_2(A_2, A_3)$,
$R_3(A_3, A_4)$, and $R_4(A_4, A_1)$ with at most $n$ tuples each:
\begin{verbatim}
    SELECT   R1.A1, R2.A2, R3.A3, R4.A4, R1.W + R2.W + R3.W + R4.W as Weight
    FROM     R1, R2, R3, R4
    WHERE    R1.A2=R2.A2 AND R2.A3=R3.A3 AND 
             R3.A4=R4.A4 AND R4.A1=R1.A1
    ORDER BY Weight ASC
    LIMIT    k
\end{verbatim}
One can compute the full output with a worst-case optimal join algorithm
such as \NPRR~\cite{ngo2018worst} or \GenJoin~\cite{Ngo:2014:SSB:2590989.2590991} and then sort it.
Since the fractional edge cover number $\rho^{*}$ 
of $Q_{C4}$ is 2,
it takes $\O(n^2)$ just to produce the full output~\cite{AGM}.

On the other hand, the Boolean version of this query
(``Is there any 4-cycle?'') can be answered in $\O(n^{1.5})$~\cite{Marx:2013:THP:2555516.2535926}.
Our approach returns the top-ranked 4-cycle in $\O(n^{1.5})$ as well.
This is remarkable, given that determining the existence of a 4-cycle
appears easier than finding the top-ranked 4-cycle
(we can use the latter to answer the former).
After the top-ranked 4-cycle is found,
our approach continues to return the remaining results in ranked order with
``minimal'' delay.
\end{example}

We develop a theory of \emph{optimal ranked enumeration over full CQs}.
It reveals deeper relationships between recent work that
only partially addresses the problem we are considering: 
Putting aside the focus on twig patterns~\cite{chang15enumeration}
and subgraph isomorphism~\cite{yang2018any}, graph-pattern ranking techniques
can in principle be applied to conjunctive queries.
An unpublished paper \cite{deep19} that was developed concurrently with our work
offers a recursive solution for ranked enumeration. 
All this prior work raises
the question of how the approaches are related and whether they can be improved: 
Can time complexity of the top-$k$ algorithm by Chang et al.~\cite{chang15enumeration}
be improved for large $k$ and is it possible to extend it to give
{optimality guarantees for cyclic queries}?
For \cite{KimelfeldS2006,yang2018any}, how can the worst-case delay be reduced?
Is it possible to reduce the complexity of \cite{deep19} for returning the
first few results and can one close the asymptotic gap between the time
complexity for returning the top-ranked result and the complexity of the corresponding
Boolean query for simple cycles?

It is non-trivial to answer the above questions, because those approaches blend
various elements into monolithic solutions, sometimes reinventing the wheel in the
process.

\smallsection{Key contributions}
We identify and formally model \emph{the underlying structure of the ranked
enumeration problem} for conjunctive queries
and then solve it in a principled way:

(1) For CQs that are paths, we
\emph{identify and formalize the deeper common foundations
of problems that had been studied in isolation}: $k$-shortest path,
top-$k$ graph-pattern retrieval, and ranked enumeration over joins.
While interesting in its own right, uncovering those relationships enables us to propose
the first algorithms with \emph{optimal time complexity for ranked enumeration}
of the results of both cyclic and acyclic full CQs.
In particular, the top-ranked output tuple of an acyclic join query is returned in time
linear in input size.
For cyclic queries this complexity increases with the submodular
width (\subw) of the query~\cite{Marx:2013:THP:2555516.2535926}, 
which is currently the best known for Boolean queries. 
Delay between consecutive output tuples is logarithmic
in $k$.

(2) To achieve optimality, we make several technical
contributions. 
First, for path CQs we propose a new algorithm 
\HEAP 
\new{with lower delay given linear-time pre-processing}
than
all previous work but Eppstein's algorithm~\cite{eppstein1998finding},
whose practical performance is unknown.
\HEAP matches the latter
and has the added benefit that it can be generalized to arbitrary acyclic queries.\footnote{This generalization is unknown for Eppstein and it would be challenging
due to the complex nature of that algorithm.}
Second, to generalize $k$-shortest path algorithms to
arbitrary acyclic CQs, we introduce
\emph{ranked enumeration over Tree-based Dynamic Programming} (T-DP), a
variant of Non-Serial Dynamic Programming (NSDP)~\cite{bertele72nsdp}.
Third, we propose \emph{Union of T-DP problems} (UT-DP), 
a framework
for \emph{optimally incorporating in our approach all existing decompositions}
of a cyclic CQ into a union of trees. 
Thereby, any decomposition of a full CQ
that achieves optimality for the Boolean version of the query will
result in an optimal algorithm for ranked enumeration over
full CQs in our framework.

(3) Ranked enumeration over path CQs forms the backbone of our approach, therefore
we analyze all techniques for this problem in terms
of both \emph{data and query complexity}. 
This is complemented by the
\emph{first empirical study} that directly compares landmark results on
ranked enumeration from diverse domains such as
$k$-shortest paths, graph-pattern search, and CQs.
Our analysis reveals several interesting insights: 
($i$) In terms of time complexity the best \emph{Lawler-type}~\cite{lawler72}
approaches are asymptotically optimal for general inputs and dominate the
\emph{Recursive Enumeration Algorithm (REA)}~\cite{deep19,jimenez99shortest}.
($ii$) Since REA smartly reuses comparisons, there exist inputs for which it
produces the \emph{full ordered output} with lower time complexity than Lawler;
it is even faster than sorting! Our experiments verify this behavior
and suggest that Lawler-type approaches should be preferred for small $k$,
but REA for large $k$. 
Thus we are the first to not only propose different approaches, but also reveal that
\emph{neither dominates all others}, both in terms of asymptotic complexity and measured running time.
\new{
($iii$) Even though our new \HEAP algorithm needs asymptotically the same pre-processing and has lower delay than \LAZY~\cite{chang15enumeration}, its overall \emph{time-to-the-$k$'th result is the same} and we do not find it winning in our experiments.
}

\new{
This is the extended version of a paper appearing
in VLDB'20 \cite{tziavelis20vldb}.
}
Further information is available on the project web page at \url{https://northeastern-datalab.github.io/anyk/}.

\vspace{-1mm}
\section{Formal Setup}
\label{sec:def}

We use $\N_i^j$ to denote the set of natural numbers $\{ i, \ldots, j \}$.

\subsection{Conjunctive Queries (CQs)}
\label{sec:cq_def}

Our approach can be applied to any join query, including those
with theta-join conditions and projections, but we provide optimality results
only for \emph{full conjunctive queries (CQs)} with equi-joins~\cite{ngo2018worst}
and hence focus on them. 
A full CQ is a  first-order formula 
$Q(\vec x)= (g_1 \wedge \cdots \wedge g_\stages)$, written
$Q(\vec x) \datarule g_1(\vec x_1),\ldots, g_\stages(\vec x_\stages)$ 
in Datalog notation, where each atom $g_i$ represents a relation $R_i(\vec x_i)$ 
with different atoms possibly referring to the same physical relation, and 
$\vec x = \bigcup_i \vec x_i$ is a set of $m$ attributes.
\new{
An \emph{answer} or 
\emph{query result} or 
\emph{output tuple} is an assignment of the variables 
$\vec x$ to values from the domain of the database such that the formula is satisfied. 
}
The \emph{size} of the query $|Q|$ is the size of the formula.
We use $n$ to refer to the maximal cardinality
of any input relation referenced in $Q$.
Occurrence of the same variable in different atoms encodes an \emph{equi-join} condition. 
A CQ can be represented by a hypergraph with the variables as the nodes and the atoms
as the hyperedges; acyclicity of the query is defined in terms of the acyclicity
of the associated hypergraph~\cite{goodman1983gyo}.
In particular, we say that a query is acyclic when its hypergraph is alpha-acyclic\cite{baron16acyclic}.
This property can be verified efficiently in $\O(|Q|)$ by the
well-known GYO reduction \cite{graham80gyo,yu79gyo} which constructs a join tree.
A \emph{Boolean} CQ just asks for the satisfiability of the formula.
We use $Q^B$ to denote the Boolean version of $Q$.
A query with \emph{self-joins} has at least one relation appearing in more than one atom and a
\emph{self-join-free} query has no self-joins.
To avoid notational clutter and without loss of generality,
we assume that there are no selection conditions on individual relations (like $R(x,1)$ or $R(x,x)$):
Tables can be copied, and selection conditions can always be applied directly to the tables in a preprocessing step that takes $\O(n)$.

\begin{example}[$\ell$-path and $\ell$-cycle queries]
\label{ex:path_cycle}
Let $R_i(A, B), i \in \N_1^\stages$, be tables 
containing directed graph edges from $A$ to $B$.
A length-$\stages$ path and
a length-$\stages$ cycle can respectively be expressed as:
\begin{align*}
	Q_{P\stages }(\vec x) &\datarule R_1(x_1, x_2), R_2(x_2, x_3), \ldots, R_\stages(x_\stages, x_{\stages+1}) 
	&& \textrm{($\ell$-path)}\\
	Q_{C\stages }(\vec x) &\datarule R_1(x_1, x_2), R_2(x_2, x_3), \ldots, R_\stages(x_\stages, x_1)
	&& \textrm{($\ell$-cycle)}.
\end{align*}
\end{example}

We often represent an output tuple as a vector of those input tuples that joined
to produce it, e.g., $(r_1, r_2, r_3, r_4) \in R_1 \times R_2 \times R_3 \times R_4$
for 4-path query $Q_{P4}$. 
We refer to this vector as the \emph{witness} of a result.

\subsection{Ranked Enumeration Problem}
\label{sec:ranking}

We want to order the results of a full CQ based on the weights of their corresponding
witnesses. 
For maximal generality, we define 
the order of query results
based on
algebraic structures called
\emph{selective dioids}~\cite{GondranMinoux:2008:Semirings},
which are semirings with an ordering property.

A \emph{monoid} is a 3-tuple $(W, \oplus, \0)$
where $W$ is a 
set
and
$\oplus: W \times W \to W$ is a closed 
binary 
operation such that:
\begin{enumerate}
    \item $(x \oplus y) \oplus z = x \oplus (y \oplus z)$ (associativity),
    \item $\0 \in W$ satisfies $x \oplus \0 = \0 \oplus x = x, \forall x \in W$ (neutral element).
\end{enumerate}
If additionally it holds that
\begin{enumerate}
\setcounter{enumi}{2}
    \item $x \oplus y = y \oplus x, \forall x, y \in W$ (commutativity),
\end{enumerate}
then the monoid is called a commutative monoid. 

A \emph{semiring} is a 5-tuple $(W, \oplus, \otimes, \0, \1)$, where
\begin{enumerate}
    \item $(W, \oplus, \0)$ is a commutative monoid,
    \item $(W, \otimes, \1)$ is a monoid,
    \item $\forall x, y, z \in W: 
            (x \oplus y) \otimes z =
            (x \otimes z) \oplus  (y \otimes z)$
            (distributivity of $\otimes$ over $\oplus$),
    \item $\forall a \in W: a \otimes \0 = \0 \otimes a = \0$
            ($\0$ is absorbing for $\otimes$).
\end{enumerate}

\begin{definition}[Selective dioid]\label{def:dioid}
A \emph{selective dioid} is a semiring for which
$\oplus$ is \emph{selective}, i.e., it always returns one of the operands:
$\forall x,y \in W: (x \oplus y = x) \lor (x \oplus y = y)$.
\end{definition}

Note that $\oplus$ being selective induces a total order
on $W$ by setting $x \leq y$ iff $x \oplus y = x$. 
We
define result weight as an aggregate of input-tuple weights using the binary operator $\otimes$ repeatedly: 

\begin{definition}[Result Weights]
Let $w$ be a weight function that maps each input tuple to some value
in $W$ and let
$Q(\vec x) \datarule R_1(\vec x_1),\ldots, R_\stages(\vec x_\stages)$
be a full CQ. The \emph{weight} of a result tuple $r$ is the weight
of its witness $(r_1,\ldots, r_\stages)$, $r_i \in R_i$, $i \in \N_1^\stages$,
defined as 
$\bigotimes_{i=1}^{\ell} w(r_i)$.
\end{definition}

Recall \cref{ex:4cycle} where we rank output tuples
by the sum of the weights of the corresponding input tuples,
i.e., the weight of $(r_1,\ldots, r_\stages)$ is $\sum_{i=1}^{\stages} w(r_i)$.
We achieve this by using the selective dioid $(\mathbb{R}^{\infty}, \min, +, \infty, 0)$
with $\mathbb{R}^{\infty} = \mathbb{R} \cup \{ \infty \}$
that is also called the \emph{tropical semiring}.
Notice the correspondence of $\otimes$ to $+$ and $\oplus$ to $\min$.
\new{
In general, we use the term \emph{ranking function} to refer to a
function that maps the query results to a domain equipped 
with a total order $\leq$.
In this paper, the ranking function is captured by a selective dioid:
we use the $\otimes$ operator to aggregate the input weights
into a result weight and then we use 
the $\oplus$ operator on the result weights to compare (or \emph{rank}) them.
}

The central problem in this paper is the following:

\begin{definition}[Ranked enumeration]
Given a query $Q$ over an input database $D$, selective dioid $(W, \oplus, \otimes, \0, \1)$,
and weight function $w$ as defined above, a \emph{ranked enumeration} algorithm returns the
output of $Q$ on $D$ according to the total order induced by $\oplus$.
\end{definition}

We refer to algorithms for ranked enumeration
over the results of a CQ as \emph{any-k} algorithms.
This conforms to our previous work~\cite{yang2018any} and reflects the fact
that the number of returned results need not be set apriori.
Thus, any-k algorithms can be seen as a fusion of
top-k and
anytime algorithms \cite{Zaimag96}
that gradually improve their result over time.

\introparagraph{Generality}
Our approach supports any selective dioid, 
including less obvious cases such as
\emph{lexicographic ordering}
where two output tuples are first compared on their
$R_1$ component, and if equal then on their $R_2$ component, and so on.
For this to be well-defined, there must be a total order on the tuples
within each relation. Without loss of generality, assume this total order is represented
by the natural numbers, such that input tuple $r$ has weight $w'(r) \in \N$.
For the selective dioid, we set $W = \N^\stages$, i.e., each
tuple weight is an $\stages$-dimensional vector of integers. Input tuple $r_j \in R_j$ has
weight $w(r_j) = (0,\ldots, 0, w'(r_j), 0,\ldots, 0)$ 
with zeros except for position $j$ that stores the ``local'' weight value in $R_j$.
Operator $\otimes$ is standard element-wise vector addition, therefore the weight
of a result tuple with witness $(r_1,\ldots, r_\stages)$ is $(w'(r_1),\ldots, w'(r_\stages))$.
To order two such vectors, 
the selective dioid addition
$\oplus$ 
\new{
returns the operand that comes first according to the lexicographic order
}
e.g.,
for $\stages=2$, $(a,b) \oplus (c,d) = (a,b)$ 
if $w'(a) < w'(c)$, 
or $w'(a) = w'(c)$ and $w'(b) < w'(d)$,
and $(c,d)$ otherwise.
The $\0$ and $\1$ elements of the dioid are $\stages$-dimensional
vectors $(\infty,\ldots, \infty)$ and $(0,\ldots, 0)$, respectively.

We will present our approach for the tropical semiring
$(\mathbb{R}^{\infty}, \min, +, \infty, 0)$. 
Generalization to other selective dioids
follows immediately from the fact that the only algebraic properties
that are used in our derivations and proofs are 
those that imply the algebraic structure of a selective dioid \cref{def:dioid}.

Notice that addition over real numbers has an \emph{inverse}, hence
$(\mathbb{R}^{\infty}, +, 0)$ is a group, not just a monoid.
This simplifies the algorithms to a certain extend. 
Our main result (\Cref{TH:MAIN}) still holds even without the inverse with some minor subtleties
that we explain in \Cref{sec:inverse}. 

\new{
\subsection{Complexity Measures}
}
\label{sec:complexity_measures}

For complexity results we use the standard \emph{RAM-model of computation}
that charges $\O(1)$ per data-element access. 
Reading or storing a vector of $i$ elements
therefore costs $\O(i)$.
In line with previous work~\cite{berkholz19submodular,GottlobGLS:2016,ngo2018worst},
we also assume the existence of a data structure that can be built in linear
time to support tuple lookups in constant time. 
In practice, this is virtually guaranteed by hashing,
though formally speaking, only in an expected, amortized sense.

\new{
We measure success with respect to three measures:
($i$) the pre-processing time or \emph{time-to-first} denoted by \TTF, 
($ii$) the delay between the $k-1$'th and $k$'th results for any value of $k$ denoted by $\Del(k)$
and ($iii$) the space requirement until the $k$'th result denoted by $\MEM(k)$.
We will also look at $\TT(k)$ which is the overall time to get the $k$'th result and 
the special case of the \emph{time-to-last} ($\TTL = \TT(|\mathrm{out}|)$), where $\mathrm{out}$
denotes the output of the query.
Notice that $\TTF = \TT(1)$.
}

In line with most previous work on worst-case optimal join algorithms
and decompositions of cyclic queries, we measure asymptotic cost in terms of
{data complexity}~\cite{DBLP:conf/stoc/Vardi82}, i.e., treat query size $|Q|$ as a constant.
The exception is the in-depth analysis of ranked enumeration algorithms
for path CQs (\Cref{sec:complexity}), where including query complexity
reveals interesting differences.

\subsection{Determining Optimality}
\label{sec:optimalityDef}

Consider full CQ $Q$ over input relations with at most $n$ tuples. It takes
$\Omega(n)$ just to look at each input tuple and $\Omega(k)$ to output $k$ result tuples.
Since we also require the output to be sorted and sorting $k$ items has
complexity $\Omega(k \log k)$, we consider a ranked enumeration algorithm to be
\emph{optimal} if it satisfies 
\new{
$\TTF = \O(n)$
and $\Del(k) = \O(\log k)$.
}
\footnote{To be precise, sorting may add less than $k \log k$ overhead if one can replace generic
comparison-based sorting with an algorithm that exploits structural relationships
between weights of input and output tuples. 
However, this is not possible for all inputs and $k$ values.}
For \emph{acyclic} CQs, this 
optimality target is realistic, because
the well-known Yannakakis algorithm~\cite{DBLP:conf/vldb/Yannakakis81}
computes the full (unsorted) output in time $\O(n + |\mathrm{out}|)$.

For \emph{cyclic} CQs, Ngo et al.~\cite{ngo2018worst} argue that the join result
cannot be computed in
$\O(n + |\mathrm{out}|)$ 
and propose the notion of \emph{worst-case optimal} (WCO) join algorithms,
whose computation time is $\O(n + |\textrm{out}_\textrm{WC}|)$.
Here, $|\textrm{out}_\textrm{WC}|$ is the maximum output size of query $Q$ over
\emph{any possible} database instance, which is determined by the AGM bound~\cite{AGM}. 
WCO join algorithms are thus \emph{not sensitive}
to the actual output size of the query on a given database instance.
Abo Khamis et al.~\cite{khamis17panda} argue for a stronger,
output-sensitive notion of optimality based on the \emph{width} $\omega$ of a decomposition
of a cyclic CQ $Q$ into a set $\boldsymbol{\mathcal{Q}}$ of acyclic CQs covering
$Q$.\footnote{The union of their output equals the output of $Q$.}
The input relations of the acyclic CQs in $\boldsymbol{\mathcal{Q}}$ are derived
from the original input and have cardinality $\O(n^\omega)$ for
$\omega \ge 1$ ideally as small as possible.
Let $\mathcal{A}$ be such a decomposition-based
algorithm and let $T(\mathcal{A})$ denote its time complexity for creating decomposition
$\boldsymbol{\mathcal{Q}}$. By applying the Yannakakis algorithm to the acyclic queries in
$\boldsymbol{\mathcal{Q}}$, cyclic query $Q$ can
be evaluated in time $\O(T(\mathcal{A}) + |\mathrm{out}|)$
and its Boolean version $Q^B$
in $\O(T(\mathcal{A}))$. The current frontier are decompositions based on the
\emph{submodular} width $\omega = \subw(Q)$~\cite{Marx:2013:THP:2555516.2535926},
which is considered a yardstick of optimality for full and Boolean CQs~\cite{khamis17panda}.

We adopt this notion of optimality and, arguing similar to the acyclic case, we
say that ranked enumeration over a full CQ is optimal if
\new{
$\TTF = \O(T(\mathcal{A}))$ and
$\Del(k) = \O(\log k)$.
}
Intuitively,
this ensures that ranked enumeration adds ``almost no overhead''
compared to unranked enumeration, because outputting $k$ results would
take at least $\Omega(k)$.

\section{Path query and its connection to Dynamic Programming (DP)}
\label{sec:DPtoAnyK}

We formulate optimal ranked enumeration for path queries as a
Dynamic Programming (DP) problem, then generalize to trees and
cyclic queries. Following common terminology,
we use DP to denote what would more precisely be called
\emph{deterministic serial DP} with a finite fixed number of decisions
\cite{bertsekas05dp,Cormen:2009dp,dpv08book}.
These problems have a {\em unique minimum} of the cost function and
DP constructs a {\em single} solution that realizes it.
Formally, a DP problem has a set of \emph{states} $\Sset$, which contain
local information for decision-making~\cite{bertsekas05dp}.
We focus on what we will refer to as \emph{multi-stage DP}. Here each state belongs
to exactly one of $\stages > 0$ stages, where $\Sset_\sgiter$ denotes the set of
states in stage $\sgiter$, $\sgiter \in \N_0^\stages$. 
The \emph{start} stage has a single state $\Sset_0 = \{s_0\}$ and there is
a \emph{terminal} state $s_{\stages+1}$
which we also denote by $t$ for convenience.
At each state $s$ of stage $\sgiter$, we have to make a {\em decision}
that leads to a state $s' \in \Sset_{\sgiter+1}$. 
We use $\Dec \subseteq \bigcup_{\sgiter=0}^\stages (\Sset_\sgiter \times \Sset_{\sgiter+1})$ 
for the set of possible decisions.

DP is equivalent to a shortest-path problem on a corresponding weighted graph, in our case a $(\stages+2)$-partite
directed acyclic graph (DAG)~\cite{bertsekas05dp, dpv08book}, where states correspond to nodes and decisions define
the corresponding edges. Each decision $(s, s')$ is associated with a \emph{cost} $\weight(s, s')$, 
which defines the weight of the corresponding edge in the shortest-path problem.\footnote{We use \emph{cost} and
\emph{weight} interchangeably. Cost is more common in optimization problems, weight in shortest-path problems.
We sometimes use ``lightest path'' in order to emphasize that all paths have the same number of nodes, but differ
in their weights.}
By convention, an edge exists iff its weight is less than $\infty$.

We now generalize the path definition from \Cref{ex:path_cycle} and show that
ranked enumeration over this query can be modeled as an instance of DP. Consider
\begin{equation*}
Q'_{P\stages}(\vec x, \vec y) \datarule
R_1(\vec y_1, \vec x_2), R_2(\vec x_2, \vec y_2, \vec x_3), \ldots, R_\stages(\vec x_\stages, \vec y_\stages, \vec x_{\stages+1}),
\end{equation*}
allowing multiple attributes in the equi-join conditions and additional attribute sets
$\vec y_i$ that do not participate in joins.
This query can be mapped to a DP instance as follows:
(1) atom $R_i$ corresponds to stage $\Sset_i$ and each
tuple in $R_i$ maps to a unique state in $\Sset_i$,
(2) there is an edge
between $s \in \Sset_i$ and $s' \in \Sset_{i+1}$ iff the corresponding
input tuples join and the edge's weight is the weight of the tuple corresponding to $s'$,
(3) there is an edge from $s_0$ to each state in $\Sset_1$ whose weight is
the weight of the corresponding $R_1$-tuple, and
(4) each state in $\Sset_\stages$ has an edge to $t$ of weight 0.
Clearly, there is a 1:1 correspondence between
paths from $s_0$ to $t$ and output tuples of $Q'_{P\stages}$, and path
``length'' (weight) equals output-tuple weight. Hence the $k$-th heaviest
output tuple corresponds to the $k$-shortest path in the DP instance.

\begin{figure}
\centering
\scalebox{0.85}{	
	\begin{tikzpicture}[
	        mycircle/.style={
	         circle,
	         draw=black,
	         fill=gray,
	         fill opacity = 0.3,
	         text opacity=1,
	         inner sep=0pt,
	         minimum size=18pt,
			 text width=22pt,
			 align=center,
	         font=\small},
	        myarrow/.style={-Stealth},
	        node distance=0.6cm and 1.1cm,
	        ]
	        \node[mycircle,fill=white] (c1) {$s_0$};
	        \node[mycircle,right=of c1] (c3) {``2''};
	        \node[mycircle,below=of c3] (c2) {``3''};
	        \node[mycircle,above=of c3] (c4) {``1''};
	        \node[mycircle,right=of c3] (c6) {``20''};
	        \node[mycircle,below=of c6] (c5) {``30"};
	        \node[mycircle,above=of c6] (c7) {``10"};
	        \node[mycircle,right=of c6] (c9) {``200"};
	        \node[mycircle,below=of c9] (c8) {``300''};
	        \node[mycircle,above=of c9] (c10){``100"};
	        \node[mycircle,right=of c9,fill=white] (c11){$t=s_{4}$};
      
	        \node[above=of c4] (t1) {$\Sset_1$};
	        \node[above=of c7] (t2) {$\Sset_2$};
	        \node[above=of c10] (t3) {$\Sset_3$};
	        \node (t4) at (c1 |- t1) {$\Sset_0$};		
	        \node (t5) at (c11 |- t1) {$\Sset_4$};

	    \foreach \i/\j/\txt/\p in {%
	      c1/c2/3/above,
	      c1/c3/2/above,
	      c1/c4/1/above,
	      c3/c5//above,
	      c3/c6//above,
	      c3/c7//above,
	      c4/c7/10/above,
	      c6/c8//above,
	      c6/c9//above,
	      c6/c10//above,
	      c7/c10/100/above,
	      c8/c11/0/above,
	      c9/c11/0/above,
	      c10/c11/0/above}
	       \draw [myarrow] (\i) -- node[sloped,font=\small,\p] {\txt} (\j);
    
	    \draw [myarrow] (c4) -- node[sloped,font=\small,above,pos=0.3] {20} (c6);
	    \draw [myarrow] (c4) -- node[sloped,font=\small,below,pos=0.15] {30} (c5);
	    \draw [myarrow] (c7) -- node[sloped,font=\small,above,pos=0.3] {200} (c9);
	    \draw [myarrow] (c7) -- node[sloped,font=\small,below,pos=0.15] {300} (c8);
    
	    \draw [myarrow] (c2) -- node[sloped,font=\small,below,pos=0.3] {20} (c6);
	    \draw [myarrow] (c2) -- node[sloped,font=\small,below] {30} (c5);
	    \draw [myarrow] (c2) -- node[sloped,font=\small,above,pos=0.15] {10} (c7);    
	    \draw [myarrow] (c5) -- node[sloped,font=\small,below,pos=0.3] {200} (c9);
	    \draw [myarrow] (c5) -- node[sloped,font=\small,below] {300} (c8);
	    \draw [myarrow] (c5) -- node[sloped,font=\small,above,pos=0.15] {100} (c10); 
	\end{tikzpicture}
}
\caption{DP instance for \Cref{ex:cartesian1}.
}
\label{fig:cartesian}
\end{figure}	

\begin{example}[Cartesian product]\label{ex:cartesian1}
We use the problem of finding the minimum-weight output of Cartesian product
$R_1 \times R_2 \times R_3$ as the running example. 
Let $R_1 = \{``1", ``2", ``3"\}$, 
$R_2 = \{``10", ``20", ``30"\}$ and 
$R_3 = \{``100", ``200", ``300"\}$ 
and set tuple weight equal to tuple label, e.g., tuple $``20"$ in $R_2$ has weight $\weight(``20")=20$. 
\cref{fig:cartesian} depicts how this problem translates into our framework.
\end{example}

A \emph{solution} to the DP problem is a sequence of $\stages$ states 
$\sol = \langle s_1 \ldots s_\stages \rangle$ 
that is {\em admissible}, i.e.\ $(s_{\sgiter}, s_{\sgiter+1}) \in \Dec$,
$\forall i \in \N_0^{\stages}$. 
The \emph{objective function} is the total cost of a solution,
\begin{align}
	\weight(\sol) = \aggrsum_{\sgiter=0}^\stages \weight(s_{\sgiter}, s_{\sgiter+1}), \label{eq:costDP}
\end{align}
and DP finds the minimal-cost solution $\sol_1$. The index denotes the rank,
i.e., $\sol_k$ is the $k$-th best solution.

\introparagraph{Principle of optimality}~\cite{bellman:1958:routing,bellman1954}
The core property of DP is that a solution can be efficiently derived from
solutions to subproblems.
In the shortest-path view of DP, the subproblem at \emph{any} state $s \in \Sset_\sgiter$
is the problem of finding the shortest path from $s$ to $t$. 
With $\sol_1(s)$ and $\solW_1(s)$ denoting the shortest path from $s$ and its weight respectively, 
DP is recursively defined for all states $s \in \Sset_\sgiter, \sgiter \in \N_0^{\stages+1}$ by
\begin{equation}
\begin{aligned}
    \!\!\!\!\!\!\!\solW_1(s) &= 0 \textrm{ for terminal } s \in S_{\stages+1}		 \\
    \!\!\!\!\!\!\!\solW_1(s) &= 
		\!\!\min_{(s, s') \in \Dec} 
		\{\weight(s, s') \aggr \solW_1(s') \},
		\textrm{ for }
		s \in \Sset_\sgiter, \sgiter \in \N_0^\stages .
		\!\!
    \label{eq:DP_recursion}
\end{aligned}
\end{equation}
The optimal DP solution is $\solW_1(s_0)$, i.e., the weight of the lightest path
from $s_0$ to $t$. For convenience we define the set of optimal paths reachable
from $s$ according to \cref{eq:DP_recursion} as
$\Choices_1(s) = \{s \concat \sol_1(s') \ |\ (s, s') \in \Dec \}$.
Here $\concat$ denotes concatenation, i.e., 
$s_i \concat \langle s_{i+1} \ldots s_\stages \rangle = \langle s_i\ s_{i+1} \ldots s_\stages \rangle$.

\begin{example}[continued]
Consider state ``2'' in \cref{fig:LawlerDP}.
It has three outgoing edges and
$\solW_1(``2")$ is computed as the minimum over these three choices. The winner is path
$``2" \concat \sol_1(``10")$ of weight 112. 
Similarly, $\sol_1(``10")$ is found as $``10" \concat \sol_1(``100")$, and so on.
\end{example}

\begin{figure}[tb]
\centering
\includegraphics[width=0.7\linewidth]{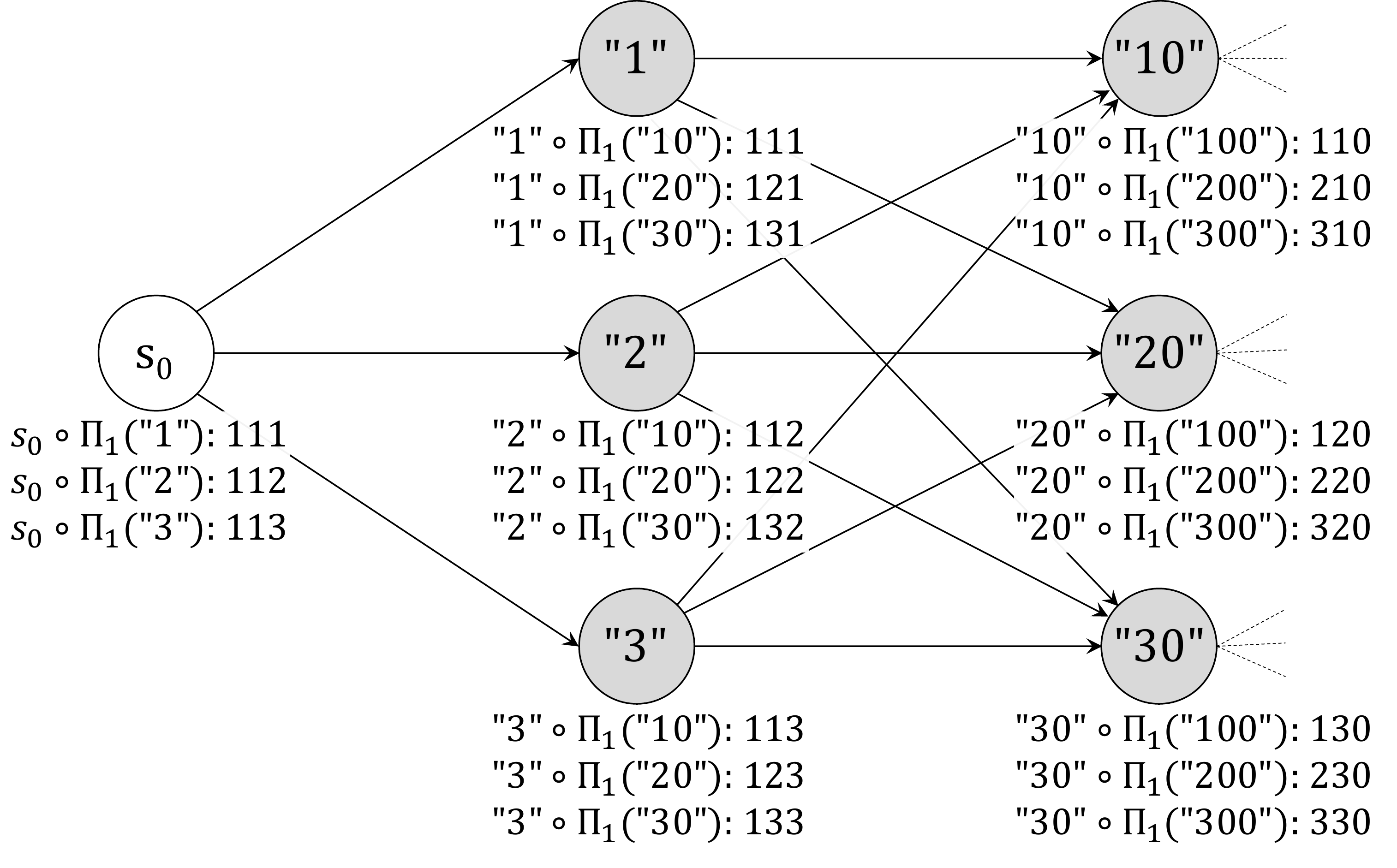}
\caption{Excerpt from \cref{fig:cartesian}, showing $\Choices_1(s)$ for some states $s$.
Term $s \concat \sol_1(s'): w$ denotes a choice, which is a path from $s$, and its weight $w = w(s, s') \aggr \solW_1(s')$.}
\label{fig:LawlerDP}
\end{figure}

\Cref{eq:DP_recursion} can be computed for all states in time $\O(|\Sset| + |\Dec|)$ bottom-up,
i.e., in decreasing stage order from $\stages+1$ to 0.
Consider stage $\Sset_\sgiter$:
To compute $\Choices_1(s)$ for state $s \in \Sset_\sgiter$,
the algorithm retrieves all edges $(s, s') \in \Dec$
from $s$ to any state $s' \in \Sset_{\sgiter+1}$, looks up $\solW_1(s')$,
and keeps track of the minimal total weight $\weight(s, s') \aggr \solW_1(s')$ on-the-fly.
(If no such edge is found, then the weight is set to $\infty$.)
When computing $\solW_1(s)$, the algorithm also adds pointers to keep track of optimal solutions.
E.g., in \cref{fig:LawlerDP} entry $``2" \concat \sol_1(``30")$ at state $``2"$
would point to the minimum-weight choice $``30" \concat \sol_1(``100")$ at state $``30"$. 
This way the 
corresponding paths can be reconstructed by tracing the pointers back ``top-down'' from 
$\solW_1(s_0)$~\cite{bertsekas05dp}. 
Notice that \emph{DP needs only the pointer from the top choice
at each state}, but adding the others is ``free'' complexity-wise,
which we later use for ranked enumeration.

Whenever the bottom-up phase determines $\solW_1(s)=\infty$ during the evaluation of \cref{eq:DP_recursion}, then that state $s$ and all its adjacent edges can be
removed without affecting the space of solutions. 
We use $\SsetR_i \subseteq \Sset_i$ and $\DecR  \subseteq \Dec$ to denote the
\emph{remaining sets of states and decisions}, respectively. 
This DP algorithm corresponds to variable elimination~\cite{DBLP:journals/ai/Dechter99}
on the \emph{tropical semiring}~\cite{Golan:1999di,pin_taylor_atiyah_1998}
and
is reminiscent of the \emph{semi-join reductions by Yannakakis}~\cite{DBLP:conf/vldb/Yannakakis81},
which corresponds to DP with variable elimination on the Boolean semiring~\cite{abo16faq}.

\introparagraph{Encoding equi-joins efficiently}
For an equi-join, the shortest-path problem has $\bigO(\stages n)$ states and
$\bigO(\stages n^2)$ edges,
therefore the DP algorithm has quadratic time complexity in the number of tuples.
We reduce this to $\O(\stages n)$ by an 
\emph{equi-join specific graph transformation} illustrated in 
\cref{fig:equiJoinGraph}. 
Consider the join between $R_1$ and $R_2$, representing stages $\Sset_1$ and $\Sset_2$, respectively.
For each join-attribute value, the corresponding 
states in $R_1$ and $R_2$ 
form a fully connected bipartite graph. For each state,
all \emph{incoming} edges have the 
same weight, as edge weight is determined by tuple weight. 
Hence we can represent the subgraph equivalently
with a single node ``in-between'' the matching states in $\Sset_1$ and $\Sset_2$,
assigning zero weight to the edges adjacent to 
states in $\Sset_1$ and the corresponding tuple weight to those adjacent to a state in $\Sset_2$.
The transformed representation has only $\O(\stages n)$ edges.
At its core, our encoding relies on the conditional independence of the non-joining
attributes given the join attribute value, a property also exploited in
factorized databases \cite{olteanu16record}.
Here we provide a different perspective on it as a graph transformation
that preserves all paths.

\section{Any-k Algorithms for DP}
\label{sec:DPalgorithms}

We defined a class of DP problems that can be described in terms of a multi-stage DAG,
where every solution is equivalent to a path from $s_0$ to $t$ in
graph $(\SsetR = \bigcup_{\sgiter=0}^{\stages+1} \SsetR_\sgiter, \DecR)$. 
Hence we use terminology from DP (solution, state, decision)
and graphs (path, node, edge) interchangeably.

\begin{figure}[tb]
\centering
\includegraphics[width=.7\linewidth]{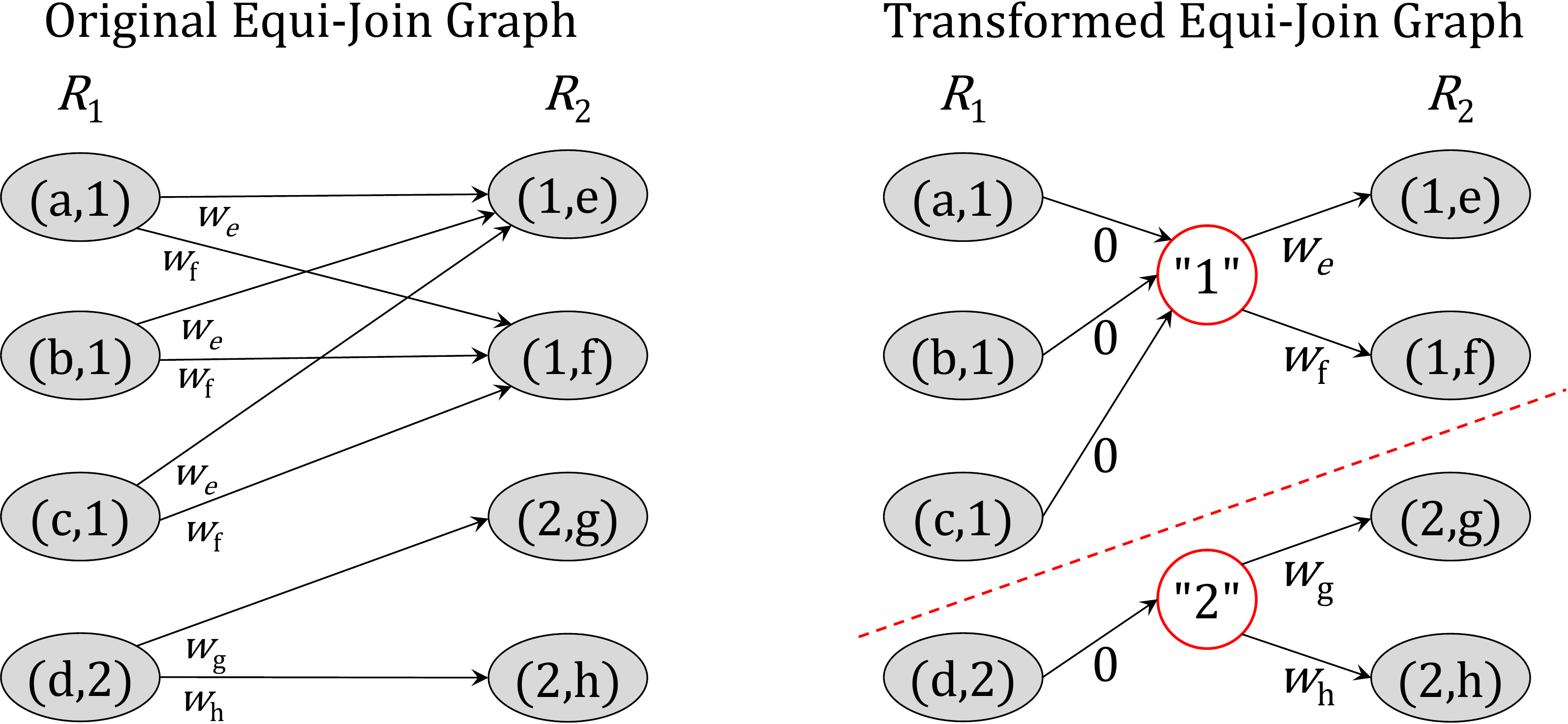}
\caption{\emph{Equi-join} from $\O(n^2)$ representation to $\O(n)$.}
\label{fig:equiJoinGraph}
\end{figure}

In addition to the minimum-cost path, ranked enumeration must retrieve \emph{all paths}
in cost order. Let $\sol_k(s)$ be the
$k^\textrm{th}$-shortest path from $s$ to $t$ and $\solW_k(s)$
its cost. The asymptotically best $k$-shortest-paths algorithm was proposed by
Eppstein~\cite{eppstein1998finding}, yet it is not the best choice for our use case.
In the words of its author, it is ``rather complicated'', thus it is unclear
how to extend it from path to \emph{tree} queries.
Since our DP problems are only concerned with \emph{multi-stage DAGs}
(Eppstein targets more general graphs), we propose a simpler and
easier-to-extend algorithm, \HEAP, that guarantees the same complexity as Eppstein.\footnote{Implementations
of ``Eppstein's algorithm'' exist, but they seem to implement a simpler variant with weaker
asymptotic guarantees that was also introduced in \cite{eppstein1998finding}.}

Below we explore algorithms that fall into two categories.
The first appeared in various optimization contexts as methods that
\emph{partition} the solution space and trace their roots to
Lawler~\cite{lawler72} and Murty~\cite{murty1968},
including recent work on subgraph isomorphism~\cite{chang15enumeration}. 
We call this family \RPDP; it includes \HEAP.
The second finds the {$k$-shortest paths} in a graph
via \emph{recursive equations}~\cite{dreyfus69shortest,jimenez99shortest}.
We refer to the application of this idea to our framework as \REDP.

\subsection{Repeated Partitioning DP (ANYK-PART)}
\label{sec:RP}

\subsubsection{The Lawler Procedure and DP}

Lawler~\cite{lawler72} proposed a procedure for ranked enumeration by
repeatedly {\em partitioning the solution space}, which can be applied to any
optimization problem over a fixed set of variables, not only DP.
In our problem, there is one variable per stage and it can take any state in that
stage as a value. Lawler only assumes the existence of a method $\best$
that returns the optimal variable assignment over any space
$\Sset'_1 \times\cdots\times \Sset'_\stages$,
where $\forall \sgiter: \Sset'_\sgiter \subseteq \Sset_\sgiter$.

The top-ranked solution $\langle s_1^* \ldots s_\stages^* \rangle$ is obtained by
executing $\best$ on the unconstrained space $\Sset_1 \times\cdots\times \Sset_\stages$.
To find the  second-best solution, Lawler creates $\stages$ \emph{disjoint subspaces}
such that subspace $\sgiter$ has 
the first $\sgiter-1$ variables fixed to the top-ranked solution's prefix
$\langle s_1^* \ldots s_{\sgiter-1}^* \rangle$ and the $\sgiter$-th variable restricted to
$\Sset_\sgiter - \{ s_\sgiter^* \}$.
Then it applies $\best$ to each of these subspaces to find the top solution 
in each. The second-best overall solution is the best of these $\stages$
subspace solutions. The procedure continues analogously by generating the
corresponding subspaces for the second-best solution, adding
them to a priority queue of candidates.

Chang et al.~\cite{chang15enumeration} showed that the $k^\textrm{th}$-ranked solution
$\langle s_1 \ldots s_\stages \rangle$ is the output of $\best$ on some subspace
\begin{equation}
P=\{s_1\} \times \! \cdots \! \times \{s_{\sglim-1}\} \times (\Sset_{\sglim} - U_{\sglim}) \times \Sset_{\sglim+1} \times \! \cdots \! \times \Sset_\stages,
\label{eq:subspace}
\end{equation}
with $U_{\sglim}$ being a set of states excluded from $\Sset_\sglim$.
The new candidates to be added to the candidate set for the $(k+1)^{\textrm{st}}$ 
result are the results obtained by executing $\best$ on the following
$\stages-\sglim+1$ subspaces:
\begin{tabbing}
$P_{\sglim} = \{s_1\} \times \! \cdots \! \times \{s_{\sglim\!-\!1}\} \times (\Sset_{\sglim} \!-\! U_{\sglim} \!-\! \{ s_{\sglim} \}) \times S_{\sglim\!+\!1} \times \! \cdots \! \times \Sset_\stages$ \\

$P_{\sglim+1} \!= \! \{s_1\} \times \! \cdots \! \times \{s_{\sglim\!-\!1}\} \times \{ s_{\sglim} \} \times (\Sset_{\sglim\!+\!1} \!-\! \{ s_{\sglim\!+\!1} \}) \times \! \cdots \! \times \Sset_\stages$ \\

$\qquad\qquad\vdots$ \\

$P_{\stages} = \{s_1\} \times \! \cdots \! \times \{s_{\sglim-1}\} \times \! \cdots \! \times  \{ s_{\stages-1} \} \times (\Sset_\stages \!-\! \{ s_{\stages} \} )$.     
\end{tabbing}    

\introparagraph{Efficient computation} Instead of calling $\best$ from scratch on
each subspace, we propose to exploit the structure of DP. Consider any subspace
$P$ as defined in \cref{eq:subspace}. Since prefix $\langle s_1 \ldots s_{\sglim-1} \rangle$
is fixed, we need to find the best suffix starting 
from state $s_{\sglim-1}$. In the next stage $\Sset_\sglim$, only states that are
\emph{not} in exclusion set $U_{\sglim}$ can be selected, i.e., the set of
choices at $s_{\sglim-1}$ is restricted by $U_{\sglim}$. Formally,
\begin{align}
\best(P) &= \langle s_1 \ldots s_{\sglim-1} s \rangle \concat \sol_1(s)
\textrm{, where}\\
s &= \arg\min_{s' \in \Sset_{\sglim} - U_{\sglim}} \{w(s_{\sglim-1}, s')+\pi_1(s') | \nonumber\\
&\qquad s_{\sglim-1} \concat \Pi_1(s') \in \Choices_1(s_{\sglim-1})\}, \label{eq:bestDP}
\end{align}
therefore \cref{eq:bestDP} can be solved \emph{using only information that was
already computed by the standard DP algorithm}. 
Note that all elements in a choice set other than the 
minimum-weight element are often referred to as \emph{deviations} from the optimal path.

\begin{example}[continued]
After returning $\sol_1(s_0) = \langle ``1" \ ``10" \ ``100" \rangle$, 
Lawler would solve three new optimization problems to find the second-best result. 
The first subspace is the set of paths that start at $s_0$, but cannot use state $``1"$. 
The second has prefix $\langle ``1" \rangle$ and cannot use state ``10''.
The third has prefix $\langle ``1"\ ``10" \rangle$ and cannot use state $``100"$. 
The best solution to the first subproblem is $\langle ``2" \ ``10" \ ``100" \rangle$, 
corresponding to deviation $s_0 \concat \solW_1(``2")$ of weight 112. 
For the second subproblem, the best result is found similarly as the second-best option 
$``1" \concat \solW_1(``20") = \langle ``1" \ ``20" \ ``100" \rangle$. 
For the third subproblem, the best subspace solution $\langle ``1" \ ``10" \ ``200" \rangle$ is
obtained analogously at state $``10"$.
\end{example}

\subsubsection{The ANYK-PART family of algorithms}

We propose a generic template for \RPDP algorithms and show how all existing approaches
and our novel \HEAP algorithm are obtained as specializations based on how the
Lawler-created subspace candidates are managed.
All \RPDP algorithms first execute standard DP, which produces for each state
$s$ the shortest path $\sol_1(s)$, its weight $\solW_1(s)$, and set of choices
$\Choices_1(s)$. The main feature of 
\RPDP is a set $\Cand$ of \emph{candidates}: it manages the best solution(s) found
in each of the subspaces explored so far.
To produce the next result, the \RPDP algorithm (\Cref{alg:dp-anyk})
(1) removes the lightest candidate from the candidate set $\Cand$, 
(2) expands it into a complete solution, and
(3) adds all new candidates found in the corresponding subspaces to $\Cand$.
We implement $\Cand$ using a priority queue with combined logarithmic time for
removing the top element and inserting a batch of new candidates.

\begin{algorithm}[t]
\setstretch{0.85} %
\small
\SetAlgoLined
\LinesNumbered
\textbf{Input}: DP problem with stages $\Sset_1, \ldots, \Sset_\stages$\\
\textbf{Output}: solutions in increasing order of weight\\

Execute standard DP algorithm to produce for each state $s$: $\sol_1(s)$, $\solW_1(s)$, and $\Choices_1(s)$\;

\algocomment{Initialize candidate set with top-1 result $\langle s_1^* \ldots s_{\stages}^* \rangle$}\;
\algocomment{A candidate consists of 4 fields: prefix $\langle s_1 \ldots s_{\sglim-1} \rangle$,
lastState $s_\sglim$, prefixWeight $\weight(\langle s_1 \ldots s_{\sglim-1} \rangle)$,
and choiceWeight $\weight(s_{\sglim-1}, s_\sglim) + \solW_1(s_\sglim)$.}\;

$\Cand.\mathrm{add}([\langle s_0^* \rangle, s_1^*, 0, \weight(s_0^*, s_1^*) + \solW_1(s_1^*)])$\label{line:initialCand}\;

\Repeat {query is interrupted or $\Cand$ is empty}{\label{line:repeat}

    \algocomment{Pop the candidate with the lowest sum of prefixWeight and choiceWeight. Let that be $[\langle s_1 \ldots s_{\sglim-1} \rangle, s_\sglim, \weight(\langle s_1 \ldots s_{\sglim-1} \rangle), \weight(s_{\sglim-1}, s_\sglim) + \solW_1(s_\sglim)]$}\;
    solution = $\Cand.\mathrm{popMin}()$\;

    \algocomment{Complete the partial solution with the optimal suffix and generate new candidates in all subspaces.}\;
    \For {stages from $\sglim$ to $\stages$\label{line:for1}}{
        \algocomment{Expand the prefix to the next stage. The tail of a prefix is its last element. $\Suc(x,y)$ returns an appropriate subset of $\Choices_1(x)$.}\;
        tail = solution.prefix.tail\;
        last = solution.lastState\;
        \For {$s \in \Suc(\mathrm{tail}, \mathrm{last})$ \label{line:for2}}{
            newCandidate = (solution.prefix, $s$, solution.prefixWeight, $\weight(\mathrm{tail}, s) + \solW_1(s)$)\;\label{line:newCand}
            $\Cand.\mathrm{add}(\mathrm{newCandidate})$\;
        }
        \algocomment{Update solution by appending the last state to the prefix.}\;
        solution.prefix.append(last)\label{line:solution}\;
        solution.prefixWeight.add($w$(tail, last))\;
        $s' = \arg \min_{s''} \{w(\mathrm{last}, s'') \aggr \solW_1(s'') \,|\, \mathrm{last} \concat \sol_1(s'') \in \Choices_1(\mathrm{last})\}$\;
        solution.lastState = $s'$\;
        solution.choiceWeight = $w(\mathrm{last}, s') \aggr \solW_1(s')$\;
    }
    output solution\label{line:output}\;
}
\caption{\RPDP}
\label{alg:dp-anyk}
\end{algorithm}

\begin{example}[continued]
The standard DP algorithm identifies $\langle ``1"\ ``10"\ ``100"\rangle$ as the
shortest path and generates the choice sets as shown in \cref{fig:LawlerDP}. 
Hence $\Cand$ initially contains only candidate
$(\langle s_0 \rangle, ``1", 0, 1+110=111)$ (\Cref{line:initialCand}), 
which is popped in the first iteration of the
repeat-loop (\Cref{line:repeat}), leaving $\Cand$ empty for now. 
The for-loop (\Cref{line:for1}) is executed for
stages 1 to $\stages=3$. For stage 1, we have $\mathrm{tail} = s_0$ and $\mathrm{last} = ``1"$. 
For the successor function (\Cref{line:for2}),
there are different choices as we discuss in more detail in \Cref{sec:anyKpartInstantiations}.
For now, assume $\Suc(x,y)$ returns the \emph{next-best choice at state $x$ after the
previous choice $y$}. Hence the successor of $``1"$ at state $s_0$ is $``2"$. 
As a result, newCandidate is set to $(\langle s_0 \rangle, ``2", 0, 2+110)$---it is
the winner for the first subspace---and added to $\Cand$. Then the solution is
expanded (\Cref{line:solution}) to $(\langle s_0\ ``1" \rangle, ``10", 1, 10+100)$,
because $``10"$ is the best choice from $``1"$.
The next iteration of the outer for-loop (\Cref{line:for1}) adds
candidate $(\langle s_0\ ``1" \rangle, ``20", 1, 20+100)$ to $\Cand$ and updates
the solution to $(\langle s_0\ ``1"\ ``10" \rangle, ``100", 11, 100)$. 
The third and final iteration adds candidate
$(\langle s_0\ ``1"\ ``10" \rangle, ``200", 11, 200)$ and updates
the solution to $(\langle s_0\ ``1"\ ``10"\ ``100" \rangle, t, 111, 0)$,
which is returned as the top-1 result.

At this time, $\Cand$ contains entries $(\langle s_0 \rangle, ``2", 0, 112)$,
$(\langle s_0\ ``1" \rangle, ``20", 1, 120)$,
and $(\langle s_0\ ``1"\ ``10" \rangle, ``200", 11, 200)$.
Note that each is the shortest path in the corresponding subspace as defined by the Lawler procedure. 
Among the three, $(\langle s_0 \rangle, ``2", 0, 112)$ 
is popped next, because it has the lowest sum of prefix-weight (0) and choice-weight (112). 
The first new candidate created for it is $(\langle s_0 \rangle, ``3", 0, 113)$, followed by $(\langle s_0\ ``2" \rangle, ``20", 2, 120)$, 
and $(\langle s_0\ ``2"\ ``10" \rangle, ``200", 12, 200)$. At the same time, the solution is expanded to $(\langle s_0\ ``2"\ ``10"\ ``100" \rangle, t, 112, 0)$. 
\end{example}

\subsubsection{Instantiations of ANYK-PART}
\label{sec:anyKpartInstantiations}

The main design decision in \Cref{alg:dp-anyk} is how to manage the choices at each state
and how to implement successor-finding (\Cref{line:for2}) over these choices.

\smallsection{Strict approaches}
A natural implementation of the successor function returns precisely the next-best choice.

Eager Sort (\EAGER):
Since a state might be reached repeatedly through different prefixes, it
may pay off to pre-sort all choice sets by weight and add pointers from each choice
to the next one in sort order. 
Then $\Suc(x, y)$ returns the next-best choice at $x$ in constant time
by following the next-pointer from $y$.

Lazy Sort (\LAZY):
For lower pre-processing cost, we can leverage the approach Chang et al.~\cite{chang15enumeration}
proposed in the context of graph-pattern search.
Instead of sorting a choice set, it constructs a binary heap in linear time.
Since all but one of the successor requests in a single repeat-loop execution
are looking for the second-best 
choice\footnote{During each execution of the repeat-loop, only the first iteration
of \Cref{line:for1} looks for a lower choice.}, 
the algorithm already pops the top two choices off the heap and moves them into a sorted list. 
For all other choices, the first access popping them from the heap will append them to the
sorted list that was initialized with the top-2 choices. 
As the algorithm progresses, the heap of choices gradually empties out, filling the sorted list
and thereby converging to \EAGER.

\smallsection{Relaxed approaches}
Instead of finding the \emph{single true successor}
of a choice,  what if the algorithm could return a set of \emph{potential successors}?
Correctness is guaranteed, as long as the true successor is contained in this set
or is already in $\Cand$. (Adding potential successors early to $\Cand$ does not
affect correctness, because they have higher weight and would not be popped
from $\Cand$ until it is ``their turn.'')
This relaxation may enable faster successor finding, but inserts
candidates earlier into $\Cand$.

All choices (\MIN):
This approach is based on a construction that Yang et al.~\cite{yang2018any} proposed
for any-k queries in the context of graph-pattern search.
Instead of trying to find the true successor of a choice, 
\emph{all} but the top choice are returned by $\Suc$. 
While this avoids any kind of pre-processing overhead, it inserts
$\O(n)$ potential successors into $\Cand$.

\HEAP:
We propose a new approach that has better asymptotic complexity than any of the above.
Intuitively, we want to keep pre-processing at a minimum (like \MIN),
but also return a few successors fast (like \EAGER). To this end, we organize each choice set
as a binary heap. In this tree structure, the root node is the minimum-weight choice and the
weight of a child is always 
greater than
its parent. Function $\Suc(x, y)$ (\Cref{line:for2}) returns the two children
of $y$ in the tree.
Unlike \LAZY, we never perform a pop operation and the heap stays intact for the entire
operation of the algorithm; it only serves as a partial order on the choice set, pointing
to two successors every time it is accessed.
Also note that the true successor does not necessarily
have to be a child of node $y$. Overall, returning two successors is asymptotically
the same as returning one and heap construction time is linear~\cite{Cormen:2009dp},
hence this approach asymptotically dominates the others.

\subsection{Recursive Enumeration DP (ANYK-REC)}
\label{sec:RE}

\begin{figure}
\centering
\begin{subfigure}{.5\textwidth}
    \centering
    \includegraphics[width=.8\linewidth]{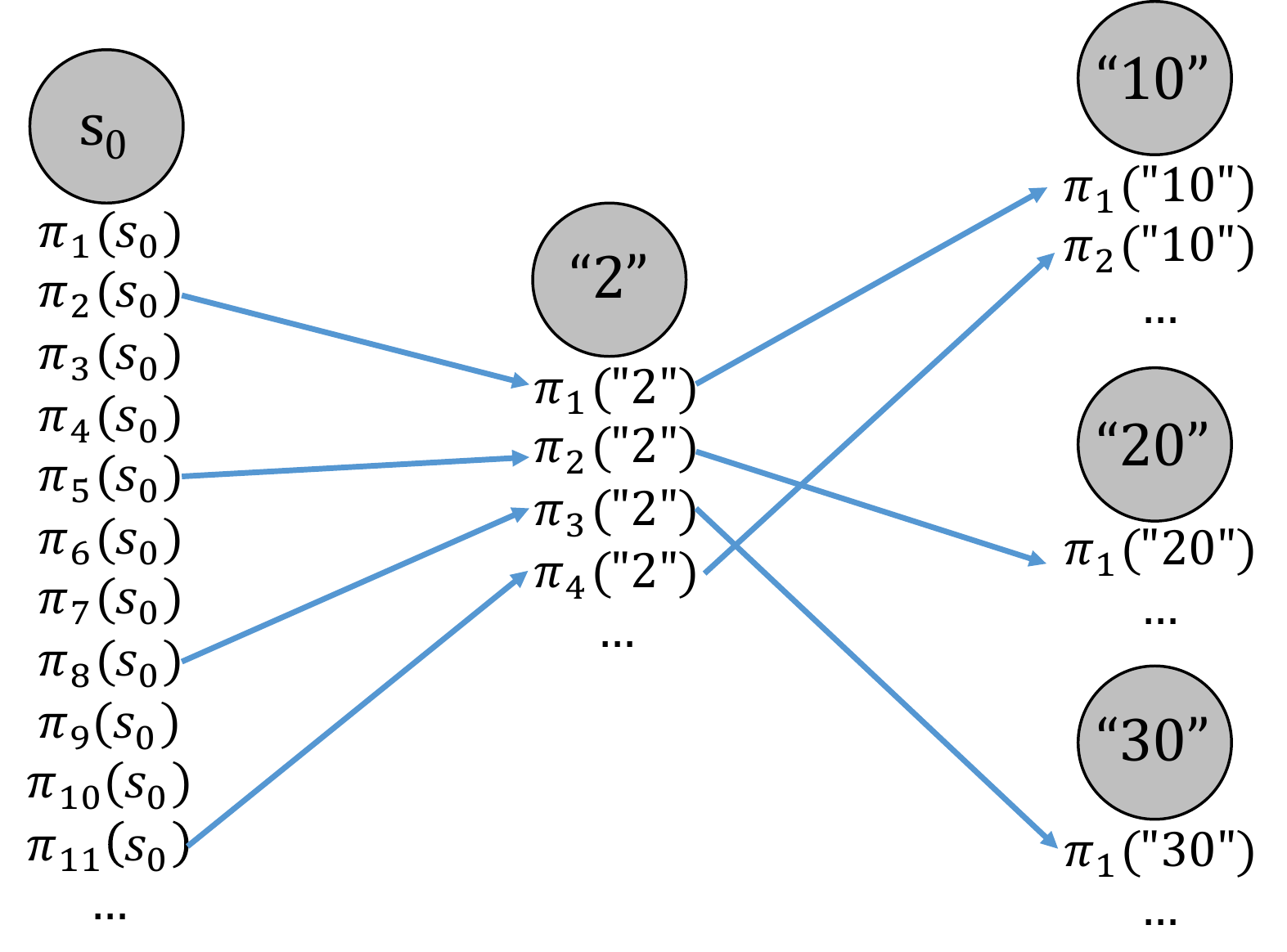}
    \caption{Pointers between solutions from and to $``2"$.}
    \label{fig:REApis}
\end{subfigure}%
\begin{subfigure}{.5\textwidth}
    \centering
    \includegraphics[width=.91\linewidth]{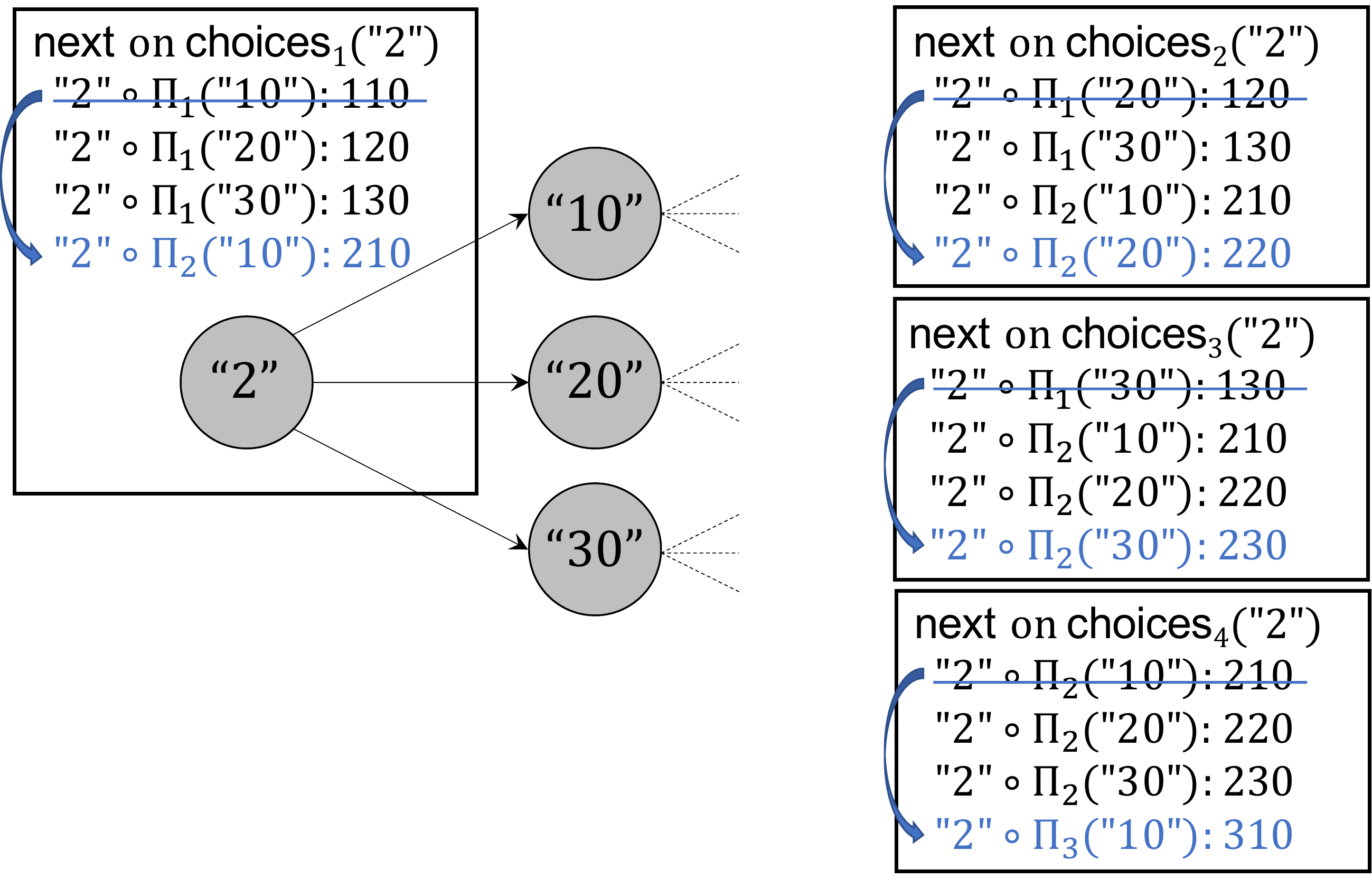}
    \caption{Recursive enumeration 
    at state $``2"$.}
    \label{fig:REApq}
\end{subfigure}
\caption{\Cref{ex:RE-DP}: Recursive enumeration}
\label{fig:REA}
\end{figure}

\begin{algorithm}[t]
\setstretch{0.85}   %
\small
\SetAlgoLined
\LinesNumbered
\SetKwFunction{RecFun}{next}
\SetKwProg{Fn}{Function}{:}{}
\textbf{Input}: DP problem with stages $\Sset_1, \ldots, \Sset_\stages$\\
\textbf{Output}: solutions in increasing order of weight\\

Execute standard DP algorithm to produce for each state $s$: $\sol_1(s)$, $\solW_1(s)$, and $\Choices_1(s)$\;

\algocomment{Initialization phase}\;
\For {stages $\sgiter$ from $\stages-1$ to $0$}{
    \For {states $s \in \Sset_\sgiter$}{
        $\Choices_1(s) = \{ s \concat \sol_1(s') \ | \ (s, s') \in \DecR \}$\;  \label{rea_line:init}
        $\sol_1(s) = \Choices_1(s)$.peek() \;
    }
}

\algocomment{Enumeration phase}\;
$k = 1$ \;
\Repeat {User stop process $\vee$ $\Choices_k(s_0)$ is empty}{
    $\sol_k(s_0) = \Choices_k(s_0)$.popMin() \;
    Output $\sol_k(s_0)$\;
    $\mathrm{\RecFun}(\sol_k(s_0))$\;
    $k = k + 1$\;
}
\;

\algocomment{Returns the next best solution starting from $s$}\;
\Fn{\RecFun{$\sol_{j_s}(s)$}}{
    \algocomment{Base case: Last stage}\;
    \If {$s \in \SsetR_\stages$}{
        \KwRet null\;
    }
    \algocomment{If $\sol_{j_s + 1}(s)$ has been computed by some previous call, it has been stored at state $s$}\;
    \If {$\sol_{j_s + 1}(s)$ has not been computed}{
        \algocomment{$\sol_{j_s}(s)$ is at the top of the priority queue, pop it so that we can construct $\Choices_{j_s + 1}(s)$ }\;
        $\Choices_{j_s}(s)$.popMin()\;
        \algocomment{Assume $\sol_{j_s}(s) = s \concat \sol_{j_{s'}}(s')$.}\;
        \algocomment{Compute $\sol_{j_{s'}+1}(s')$ recursively.}\;
        $\sol_{j_{s'} + 1}(s') = \mathrm{\RecFun}(\sol_{j_s}(s'))$\;
        \If {$\sol_{j_{s'} + 1}(s') \neq null$}{
            $\Choices_{j_s}(s)$.insert$(s \concat \sol_{j_{s'} + 1}(s'))$\; \label{rea_line:insert}
        }
        $\Choices_{j_s + 1}(s) = \Choices_{j_s}$\;   \algocomment{To get $\sol_{j_s + 1}(s)$, peek instead of popping. The pop will happen in the following call for $\mathrm{\RecFun}(\sol_{j_s + 1}(s))$.}\; 
        $\sol_{j_s + 1}(s) = \Choices_{j_s + 1}(s)$.peek()\;
    }
    \KwRet $\sol_{j_s + 1}(s)$\;
}

\caption{\RECURSIVE}
\label{alg:rea}
\end{algorithm}

\REDP relies on a generalized principle of
optimality~\cite{martins01kshortest}: if the k-th path from start node $s_0$ goes through
$s \in \SsetR_1$ and takes the $j_s$-lightest path $\sol_{j_s}(s)$ from there,
then the next lightest path from $s_0$ that goes through $s$ will take the
$(j_s+1)$-lightest path $\sol_{j_s+1}(s)$ from there.
We will refer to the prototypical algorithm in this space as
\RECURSIVE~\cite{jimenez99shortest}. Recall that lightest path
$\sol_1(s_0)$ from start node $s_0$ is found as the minimum-weight path in $\Choices_1(s_0)$. Assume it goes 
through $s \in \SsetR_1$. Through which node does the $2^\textrm{nd}$-lightest path $\sol_2(s_0)$ go? It has to be 
either the $2^\textrm{nd}$-lightest path through $s$, of weight $\weight(s_0, s) + \solW_2(s)$, or the lightest
path through any of the other nodes adjacent to $s_0$. In general, the $k$-th lightest path 
$\sol_k(s_0)$ is determined as the lightest path in some later version
$\Choices_k(s_0) = \{s_0 \concat \sol_{j_s}(s) \ |\ (s_0, s) \in \DecR\}$ of the set of choices,
for appropriate values of $j_s$.
Let $\sol_k(s_0) = s_0 \concat \sol_{j_{s'}}(s')$. Then the $(k+1)^\textrm{st}$
solution $\sol_{k+1}(s_0)$ is found 
as the minimum over the same set of choices, except that 
$s_0 \concat \sol_{j_{s'+1}}(s')$ replaces
$s_0 \concat \sol_{j_{s'}}(s')$.
To find $\sol_{j_{s'+1}}(s')$, the same procedure is applied recursively 
at $s'$ \emph{top-down}. Intuitively, an iterator-style \texttt{next} call at start node $s_0$
triggers a chain of  $\stages$ such \texttt{next} calls along the path that was found
in the previous iteration.

\begin{example}[continued]\label{ex:RE-DP}
Consider node ``2'' in \cref{fig:cartesian}. Since it has adjacent states ``10'', ``20'', and ``30'' in
the next stage, the lightest path $\sol_1(``2")$ is selected from
$\Choices_1(``2") = \{``2"\concat\sol_1(``10"), ``2"\concat\sol_1(``20"), ``2"\concat\sol_1(``30")\}$ as
shown in \cref{fig:LawlerDP}. The first \texttt{next} call on state ``2'' returns 
$``2"\concat\sol_1(``10")$, updating the set of choices for $\sol_2(``2")$ to 
$\{``2"\concat\sol_2(``10"), ``2"\concat\sol_1(``20"), ``2"\concat\sol_1(``30")\}$ as shown in the left
box in~\cref{fig:REApq}. The subsequent \texttt{next} call on state ``2'' then returns 
$``2"\concat\sol_1(``20")$ for $\sol_2(``2")$, causing $``2"\concat\sol_1(``20")$ in 
$\Choices_2(``2")$
to be replaced by $``2"\concat\sol_2(``20")$
for $\Choices_3(``2")$; and so on.
\end{example}

As the lower-ranked paths starting at various nodes in the graph are computed, each node keeps track of 
them for producing the results 
as shown in \cref{fig:REApis}. 
For example, the pointer from $\sol_1(``2")$ 
to $\sol_1(``10")$ at node ``10'' was created by the first \texttt{next} call on ``2'', which found 
$``2"\concat\sol_1(``10")$ as the lightest path in the choice set. %
\Cref{alg:rea} contains the detailed pseudocode.

\subsection{Any-k DP Algorithm Complexity}
\label{sec:complexity}

\definecolor{colorbest}{RGB}{77, 175, 74}

\begin{figure*}[t]
\centering
\footnotesize
\renewcommand{\tabcolsep}{1.3pt}
\begin{tabular}{|l|l|l|l|l|l|}
\hline
Algorithm 	& $\TTF$ 	& $\Del(k)$	& $\TTL$ for $|\mathrm{out}| = \Omega(\stages n)$ & $\TTL$ for $|\mathrm{out}| = \Theta(n^\stages)$ & $\MEM(k)$ \\ 
\hline

\RECURSIVE 	    &\cellcolor{colorbest!20}$\bigO(\stages n)$
                &$\bigO(\stages \log n)$ 
                &$\bigO(|\mathrm{out}| \stages \log n)$ 
				&\cellcolor{colorbest!20}$\bigO(n^\stages (\log n + \stages))$ 
				&\cellcolor{colorbest!20}$\bigO(\stages n + k \stages)$ \\
\hline

\HEAP           &\cellcolor{colorbest!20}$\bigO(\stages n)$
                &\cellcolor{colorbest!20}$\bigO(\log k + \stages)$ 
                &\cellcolor{colorbest!20}$\bigO(|\mathrm{out}| (\log |\mathrm{out}| + \stages))$ 
				& $\bigO(n^\stages \cdot \stages \log n)$ 
				&\cellcolor{colorbest!20}$\bigO(\stages n + k \stages)$ \\
				
\LAZY		    &\cellcolor{colorbest!20}$\bigO(\stages n)$
                & $\bigO(\log k + \stages + \log n)$ 
                &\cellcolor{colorbest!20}$\bigO(|\mathrm{out}| (\log |\mathrm{out}| + \stages))$ 
                & $\bigO(n^\stages \cdot \stages \log n)$ 
				&\cellcolor{colorbest!20}$\bigO(\stages n + k \stages)$ \\

\MIN     	    &\cellcolor{colorbest!20}$\bigO(\stages n)$
                & $\bigO(\log k + \stages n)$ 
                &\cellcolor{colorbest!20}$\bigO(|\mathrm{out}| (\log |\mathrm{out}| + \stages))$ 
                &\cellcolor{colorbest!20}$\bigO(n^\stages \cdot \stages \log n)$ 
				& $\bigO(\stages n + \min\{k n, |\mathrm{out}|\} \stages)$ \\

\EAGER          &$\bigO(\stages n \log n)$
                &$\bigO(\log k + \stages)$
                &$\bigO(|\mathrm{out}| (\log |\mathrm{out}| + \stages))$ 
                & $\bigO(n^\stages \cdot \stages \log n)$ 
				&$\bigO(\stages n + k \stages)$ \\

\hline
\NAIVE     	    & $\bigO(\stages n + |\mathrm{out}| (\log |\mathrm{out}| + \stages))$ 
                & $\bigO(\stages)$ 
                & $\bigO(|\mathrm{out}| (\log |\mathrm{out}| + \stages))$ 
                & $\bigO(n^\stages \cdot \stages \log n)$ 
				&$\bigO(\stages n + |\mathrm{out}| \stages)$ \\
\hline
\end{tabular} 
\caption{Complexity of ranked-enumeration algorithms for equi-joins.
Best performing any-$k$ algorithms with linear $\TTF$ $\O(\stages n)$ in each column are colored in green).
}
\label{tab:complexity_dp}
\end{figure*}

In contrast to the discussion in \Cref{sec:optimalityDef},
which focused on data complexity and treated query size as a constant,
we now include query size in the analysis
to uncover more subtle performance tradeoffs between the different any-k approaches.
Since each input relation has at most $n$ tuples, the DP problem has $\O(\stages n)$
nodes, each with at most $n$ outgoing edges. Based on our equi-join construction
(\cref{fig:equiJoinGraph}), it is easy to see that the total number of edges
is $|\Dec| = \O(\stages n)$. For simplicity we make the following assumptions:
(1) the maximum arity of a relation is bounded by a constant
\cite{DBLP:journals/jacm/Grohe07}
, thus $|Q| = \stages$
, and
(2) the operations $\oplus$ and $\otimes$ of the selective dioid over which the ranking
function is defined take $\gamma = \O(1)$ time to execute.
It is straightforward to extend our analysis to scenarios where those assumptions
do not hold. Note that (2) holds for many practical problems, e.g.,
tropical semiring $(\mathbb{R}^\infty, \min, +, \infty, 0)$, but not for lexicographic
ordering where weights are $\stages$-dimensional vectors and hence
$\gamma = \O(\stages)$.
With \NAIVE, we refer to an algorithm that sorts the full output produced by the
Yannakakis algorithm~\cite{DBLP:conf/vldb/Yannakakis81}.

\subsubsection{Time to First}

All any-k algorithms first execute DP to find the top result and create all choice sets
in time $\O(\stages n)$.
\EAGER requires $\bigO(\stages n \log n)$ for sorting of choice sets. 
Heap construction for \LAZY and \HEAP takes time linear in input size.

\subsubsection{Delay \new{and \TTL}}

Each algorithm requires $\bigO(\stages)$ to assemble an output tuple. In addition, the following
costs are incurred:

\introparagraph{\RPDP}
For all \RPDP algorithms, popMin and bulk-insertion of all new candidates during result expansion take $\bigO(\log |\Cand|)$. For efficient candidate generation (\Cref{line:for2} in \Cref{alg:dp-anyk}) the new candidates do not copy the solution prefix, but simply create a pointer to it.
Therefore, a new candidate is created in $\O(1)$.

\EAGER finds each successor in constant time.
Since $|\Cand| \le k \stages$, its total delay is
$\bigO(\log (k \stages) + \stages) = \bigO(\log k + \stages)$.
For \LAZY, in the first iteration of the
main for-loop (\Cref{alg:dp-anyk}, \Cref{line:for1}), finding the successor
(\Cref{line:for2}) requires at most one pop on a heap
storing $\bigO(n)$ choices. All later iterations find the successor
in constant time. Hence total delay is $\bigO(\log k + \stages + \log n)$.
The \MIN algorithm might insert up to $\stages n$ new candidates to $\Cand$
for each result produced. Hence access to $\Cand$ after producing $k$ results
takes a total of $\bigO(\log (k \stages n))$. All together, delay is
$\bigO(\log k + \log \stages + \log n + \stages n) = \bigO(\log k + \stages n)$.
Finally, \HEAP finds up to two successor candidates of a choice in constant time.
Delay therefore is $\bigO(\log k + \stages)$.
It is easy to see that all these algorithms have
worst-case TTL of $\bigO(n^\stages \cdot \stages \log n)$, the same as \NAIVE
(refer to \cite{yang2018any} for \MIN).

\introparagraph{\REDP}
In \RECURSIVE each \texttt{next} call on $s_0$ triggers 
$\bigO(\stages)$ \texttt{next} calls in later stages---at most one per stage. The call deletes the 
top choice at the state and replaces it with the next-heavier path through the same child node in 
the next stage (see \cref{fig:REApq}). With a priority queue, these operations together take
time $\bigO(\log n)$ per state accessed, for a total delay of $\bigO(\stages \log n)$
between consecutive results. In total, it takes
$\bigO(\stages n + k \stages \log n)$ to produce the top $k$ results.
The resulting $\TTL$ bound of $\bigO(\stages n + |\mathrm{out}| \cdot \stages \log n)$
can be loose because it does not take into account that in later iterations
many \texttt{next} calls will stop early because the corresponding suffixes $\sol_i$
had already been computed by an earlier call:

\begin{theorem}\label{TH:TTL}
There exist DP problems where \RECURSIVE has strictly lower $\TTL$ complexity than \NAIVE.
\end{theorem}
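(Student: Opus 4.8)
The plan is to exhibit one concrete family of DP instances on which \RECURSIVE completes the entire enumeration asymptotically faster than \NAIVE, namely the Cartesian product $R_1 \times \cdots \times R_\stages$ with $|R_i| = n$ for every $i$, where both $\stages$ and $n$ are allowed to grow (for concreteness one may take $\stages = n$). This is the degenerate case of $Q'_{P\stages}$ in which every join‑attribute set $\vec x_i$ is empty: the DP graph then has $\stages$ intermediate stages of $n$ states each, a complete set of edges between consecutive stages, and exactly $|\mathrm{out}| = n^\stages$ solutions. I would choose tuple weights so that the $n^\stages$ output weights are pairwise distinct --- for instance by giving the $j$‑th tuple of $R_i$ weight $(j-1)n^{i-1}$, which turns every solution weight into a distinct base‑$n$ numeral --- and, if one cares to keep $|\Dec| = \bigO(\stages n)$, insert one intermediate node per consecutive stage pair exactly as in \cref{fig:equiJoinGraph}; this detail is immaterial for the bounds below since $\TTF = \bigO(\stages n^2)$ is dominated anyway.

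The core of the argument is an amortized analysis of \RECURSIVE on this instance. Each top‑level \texttt{next} call on $s_0$ emits one of the $n^\stages$ solutions and triggers a chain of at most $\stages$ recursive \texttt{next} calls --- at most one per stage --- so there are at most $\stages\, n^\stages$ \texttt{next} calls in total; charging $\bigO(1)$ of bookkeeping to each and $\bigO(\stages)$ for assembling each emitted tuple accounts for $\bigO(\stages\, n^\stages)$ work. The remaining cost comes from the ``expensive'' calls that actually touch a priority queue, and such a call at a state $s$ is precisely the one that first materializes a new rank $\sol_{j+1}(s)$. Because the full output is enumerated, every sub‑path from $s$ to $t$ is eventually used as a suffix, so the number of distinct ranks ever computed at a stage‑$i$ state equals its number of suffixes, which is $n^{\stages-i}$; with $n$ states per stage there are $\bigO(n^{\stages-i+1})$ expensive calls at stage $i$, each performing one \texttt{popMin} and at most one \texttt{insert} on a queue whose size never exceeds the out‑degree $\bigO(n)$, hence costing $\bigO(\log n)$. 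Summing the convergent geometric series $\sum_{i=1}^{\stages} n^{\stages-i+1}\log n = \bigO(n^\stages \log n)$ and adding the preprocessing, I get $\TTL = \bigO(n^\stages(\log n + \stages))$ for \RECURSIVE on this family.

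For \NAIVE, the Yannakakis step produces the $n^\stages$ output tuples and their weights in $\bigO(\stages\, n^\stages)$, after which comparison‑based sorting performs $\Theta(n^\stages \log(n^\stages)) = \Theta(\stages\, n^\stages \log n)$ constant‑time comparisons, which cannot be avoided for a generic, structure‑free weight assignment; thus $\TTL = \Theta(\stages\, n^\stages \log n)$ for \NAIVE. The ratio of the two bounds is $\Theta\!\big(\tfrac{\stages \log n}{\log n + \stages}\big)$, which tends to $\infty$ whenever $\stages \to \infty$ and $n \to \infty$ (for example it equals $\Theta(\log n)$ when $\stages = n$), so on this family the $\TTL$ of \RECURSIVE is asymptotically dominated by that of \NAIVE, which proves the claim. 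I expect the main obstacle to be the amortized accounting for the expensive calls: one must argue carefully that a priority‑queue operation occurs at a state only when a strictly higher‑ranked suffix is first produced there, that the number of such events at a state is bounded by its number of sub‑paths to $t$ (so that the per‑stage counts form a convergent geometric series rather than the naive $\stages\cdot|\mathrm{out}|$ product), and that the queues stay of size $\bigO(n)$ so that each operation costs only $\bigO(\log n)$ rather than $\bigO(\log |\mathrm{out}|)$ --- it is exactly this replacement of $\log|\mathrm{out}|$ by $\log n$ that lets \RECURSIVE undercut a full sort.
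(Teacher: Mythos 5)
Your proposal is correct and follows essentially the same route as the paper's proof: it uses a worst-case (Cartesian-product) instance, amortizes the priority-queue work of \RECURSIVE by charging one $\O(\log n)$ operation per suffix so that the geometric series $\sum_i n^{\stages-i+1}$ collapses to $\O(n^\stages)$, and contrasts the resulting $\O(n^\stages(\log n + \stages))$ bound with \NAIVE's $\Omega(|\mathrm{out}|\log|\mathrm{out}|) = \Omega(\stages\, n^\stages \log n)$ sorting cost. Your explicit observation that the separation requires both $\stages$ and $n$ to grow is a welcome clarification the paper leaves implicit.
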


\begin{proof}
Regardless of the implementation of $\NAIVE$, before it terminates it has to
(i) process the input in $\Omega(n \stages)$,
(ii) enumerate all results in $\Omega(|\mathrm{output}| \cdot \stages)$ and
(iii) use a standard comparison-based sort algorithm to batch-rank the entire output
in $(|\mathrm{out}| \log |\mathrm{out}|)$. In total, it needs
$\Omega(n \stages + |\mathrm{out}| (\log |\mathrm{out}| + \stages))$.

For \RECURSIVE, when computing the full result, for each suffix $\solW_\sgiter(s)$
of any state $s$, it holds that the suffix is \emph{exactly once}
inserted into and removed from the priority queue managing $\Choices$ at $s$.
Hence the total number of priority queue operations,
each costing $\bigO(\log n)$, equals the number
of suffixes. Let $\sol_*(\sgiter)$ denote the number of suffixes in stage $\sgiter$,
i.e., the total number of paths starting from any node in $\SsetR_\sgiter$.
Then the total cost for all priority-queue operations is
$\bigO(\log n \sum_{\sgiter=1}^{\stages} \sol_*(\sgiter))$. If
$\sum_{\sgiter=1}^{\stages} \sol_*(\sgiter) = \bigO(\sol_*(1))$,
then this cost is $\bigO(|\mathrm{output}| \cdot \log n)$.
(To see this, note that the set of paths starting at nodes in stage 1 is
the set of all possible paths, i.e., the full output.)
Together with pre-processing time and time to assemble each output
tuple, total $\TTL$ complexity of \RECURSIVE then adds up to
$\bigO(\stages n + |\mathrm{output}| (\log n + \stages))$.
To complete the proof, we show instances where the condition $\sum_{\sgiter=1}^{\stages} \sol_*(\sgiter) = \bigO(\sol_*(1))$ holds and in which the running time of \NAIVE is strictly worse.

Consider the instances with worst-case output $\Theta(n^\stages)$ such as a Cartesian product. 
Recall that the size of the output is the same as the number of suffixes in the first stage, thus $\sol_*(1) = \Theta(n^\stages)$.
Now consider the ratio between $\sol_*(\sgiter)$ and $\sol_*(\sgiter + 1)$ for some stage $i \in \N_1^{\stages-1}$.
That ratio can't be more than $n$ which occurs when $i$ and $i + 1$ are fully connected.
It follows that in order to get that many suffixes in the first stage, 
\emph{every stage} $i$
has to increase the number of suffixes of stage $i + 1$ by a factor of $\Theta(n)$.
Therefore, $\sol_*(1)$ asymptotically dominates the sum $\sum_{\sgiter=1}^{\stages} \sol_*(\sgiter)$, similarly to a geometric series.
Also note that the running time of \NAIVE in these instances is $\Omega(n^\stages \cdot \stages \log n )$, which is higher than $\O(n^\stages (\log n + \stages))$ of \RECURSIVE.

\end{proof}
The lower $\TTL$ of \RECURSIVE is at first surprising,
given that \NAIVE is optimized for bulk-computing and bulk-sorting the entire output.
Intuitively, \RECURSIVE wins because it exploits the multi-stage structure of the
graph---which enables the re-use of shared path suffixes---while \NAIVE uses a general-purpose comparison-based sort algorithm.
We leave as future work a more precise characterization of graph properties that ensure better $\TTL$ for
\RECURSIVE over \NAIVE.

\subsubsection{\new{$\TT(k)$}}

\new{
For all algorithms, $\TT(k) = \O(\TTF + k \cdot \Del(k))$. 
Thus for \HEAP, $\TT(k) = \O(\stages n + k (\log k + \stages))$, while for \LAZY, $\TT(k) = \O(\stages n + k (\log k + \stages + \log n))$.
However, a more careful analysis for the \RPDP variants, gives us the following result:
}

\new{
\begin{proposition}\label{PROP:TTK}
\LAZY achieves $\TT(k) = \O(\stages n + k (\log k + \stages))$, the same as \HEAP.
\end{proposition}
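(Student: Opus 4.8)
The naive bound $\TT(k) = \O(\TTF + k \cdot \Del(k))$ for \LAZY\ is $\O(\stages n + k(\log k + \stages + \log n))$, so the only term to eliminate is the $k \log n$ contribution from heap-pop operations. The plan is to show that across the \emph{entire} run producing the first $k$ results, the total number of heap-pop operations performed over all choice sets is $\O(k + n)$, rather than one per result. The key observation is that \LAZY\ only ever pops from the heap stored at a state $s$ when it first encounters a successor request at $s$ through a prefix for which the second-best choice has not yet been exposed; every pop permanently moves one choice from the heap into the sorted-list portion of that state's choice structure, and a choice is popped at most once. Hence the total pop count is bounded by the number of distinct choices ever touched. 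I would bound this in two pieces: (i) at most two pops per state are ``forced'' by the top-2 initialization of each sorted list that gets activated, contributing $\O(n)$ in the worst case over the $\O(\stages n) = \O(n)$ states (with $\stages$ constant), and (ii) every additional pop can be charged to a distinct candidate that was popped from $\Cand$, of which there are $\O(k)$ after $k$ results since each \RPDP\ iteration pops exactly one candidate and the only iteration that issues a ``look for the second-best'' request (the one that triggers a heap-pop) is the first iteration of the for-loop on Line~\ref{line:for1}.

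Concretely, the argument proceeds as follows. First I would recall from Section~\ref{sec:anyKpartInstantiations} that in \LAZY\ each choice set at a state $s$ is stored as a binary heap together with a sorted prefix-list; $\Suc(x,y)$ returns the next choice after $y$ in constant time whenever $y$ already lies in the sorted portion, and otherwise triggers exactly one $\mathrm{popMin}$ on the heap to extend the sorted portion by one element. Second, I would argue an amortization/charging scheme: assign each heap-pop either to the state whose heap it shrinks (at most twice per state, for the top-2 that seed the list) or to the unique candidate popped from $\Cand$ in the repeat-loop iteration during which that pop was issued. Because a single repeat-loop iteration issues at most one heap-pop (only the first iteration of the inner for-loop looks for a lower choice, per the footnote in Section~\ref{sec:anyKpartInstantiations}), and because producing $k$ results entails exactly $k$ repeat-loop iterations hence $\O(k)$ candidates ever popped from $\Cand$, the pops charged to candidates number $\O(k)$. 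Third, summing: the total heap-pop work is $\O((k + n)\log n)$. But $\O(n \log n)$ is absorbed into the $\O(\stages n)$ preprocessing term only if... no --- here I must be slightly more careful, since $n \log n \neq \O(n)$; instead I fold the $\O(n\log n)$ residual into the $\TTF$ accounting by noting \LAZY's heap construction is $\O(n)$ but the \emph{first} activation of each sorted list already costs its two pops, which I can likewise amortize against either the states (giving an additive $\O(n \log n)$ that is dominated by... ) hmm.

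Let me restate the clean version: the total time for \LAZY\ to produce $k$ results is $\TTF$ plus, over all repeat-loop iterations, the cost of candidate generation, priority-queue operations on $\Cand$, and successor-finding. Candidate generation and $\Cand$-operations give $\O(k(\log k + \stages))$ exactly as for \HEAP. For successor-finding, the total cost is $\O(\stages)$ per iteration of constant-time $\Suc$ calls, plus $\log n$ for each heap-pop; and the total number of heap-pops over the whole run is $\O(k + n)$ by the charging argument above (two per activated state plus one per popped $\Cand$-candidate). Thus successor-finding contributes $\O(k\stages + (k+n)\log n)$. The $\O(n \log n)$ piece here is the obstacle: it is not $\O(\stages n)$. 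The resolution --- and I believe this is the intended argument --- is that the $\O(n\log n)$ term only materializes if $\Omega(n)$ distinct states are activated, which requires $\Omega(n/\stages)$ results to have been produced (each result is a path of $\stages+1$ nodes, touching $\stages$ states, and activating a state's second-best choice requires that state to lie on a produced prefix or a candidate prefix), so $n = \O(k\stages)$ in the regime where the term is nonzero, whence $(k+n)\log n = \O(k\stages \log n)$; combined with the separate observation that once we are past the pure-preprocessing phase $\log n = \O(\log k + \stages)$ is \emph{not} generally true either... so the cleanest framing is: whenever $n \geq k$, $\TT(k) = \O(\stages n) = \O(\stages n + k(\log k+\stages))$ trivially dominates; whenever $n < k$, $\log n < \log k$ and $(k+n)\log n = \O(k \log k)$, giving $\TT(k) = \O(\stages n + k(\log k + \stages))$. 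Either way the bound holds. The main obstacle, and the step I would spend the most care on, is making the charging argument for the heap-pop count airtight --- precisely identifying, for each heap-pop at a state $s$, the unique $\Cand$-candidate or the ``seed-pair'' slot of $s$ responsible for it, and verifying no double-charging occurs across the interleaved recursive expansions.
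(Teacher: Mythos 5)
Your final bound is correct, but the route you take has a genuine gap, and most of the work you do is not needed. The charging argument that bounds the total number of heap-pops by $\O(k+n)$ buys nothing: \LAZY already performs at most one heap-pop per repeat-loop iteration (only the first iteration of the inner for-loop can request a choice not yet in the sorted portion), so the trivial count is $k$ pops total, which is \emph{tighter} than your $\O(k+n)$ whenever $n > k$. The real issue is not the number of pops but the $\O(\log n)$ cost of each one, i.e., the $k\log n$ term in the naive bound $\TT(k) = \O(\stages n + k(\log k + \stages + \log n))$. The paper's proof never revisits the algorithm at all: it is a purely algebraic argument that for every $k$ the target bound already dominates the naive one, i.e., $\stages n + k(\log k + \stages) = \Omega(\stages n + k(\log k + \stages + \log n))$.

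The gap is in your closing case analysis. For $n < k$ you correctly observe $\log n < \log k$. But for $n \geq k$ you assert that ``$\TT(k) = \O(\stages n)$ trivially dominates,'' which is false: take $k = n/2$, where the cost already contains $k\log k = \Theta(n\log n) \neq \O(\stages n)$ for constant $\stages$. This is precisely the case that requires the one nontrivial inequality, namely $\log(n/k) \leq n/k$ for $1 \leq k \leq n$, which gives $k\log n = k\log k + k\log(n/k) \leq k\log k + n$, so the offending $k\log n$ term is absorbed into $n + k\log k$. This is the computation the paper carries out (in the equivalent form $n/k \geq \log n - \log k$, yielding $2(n + k\log k) \geq n + k\log n$); without it, your treatment of the $k \leq n$ case does not go through.
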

}
\begin{proof}
We will show for all values of $k$ that 
$\stages n + k (\log k + \stages) = \Omega(\stages n + k (\log k + \stages + \log n))$,
thus the seemingly lower $\TT(k)$ complexity of \HEAP is lower bounded 
by the seemingly higher $\TT(k)$ complexity \LAZY.
Since $\stages \geq 1$ and 
$\log n$ is dominated by $\log k$ for $k \geq n$, 
it suffices to show that
$n + k \log k = \Omega(n + k \log n)$ for $k < n$.

For any $1 \leq k \leq n$, it holds that 
$n / k \geq \log(n / k)$ and 
$\log k \geq 0$
and therefore
\begin{align*}
& \frac{n}{k} \geq \log n - \log k \geq \log n - 2 \log k \\
\Rightarrow & n \geq k \log n - 2 k \log k \\
\Rightarrow & (\frac{1}{0.5} - 1) n \geq k \log n - \frac{1}{0.5} k \log k\\
\Rightarrow & \frac{1}{0.5} (n + k \log k) \geq n + k \log n
\end{align*}
This means that there exists an $a > 0$ ($a=0.5$ here) for which 
$n + k \log k \geq a (n + k \log n)$ for all values of $n$,
which completes the proof.
\end{proof}

\begin{figure*}[t]
\centering
\includegraphics[width=0.7\linewidth]{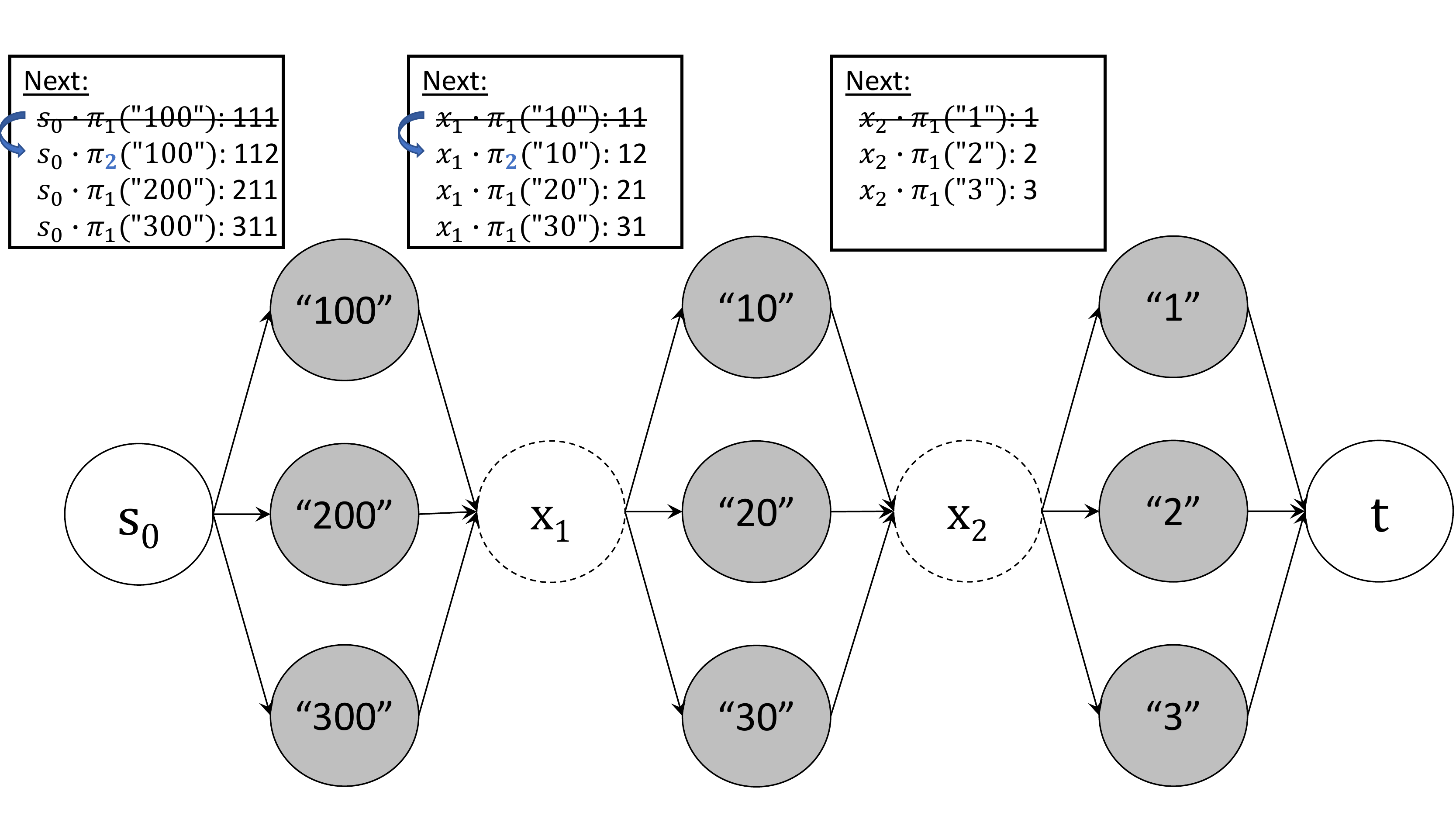}
\caption{A worst-case example for \RECURSIVE and $k=n$. Notice the sharing of data structures between tuples due to our special equi-join encoding (\cref{fig:equiJoinGraph}). Each returned query result entails a sequence of $\ell$ priority queue operations.
}
\label{fig:rea_wc}
\end{figure*}

For \RECURSIVE, our analysis shows that when the number of the $k$ returned results is not ``too large'', the best \RPDP approaches are asymptotically faster.
For instance, when $k=\bigO(n)$, \HEAP achieves $\bigO(n \log n + n \cdot \ell)$ compared to $\bigO(n \cdot \ell \log n)$ of \RECURSIVE.
One might be inclined to think that this gap is an just an artifact of our analysis and it can potentially be closed with arguments similar to the proof of \Cref{TH:TTL}.
However this is not the case, as we now show that the aforementioned bound is {\em tight}, i.e., there exists an instance for which \RECURSIVE needs $\Theta(n \cdot \ell \log n)$ time to return $k = n$ results.

\begin{proposition}\label{TH:REC_WC}
\RECURSIVE is strictly slower than the best implementation of \RPDP for $\TT(n)$ in the worst case.
\end{proposition}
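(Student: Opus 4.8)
The plan is to exhibit a single DP instance (equivalently, an equi-join path query, as drawn in \Cref{fig:rea_wc}) on which \RECURSIVE provably spends $\Theta(n \ell \log n)$ time to return the first $n$ results, while every \RPDP variant with linear pre-processing returns them in $\O(n\log n + n\ell)$ by the bounds of \Cref{tab:complexity_dp} together with \Cref{PROP:TTK}. Since $\frac{n\ell\log n}{n\log n + n\ell} = \frac{\ell\log n}{\ell+\log n} \geq \tfrac12\min\{\ell,\log n\}$, the ratio diverges whenever the query size $\ell$ grows with $n$ (e.g.\ $\ell=\Theta(\log n)$), so this establishes that \RECURSIVE is asymptotically strictly slower than the best \RPDP implementation in the worst case once query size is accounted for, exactly the regime of \Cref{sec:complexity}.

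The instance has a \emph{spine} of light states $s_0, s_1^0, \dots, s_{\ell-1}^0$, one per stage $0,\dots,\ell-1$, plus $n-1$ \emph{heavy} off-spine states $s_i^1,\dots,s_i^{n-1}$ in each stage $i$ with $1\le i\le \ell-1$. Each spine state $s_i^0$ ($0 \le i \le \ell-1$) is connected to all $n$ states of stage $i+1$: the edge to $s_{i+1}^0$ has weight $0$ and each edge to a heavy state has weight $W \define n+1$. Each heavy state $s_i^j$ ($j\ge 1$) has a single outgoing edge, to $s_{i+1}^0$, of weight $0$, so it is never pruned but contributes a path of weight $\ge W$ to every choice set in which it appears. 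Finally, $s_{\ell-1}^0$ connects to $n$ terminal-stage states $s_\ell^0,\dots,s_\ell^{n-1}$ via edges of weights $1,2,\dots,n$, and each $s_\ell^j$ connects to $t$ with weight $0$. First I would verify that the graph has $\Theta(\ell n)$ nodes and, thanks to the equi-join encoding of \Cref{fig:equiJoinGraph}, $\Theta(\ell n)$ edges; that $\solW_1(s)=1$ for every spine state; and that for every $k\le n$ the $k$-th shortest path from $s_0$ runs through the entire spine and then takes the $k$-th cheapest stage-$\ell$ edge, so $\solW_k(s_0)=k$ and, for all $i$, $\sol_k(s_i^0) = s_i^0 \concat \sol_k(s_{i+1}^0)$ — because any deviation onto a heavy state costs at least $W>n$.

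Next I would trace \RECURSIVE. The key claim is that for each $k \le \lfloor n/2\rfloor$, the $k$-th \texttt{next} call at $s_0$ triggers exactly one \texttt{next} call at each spine state $s_0, s_1^0,\dots,s_{\ell-1}^0$ (plus an $\O(1)$ base-case call at a stage-$\ell$ state), since both the previously returned path and the next one run through the whole spine. At each spine state $s_i^0$ with $i\le\ell-2$ the call pops the head of $\Choices(s_i^0)$ and re-inserts the updated choice $s_i^0\concat\sol_{k+1}(s_{i+1}^0)$; because the $n-1$ heavy entries of weight $\ge W>n$ are never popped during the first $\lfloor n/2\rfloor$ rounds, $|\Choices(s_i^0)| = \Theta(n)$ throughout, so each such operation costs $\Theta(\log n)$. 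At $s_{\ell-1}^0$ the choice set starts with $n$ entries and still has at least $n/2$ after $\lfloor n/2\rfloor$ rounds, so its pop also costs $\Theta(\log n)$. Summing over the spine gives $\Theta(\ell\log n)$ per returned result, hence $\Theta(n\ell\log n)$ to produce the first $\lfloor n/2\rfloor = \Theta(n)$ of them, so $\TT(n)\ge\TT(\lfloor n/2\rfloor) = \Omega(n\ell\log n)$; with the $\O(\ell n + n\ell\log n)$ upper bound this is $\Theta(n\ell\log n)$. Since \HEAP, \EAGER, and \LAZY all achieve $\TT(n) = \O(n\log n + n\ell)$ on every instance (in particular this one), the comparison of the preceding paragraph completes the proof.

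The main obstacle is the bookkeeping in the last step: one must argue carefully that the recursion genuinely descends the full spine on every one of the first $\Theta(n)$ rounds, rather than stopping early at some state whose next suffix has already been materialized, and that no heavy off-spine choice is ever popped during those rounds, so the priority queues really stay of size $\Theta(n)$ (this is why the argument is restricted to the first $\lfloor n/2\rfloor$ results and why the heavy-state weight $W$ is chosen above $n$). A secondary point is checking realizability as an equi-join path query with at most $n$ tuples per relation under the encoding of \Cref{fig:equiJoinGraph}; this only requires allowing non-joining attributes so that distinct edge weights can be attached to distinct stage-$\ell$ tuples, which is precisely what \Cref{fig:rea_wc} depicts.
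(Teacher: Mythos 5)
Your proposal is correct and takes essentially the same approach as the paper: both exhibit a hard instance in which the first $\Theta(n)$ results share no suffixes (each using a distinct last-stage tuple), so every result forces a full chain of $\ell$ \texttt{next} calls, each a pop-and-push on a priority queue of size $\Theta(n)$, yielding $\Theta(n\,\ell\log n)$ versus the $\O(n\log n + n\ell)$ bound of \HEAP. The paper realizes this with a Cartesian product with suitably tuned weights (\cref{fig:rea_wc}) and states the bookkeeping more informally, whereas you use a spine-plus-heavy-deviations gadget and verify the queue sizes and full-depth recursion explicitly, but the counting argument is identical.
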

\begin{proof}
To achieve worst-case behavior for \RECURSIVE, we need to (i) create large priority queues and (ii) minimize the sharing of common suffixes between different results.
\Cref{fig:rea_wc} depicts the simplest such example, corresponding to a Cartesian product between $\ell=3$ relations. 
As before, tuple weight is equal to tuple value. 
Notice that each of the first $k = n$ results uses a different tuple from $R_\ell$. 
It is straightforward to set the weights appropriately in order to achieve the same for arbitrary values of $n$, $\ell$. 
To retrieve the $k$'th result, a \texttt{next} call at $s_0$ will trigger a chain of $\ell-1$ other recursive \texttt{next} calls, each one computing $\sol_k$ for a different stage. 
Every \texttt{next} call (except maybe the last one) involves a pop and a push from a priority queue of size $\Theta(n)$, hence $\Theta(n \cdot \ell \log n)$ in total.
At the same time, the worst-case bound for \HEAP is $\bigO(n \log n + n \cdot \ell)$.
\end{proof}

\subsubsection{Memory}

All algorithms need $\O(\stages n)$ memory for storing the input.
The memory consumption of \RPDP approaches depends on the size of $\Cand$.
\MIN grows $\Cand$ by $\O(\stages n)$ elements in each iteration, but creates
at most $|\mathrm{out}|$ candidates in total. The others create only $\O(\stages)$
new candidates per iteration, thus $\MEM(k) = \O(\stages n + k \stages)$.
For \RECURSIVE, size of a choice set $\Choices_k(s)$ is bounded by the out-degree
of $s$, hence cannot exceed $n$.
However, we need to store the suffixes $\sol_i(s)$ produced by the algorithm,
whose number is $\O(\stages)$ per iteration, thus $\MEM(k) = \stages n + k \stages$.
\NAIVE first materializes the output and then sorts it in-place,
therefore has $\MEM(k) = \O(\stages n + |\mathrm{out}| \stages)$, regardless of $k$.

\subsubsection{Summary}

\Cref{tab:complexity_dp} summarizes the analysis for $\TTF$, 
for $\Del(k)$,
for $\TTL$ where the output is sufficiently big (so that result-enumeration time dominates
pre-processing time),
for $\TTL$ on worst-case outputs where we can see the advantage of \RECURSIVE,
and for memory $\MEM(k)$.
All any-k algorithms except
\EAGER have optimal $\TTF = \bigO(\stages n)$.
In contrast, \NAIVE has to sort the full output in
$\bigO(|\mathrm{out}| \log |\mathrm{out}|)$.
\EAGER and \HEAP have the lowest delay $\bigO(\log k + \stages)$.
Only our new algorithm \HEAP achieves optimal $\Del(k)$ after linear $\TTF$ (\Cref{sec:optimalityDef}).

While \RECURSIVE has higher delay than \HEAP, \LAZY, and \EAGER, it has the lowest
$\TTL$ for a worst-case-size output.
This seemingly paradoxical result stems from the fact that as \RECURSIVE outputs
results, it builds up state (ranking of suffixes) that speeds up computation for
later results. Hence even though its delay complexity is tight for small $k$,
our amortized accounting showed that it ultimately must achieve
lower delay for large $k$.

All any-k algorithms but \MIN require minimal space, depending only on input size
and the number of iterations $k$ times query size $\stages$.
\MIN has higher memory demand because it overloads the candidate set early,
while \NAIVE materializes the complete output.

\section{Extension to General CQs}
\label{sec:nsdp}

We extend our ranked enumeration framework from serial to Tree-Based DP (T-DP),
and then to a Union of T-DPs (UT-DP). This enables optimal ranked enumeration of
\emph{arbitrary conjunctive queries}.

\begin{figure}[tb]
\centering
\includegraphics[width=.45\textwidth]{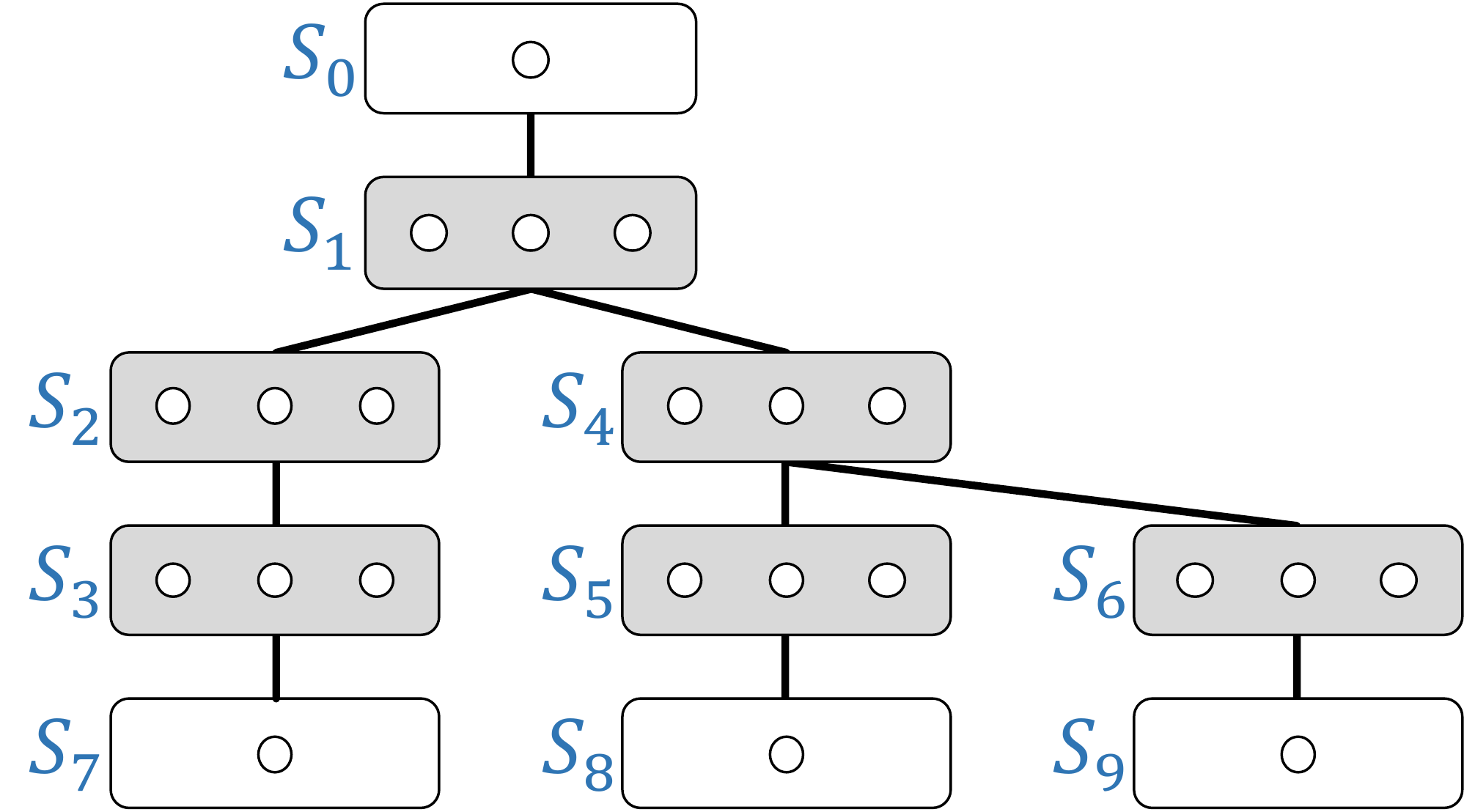}
\caption{\emph{Tree-Based DP} (T-DP) problem structure. 
Rounded rectangles are \emph{stages}, small circles are \emph{states}.}
\label{fig:treedp}
\end{figure}

\subsection{Tree-Based DP (T-DP)}
\label{sec:tdp}

We first consider problems where the stages are organized in a \emph{rooted tree}
with $S_0 = \{ s_0 \}$ as the root stage. 
In these problems, there is a distinct set of decisions $\Dec_{pc}$ for each parent-child pair $p-c$.
\Cref{fig:treedp} depicts an example with 10 stages. 
We assume that all leaf stages contain only one (terminal) state\footnote{
Artificial stages can be introduced to meet this assumption.}, 
thus every root-to-leaf path represents an instance of serial DP as discussed in \cref{sec:DPtoAnyK,sec:DPalgorithms}.
We now extend our approach to Tree-based DP problems (T-DP)
and adapt all any-$k$ algorithms accordingly.

We serialize the stages by assigning a \emph{tree order} that places every parent
before its children, e.g., by a topological sorting of the tree.
To simplify the notation we force the $t$ leaf nodes to be numbered last, i.e.,
$S_i = \{s_i\}$ for $i \in \{\ell+1, \ldots, \ell + t\}$.
We define $\Ch(\Sset_\sgiter)$ 
to be the set of indices of child stages and 
$\parent(\Sset_\sgiter)$ to be the index of the parent stage of $\Sset_\sgiter$. 
In our example, $\Ch(\Sset_4) = \{ 5, 6 \}$ and $\parent(\Sset_4) = 1$. 
By $\lceil S_v \rceil$ we denote the stages of the subtree rooted at $S_v$, 
while $\llceil S_v \rrceil := \lceil S_v \rceil \setminus \{ v \}$. 
In our example, $\llceil S_4 \rrceil := \{  5, 6, 8, 9\}$.
Slightly overloading the notation, we also use 
$\Ch(s_\sgiter) := \Ch(\Sset_\sgiter)$
for a state $s_\sgiter \in \Sset_\sgiter$.
Analogously for 
$\parent(s_\sgiter)$, 
$\lceil s_\sgiter \rceil$, and
$\llceil s_\sgiter \rrceil$.

A T-DP \emph{solution} $\sol = \langle s_1 \ldots s_\stages \rangle$ 
\footnote{
Notice that as in DP, we do not include the unique root state and the $t$ terminal states in the $t$ leaf nodes 
(i.e.\ stages with unique states) in the solution.
}
is a tree with one state per stage and is admissible, i.e.,
$\forall c \in \N_1^{\stages+t}$, if $\parent(c)=p$ 
then $(s_{p}, s_{c}) \in \Dec_{pc}$. 
The \emph{objective function} aggregates the weights of decisions across the entire tree structure:
\begin{align}
	\weight(\sol) = \aggrsum_{c=1}^{\stages+t} \weight(s_{\parent(s_c)}, s_{c}) \label{eq:cost_treeDP}
\end{align}

\introparagraph{T-DP Bottom-up}
The optimal solution is then computed bottom-up by following the serial order
of the stages in reverse.
A bottom-up step for a state $s$ solves a \emph{subproblem} which corresponds to finding an optimal subtree $\sol(s)$.
If $\Ch(s) = \{ i_1, \ldots, i_\chnum \}$, then that subtree consists of 
$s$ and a list of other subtrees rooted at its children $s_{i_1}, \ldots, s_{i_{\chnum}}$. 
To solve a subproblem, we \emph{independently} choose the best decision for each child stage.
The equations 
describing the bottom-up phase in T-DP are
recursively defined for all states and stages
by
\begin{equation}
\begin{aligned}
    \solW_1(s) &= 0, \textrm{ for the $t$ terminals with } \Ch(s) = \emptyset		 \\
    \solW_1(s) &= \!\!\!\aggrsum_{c \in \Ch(s)} 
		\!\min_{(s, s_c) \in \Dec_{pc}} 
		\!\big\{\weight(s, s_c) \aggr \solW_1(s_c)\big\}, \\[-3mm]
		& \hspace{40mm}
			\textrm{ for } 
			s \in \Sset_p, 
			p \in \N_0^\stages	 
	\label{eq:TDP_recursion}
\end{aligned}
\end{equation}

\introparagraph{T-DP top-down}
Similarly to serial DP, after the bottom-up phase we get reduced sets of states
$\SsetR_\sgiter \subseteq \Sset_\sgiter$, $\DecR_{pc} \subseteq \Dec_{pc}$
and the top-1 solution $\sol_1(s_0)$ is found by a top-down
phase that follows optimal decisions.

\introparagraph{T-DP principle of optimality}
Comparing the above formulation to serial DP, we now may have multiple terminals (i.e.\ leaves in the tree) 
that are initialized with 0 cost, 
but we still have only one single root node. 
Comparing the objective functions of T-DP \cref{eq:cost_treeDP} with DP \cref{eq:costDP},
we changed the indexing to reflect the fact that each state has exactly one parent
(but not the other way around). 
Consider \cref{fig:treedp} after removing
the subtree rooted at $S_4$; then our problem degenerates to standard serial DP and we
are back at \cref{fig:cartesian}.
Contrasting the new \emph{principle of optimality} formulation in
\cref{eq:TDP_recursion} against \cref{eq:DP_recursion}, 
we now have that a minimum-cost solution contains other subtree solutions 
that achieve themselves minimum cost for their respective subproblems.

\begin{theorem}[T-DP]\label{TH:T-DP}
	\Cref{eq:TDP_recursion}
	finds an optimal solution to the problem of minimizing \cref{eq:cost_treeDP}.
\end{theorem}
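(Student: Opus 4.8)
The plan is to show that \cref{eq:TDP_recursion} computes, at every state $s$, the weight $\solW_1(s)$ of a minimum-weight \emph{admissible subtree} rooted at $s$ (call it $\sol(s)$), and in particular that $\solW_1(s_0)$ equals the optimum of \cref{eq:cost_treeDP}; the top-down phase then exhibits a witnessing solution by following, at each state, a decision attaining the minimum in \cref{eq:TDP_recursion}. The argument is an induction over the stages processed in reverse tree order (equivalently, on the height of the subtree rooted at $s$), mirroring the proof of the principle of optimality for serial DP, but now exploiting that in a rooted tree the subtrees hanging off the children of a node are pairwise vertex-disjoint.

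First I would record two structural facts as an explicit observation. (\emph{i}) Admissibility is a purely local parent–child condition: a subtree is admissible iff each of its edges $(s_p, s_c)$ is a valid decision in $\Dec_{pc}$. (\emph{ii}) For a state $s \in \Sset_p$ with child stages $\Ch(s) = \{i_1, \ldots, i_\chnum\}$, the stage sets $\lceil S_{i_1} \rceil, \ldots, \lceil S_{i_\chnum} \rceil$ are disjoint. Together these say that an admissible subtree $\sol(s)$ is obtained by independently choosing, for each child stage $i_j$, a state $s_{i_j}$ with $(s, s_{i_j}) \in \Dec_{p i_j}$ and an admissible subtree rooted at $s_{i_j}$, with no coupling between the $\chnum$ choices. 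Summing \cref{eq:cost_treeDP} over the edges of such a subtree and grouping by the child subtree each edge belongs to yields the decomposition $\weight(\sol(s)) = \aggrsum_{j=1}^{\chnum}\big(\weight(s, s_{i_j}) \aggr \weight(\sol(s_{i_j}))\big)$.

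The heart of the proof is the exchange of minimization and summation, which is exactly where independence is used. The base case is a leaf/terminal state: the only admissible subtree is the trivial one, of weight $0 = \solW_1(s)$, matching \cref{eq:TDP_recursion}. For the inductive step, because the choices for different children do not interact, $\min_{\sol(s)} \weight(\sol(s)) = \aggrsum_{j=1}^{\chnum} \min_{(s, s_{i_j}) \in \Dec_{p i_j}}\big(\weight(s, s_{i_j}) \aggr \min \weight(\sol(s_{i_j}))\big)$; by the induction hypothesis the inner quantity $\min \weight(\sol(s_{i_j}))$ equals $\solW_1(s_{i_j})$ as computed by the recursion, so the right-hand side is precisely the RHS of \cref{eq:TDP_recursion}. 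Hence $\solW_1(s)$ computed by \cref{eq:TDP_recursion} equals the true optimal subtree weight at $s$. Instantiating at $s = s_0$ shows $\solW_1(s_0)$ is the minimum of \cref{eq:cost_treeDP}, and the tree reconstructed top-down by following a minimizing decision at each state is admissible and attains cost $\solW_1(s_0)$, hence is optimal. (If some $\solW_1(s) = \infty$, i.e.\ no admissible subtree exists, the identities still hold over $\R \cup \{\infty\}$ and such states are pruned into $\SsetR$, $\DecR$ without altering the solution space, exactly as for serial DP; the $\chnum = 1$ specialization recovers \cref{eq:DP_recursion}, a useful sanity check.)

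The main obstacle — conceptually light but worth stating carefully — is justifying the min/sum exchange, i.e.\ convincing the reader that there genuinely is no interaction between the subtrees rooted at sibling states. This is the single place the tree structure is indispensable (it fails for general DAGs, which is precisely why cyclic queries later require the UT-DP framework), so I would isolate the disjointness of the $\lceil S_{i_j} \rceil$ and the locality of admissibility into the observation above and then treat the induction itself as routine.
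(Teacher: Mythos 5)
Your proposal is correct and follows essentially the same route as the paper's proof: an induction over states in reverse tree order whose inductive step is the exchange of minimization with the sum over child branches. Your explicit isolation of the disjointness of sibling subtrees as the justification for that exchange makes precise a step the paper performs more tersely, but the underlying argument is the same.
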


\begin{proof}
We will show by an induction on the tree stages
in reverse serial order
that for all states $s \in S$:
\begin{align}
\label{eq:tdp_induction}
 \min_{\sol(s)}
 \big\{
 \aggrsum_{i \in \llceil s \rrceil}
 \weight(s_{\parent(s_i)}, s_{i})
 \big\}
	= \solW_1(s)
\end{align}
The base case for the (terminal) leaf states follows by definition from
\Cref{eq:TDP_recursion}.
For the inductive step, assume that the above holds for all
descendant states
$s_{d} \in S_d, d \in \llceil s \rrceil$ of a state $s$.
In particular for any ``child state'' $s_c$ with $s \in S_{\parent(s_c)}$:
\begin{align}
\min_{\sol(s_c)}
 \big\{
 \aggrsum_{i \in \llceil s_c \rrceil}
 \weight(s_{\parent(s_i)}, s_{i})
 \big\} 
 =
	\aggrsum_{g \in \Ch(s_c)}
 \min_{(s_c, s_g) \in \Dec_{cg}}
	\big\{
	\weight(s_c, s_g) \aggr \solW_1(s_g)
	\big\}
\end{align}

Then for any state $s \in S_p$:
\begin{align*}
&\solW_1(s)
=
	\!\!\!\!\aggrsum_{c \in \Ch(s)}
 \min_{(s, s_c) \in \Dec_{pc}}
	\!\big\{
	\weight(s, s_c) \aggr \solW_1(s_c)
	\!\big\}\\
&=
	\!\!\!\!\aggrsum_{c \in \Ch(s)}
	\min_{(s, s_c) \in \Dec_{pc}}
	\!\big\{
	\weight(s, s_c) \aggr
	\min_{\sol(s_c)}
	\!\big\{
	\!\!\!\!\aggrsum_{g \in \llceil s_c \rrceil}
	\weight(s_{\parent(g)}, s_{g})
	\big\}
	\big\}\\
&=
	\min_{s_c}
	\!\big\{
	\min_{\sol(s_c)}
	\!\big\{
	\!\!\!\!\aggrsum_{c \in \Ch(s)}
	\!\big\{
	\weight(s, s_c) \aggr
	\!\!\!\!\aggrsum_{g \in \llceil s_c \rrceil}
	\weight((s_{\parent(g)}, s_{g})
	\big\}
	\big\}
	\big\}\\
&=
 \min_{\sol(s)}
 \!\big\{
 \aggrsum_{i \in \llceil s \rrceil}
 \weight(s_{\parent(s_i)}, s_{i})
 \!\big\}
\end{align*}
Since \Cref{eq:tdp_induction} hold for any state $s$, it also holds for the starting state $s_0$, thus the theorem follows.
\end{proof}

To enumerate lower-ranked results for T-DP, we need to extend the path-based any-k algorithms.

\introparagraph{Changes to \RPDP}
All \RPDP algorithms are straightforward to extend to the tree case by following the serialized order of the stages.
Intuitively, the $i^\textrm{th}$ stage in this tree order is treated like the
$i^\textrm{th}$ stage in the path problem, except that the sets of choices are determined
by the actual parent-child edges in the tree. 
For illustration, assume a tree order as indicated
by the stage indices in \Cref{fig:treedp}. 
Given a prefix $\langle s_1 s_2 s_3 \rangle$,
the choices for $s_4 \in \Sset_4$ are not determined by $s_3$ (as they would be for a path
with stages $S_1$, $S_2$,\ldots), but by $s_1 \in \Sset_1$, because $\Sset_1$ is the parent
of $\Sset_4$ in the tree. 
In general, at stage $\Sset_c$, we have to find the successors $\Suc(s_p, s_c)$ 
where $p = \parent(s_c)$. 
Similarly, to optimally expand a prefix
$\langle s_1 \ldots s_{c-1} \rangle$ by one stage, 
we append $s_{c}$ such that $\sol_1(s_{c})$ is a subtree of  $\sol_1(s_p)$.
Thus, we can run \Cref{alg:dp-anyk} unchanged as long as we
define the choice sets based on the parent-child relationships in the
tree. 
Hence the complexity analysis in \Cref{sec:complexity} still applies
as summarized in \Cref{tab:complexity_dp}.

\introparagraph{Changes to \REDP}
Unfortunately, for \REDP the situation appears more challenging, because each state processes a
\texttt{next} call by recursively calling \texttt{next} on its children.
The challenge is to combine the lower-ranked solutions from the children
and to rank these combinations efficiently.
First, we give a high-level overview:
Consider a state $s_1 \in \Sset_1$ with children $\Sset_2$ and $\Sset_4$.
A solution rooted at $s_1$ consists of two parts: one solution rooted at the first child
$\Sset_2$ and the other at $\Sset_4$. Suppose this solution contains the $2^\textrm{nd}$-best
path from $\Sset_2$ and the $3^\textrm{rd}$-best path from $\Sset_4$---$[\Pi_2, \Pi_3]$
for short. Then the next-best solution from $s_1$ could be either $[\Pi_3, \Pi_3]$ or
$[\Pi_2, \Pi_4]$. Since any combination of child solutions $[\Pi_{j_1}, \Pi_{j_2}]$ is valid
for the parent, the problem is essentially to rank the Cartesian product space of
subtree solutions. 
This produces duplicates when directly applying
the recursive algorithm~\cite{deep19}, or requires a different approach such as \RPDP
for this Cartesian product problem to avoid duplicates. 
We adopt the latter approach.

In more detail, let $\sol_j(s, c)$ be the $j$-th best solution that starts from state $s$ but is restricted only to a single branch $c \in \Ch(s)$.
Thus, $\sol_j(s, c)$ consists of state $s$, then a state from stage $S_c$ and from there, a list of pointers to other solutions (i.e., subtrees) that have their own rank.
We write that as 
$\sol_j(s, c) = s \tree [ \sol_{j_1}(s_c, i_1), \ldots, \sol_{j_\chnum}(s_c, i_\chnum) ]$ 
for $s_c \in S_c$, $\Ch(s_c) = \{ i_1, \ldots, i_\chnum \}$ and appropriate values $j_1, \ldots, j_\chnum$.
For example, in \Cref{fig:treedp}, 
$\sol_k(s_1, 4) = s_1 \tree [\sol_{j_1}(s_{4}, 5), \sol_{j_2}(s_{4}, 6) ]$ for some values $j, j_1, j_2$.
Notice that this definition matches the one in \Cref{sec:RE} for $|\Ch(s_c)| = 1$ and
since $S_0$ always has a single child $S_1$, we have that $\sol_k(s_0) = \sol_k(s_0, 1)$ for all values of $k$.

A state $s \in S_p$ maintains one data structure per branch $c \in \Ch(s)$ for storing and comparing solutions $\sol_j(s, c)$.
At the beginning of the algorithm, we initialize it as 
$\Choices_1(s, c) \!\! = \!\! 
\{ s \tree [ \sol_1(s_c, i_1), \ldots, \sol_1(s_c, i_\chnum) ] 
\: | \: 
(s, s_c) \in \DecR_{pc} \}$. 
To process a \texttt{next} call, we pop the best solution from the data structure but unlike DP, we now have to replace it with more than one new candidates.
To compute the \texttt{next} of 
$\sol_j(s, c) = s \tree [ \sol_{j_1}(s_c, i_1), \ldots, \sol_{j_\chnum}(s_c, i_\chnum) ]$,
we have to consider as new candidates all the following solutions:
\begin{align*}
& s \tree [ \sol_{j_1 + 1}(s_c, i_1), \ldots, \sol_{j_\chnum}(s_c, i_\chnum) ] \\
&\qquad\vdots \\
& s \tree [ \sol_{j_1}(s_c, i_1), \ldots, \sol_{j_\chnum + 1}(s_c, i_\chnum) ]
\end{align*}
There are two problems associated with this.
First, we could have up to $\stages$ children, hence up to $\stages$ new candidates and each one could have size (i.e., number of pointers) up to $\stages$. 
Thus, in order to create them we would have to pay $\bigO(\stages^2)$ delay.
Second, it is easy to see that this process creates duplicates.

An elegant way to address both issues is to notice that from a specific parent node (e.g. $s_4$) all possible solutions belong to the Cartesian Product space formed by the sub-solutions of its children.
Therefore, we need to apply a ranked enumeration algorithm that enumerates this space of solutions with low delay and without duplicates.
We apply \RPDP which does not need to precompute all the 
elements from the beginning of the algorithm (we do not want to materialize all the children solutions from the start) given that they are accessed in a sorted order.
This is equivalent to running \EAGER over the Cartesian product of children solutions, except
that we do not need to sort to find the successors -- the sorted order is guaranteed by how \RECURSIVE pulls lower ranked solutions in-order.

As a result, \REDP behaves similar to the (path) DP case for nodes with a single child,
but similar to \RPDP when encountering branches. In the extreme case of star queries
(where a root stage is directly connected to all leaves),
\RECURSIVE degenerates to an \RPDP variant.

\subsection{DP over a Union of Trees (UT-DP)}
\label{sec:utdp}

We define a \emph{union of T-DP problems} as a 
set of T-DP problems where a solution to 
any of the T-DP problems is a valid solution to the UT-DP problem.
Thus, we are given a set of $u$ functions
$F = \big\{ f^{(i)} \big\}$, 
each defined over a solution space $\sol^{(i)}, i \in \N^u$.
The UT-DP problem is then to \emph{find the minimum solution across all T-DP instances}.

\introparagraph{Changes to ranked enumeration}
The necessary changes to any of our any-k algorithms are now straightforward:
We add one more top-level data structure $\Union$ that maintains the
last returned solution of each separate T-DP algorithm in a single priority queue.
Whenever a solution is popped from $\Union$, it gets replaced by the
next best solution of the corresponding T-DP problem.

\subsection{Cyclic Queries}\label{sec:cycles}

Recent work on cyclic join queries indicates that a promising approach is to reduce
the problem to the acyclic case via a \emph{decomposition algorithm}~\cite{GottlobGLS:2016}.
Extending the notion of
tree decompositions for graphs~\cite{RobertsonS:1986}, hypertree
decompositions~\cite{GottlobLS:2002} organize the
relations into ``bags'' and arrange the bags into a
tree~\cite{DBLP:conf/sebd/Scarcello18}. Each decomposition is associated with
a width parameter that captures the degree of acyclicity in the query
and affects the complexity of subsequent evaluation: smaller width implies
lower time complexity.
\emph{Our approach is orthogonal to the decomposition algorithm used and it adds
ranked enumeration capability virtually ``for free.''}

The state-of-the-art decomposition algorithms rely on the submodular width $\subw(Q)$
of a query $Q$. Marx~\cite{Marx:2013:THP:2555516.2535926} describes an algorithm that runs
in $\O(f(|Q|)n^{(2+\delta)\subw(Q)})$ for $\delta > 0$ and a function $f$ that depends
only on query size. 
\PANDA~\cite{khamis17panda} runs in
$\bigO(f_1(|Q|)n^{\subw(Q)}(\log n)^{f_2(|Q|)})$ for query-dependent functions
$f_1$ and $f_2$. Since this is an active research area,
we expect these algorithms to be improved and we believe our framework is general
enough to accommodate future decomposition algorithms. Sufficient conditions for
applicability of our approach and for achieving optimal delay are, respectively,
(1) the full output of $Q$ is the union of the output produced by the trees
in the decomposition and
(2) the number of trees depends only on query size $|Q|$.
Both are satisfied by current decompositions and it is hard to imagine how
this would change in the future. 

We can execute any decomposition algorithm almost as a blackbox to create
a union of acyclic queries to which we then apply our UT-DP framework.
\emph{However, there are subtle challenges:} For correctness, we have to
(1) properly compute the weights of tuples in the bags (i.e., tree nodes) and
(2) deal with possible output duplicates when a decomposition creates
multiple trees. For (1), we slightly modify the decomposition algorithm
to track the lineage for bags at the schema level: We only need to know from which input
relation a tuple originates and if that relation's weight values had already been accounted
for by another bag that is a descendent in the tree structure.

For (2), note that if all output tuples have distinct weights,
then an output tuple's duplicates will be produced by
our any-k algorithm one right after the other, making it trivial to eliminate them on-the-fly.
Since the number of trees depends only query size $|Q|$, total delay induced by duplicate filtering
is $\O(1)$ (data complexity). When different output tuples can have the same weight,
we break ties using lexicographic ordering on their witnesses
, as we describe in \cref{sec:ties}.

\subsubsection{Simple Cycle Decomposition}
\label{sec:cycle_details}

For $\stages$-cycle queries $Q_{C\stages}$ we use the standard
decomposition~\cite{Khamis0S16:pandaArxiv,DBLP:conf/sebd/Scarcello18},
which was pioneered
by Alon et al.~\cite{Alon1997} in the context of graph-pattern queries. It does
not produce output duplicates and achieves $\O(n^{2-1/\lceil \ell/2 \rceil})$ for $\TTF$.
On the other hand, for a worst-case optimal join algorithm such as \NPRR~\cite{ngo2018worst} or \GenJoin~\cite{Ngo:2014:SSB:2590989.2590991},
$\TTF$ is $\O(n^{\ell/2})$.
We show in
\Cref{sec:NPRR_and_TTF}
that those algorithms can indeed not be modified to overcome this problem.

\begin{figure}[h]
    \centering
    \begin{subfigure}{0.31\linewidth}
        \centering
        \includegraphics[page=1, scale=0.38]{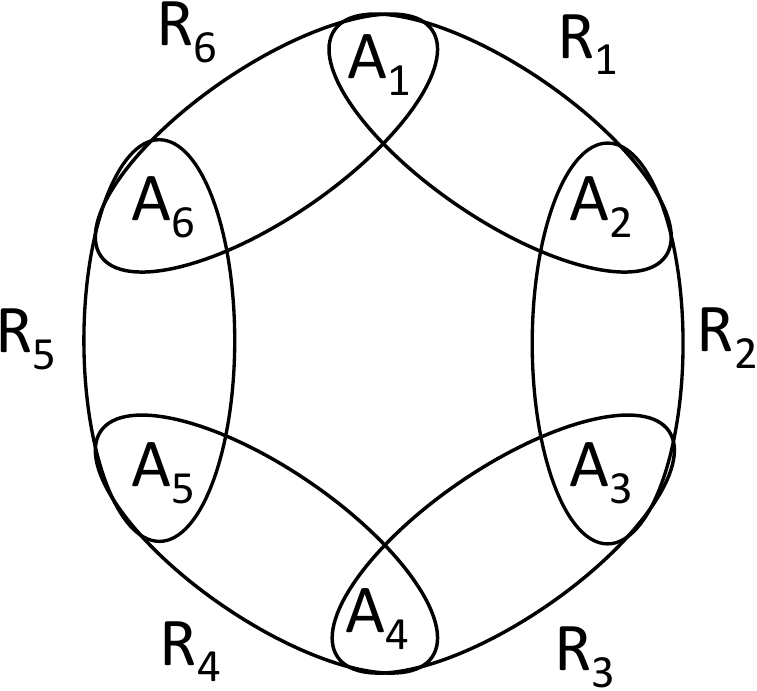}
        \caption{Query $Q_{C6}$}\label{fig:6cycle}
    \end{subfigure}%
    \begin{subfigure}{0.33\linewidth}
        \centering
        \includegraphics[page=2, scale=0.37]{figs/6cycle.pdf}
        \caption{Heavy decomp.}\label{fig:heavy_dec}
    \end{subfigure}%
    \begin{subfigure}{0.34\linewidth}
        \centering
        \includegraphics[page=3, scale=0.38]{figs/6cycle.pdf}
        \caption{All-light decomp.}\label{fig:light_dec}
    \end{subfigure}%
    \caption{Simple cycle of length 6 and two decompositions.}
    \label{fig:cycle}
\end{figure}

We illustrate with 6-cycle query $Q_{C6}$, depicted in \cref{fig:6cycle}.
(Here $x_i$ in
{{\cref{ex:path_cycle}}}
is replaced by $A_i$ to better distinguish
the concrete 6-cycle from the general $\ell$-cycle case.)
First, we horizontally partition each relation $R_i$ into $R_{iH}$ and $R_{iL}$ according to
whether the tuples are {\em heavy} or {\em light}: $R_{iH}$ receives all heavy
tuples; $R_{iL}$ the others (light ones).
A tuple $t$ in relation $R_i$ is heavy~\cite{Alon1997} iff value
$t.A_i$ occurs at least $n^{2/\stages}=n^{1/3}$ times in column $R_i.A_i$.
Then the maximum number of distinct heavy values in a column is at most $n^{1-2/\stages}=n^{2/3}$.
We create $\ell+1 = 7$ database partitions:
{
\renewcommand{\tabcolsep}{0.5mm}
\begin{center}
\begin{tabular}{@{\hspace{40pt}} cccccccc}
$T_1 =$   & $\{R_{1H},$ & $R_{2},$ & $R_{3},$ & $R_{4},$ & $R_{5},$ & $R_{6} \}$ \\ 
$T_2 =$   & $\{R_{1L},$ & $R_{2H},$ & $R_{3},$ & $R_{4},$ & $R_{5},$ & $R_{6} \}$ \\ 
         & \multicolumn{2}{c}{$\vdots$} \\
$T_6 =$   & $\{R_{1L},$ & $R_{2L},$ & $R_{3L},$ & $R_{4L},$ & $R_{5L},$ & $R_{6H} \}$ \\ 
$T_7 =$   & $\{R_{1L},$ & $R_{2L},$ & $R_{3L},$ & $R_{4L},$ & $R_{5L},$ & $R_{6L} \}$ \\ 
\end{tabular} 
\end{center}
}

It is easy to verify that each output tuple will be produced by exactly one partition.
The first $\ell=6$ partitions use a ``heavy'' tree decomposition where the cycle is
``broken'' at the heavy attribute. For instance, $A_1$ is the heavy attribute for $R_{1H}$;
the resulting tree is shown in \cref{fig:heavy_dec}. Each tree node is a bag whose content
is materialized in time 
$\bigO(n^{2-2/6}) = \bigO(n^{5/3})$
by appropriately joining the corresponding relations. Consider top
bag $(A_1, A_2, A_3)$ derived from relations $R_1$ and $R_2$. Since $R_{1H}$
contains at most $n^{2/3}$ distinct values, we can compute the bag with a simple
nested-loop join. It goes through all pairs $A_1-(A_2, A_3)$ of distinct heavy values of
$A_1$ in $R_{1H}$ and tuples $(A_2, A_3)$ in $R_2$. Since there are at most $n^{2/3}$
distinct heavy $A_1$-values and $n$ tuples in $R_2$, there are $\bigO(n^{5/3})$ such pairs.
For each pair we can verify in $\bigO(1)$ if the corresponding $(A_1, A_2)$ combination
exists in $R_{1H}$. The other bag computations and heavy decompositions are
analogous.

For $T_7$, which only contains light partitions, we use a different
``all-light'' tree decomposition shown in \cref{fig:light_dec}.
It materializes each bag with a join chain: For each tuple in one ``endpoint'' relation,
find all matches in the next relation, and so on. Consider $(A_1, A_2, A_3, A_4)$,
which is derived from $R_{1L}$, $R_{2L}$, and $R_{3L}$. For each tuple $t_1 \in R_{1L}$, we
find all matches in $R_{2L}$, then join with $R_{3L}$. There are $\O(n)$ tuples
in $R_{1L}$, but since all relations are light, each of them joins with at most
$n^{1/3}$ in the next relation. Hence total complexity for materializing
bag $(A_1, A_2, A_3, A_4)$ is
$\O(n \cdot n^{1/3} \cdot n^{1/3}) = \O(n^{5/3})$.

In general, the tree decomposition for an $\ell$-cycle query produces a union of $\ell+1$ trees, out of
which $\ell$ use the heavy decomposition and 1 uses the light one. By setting the heavy-light threshold to
$n^{2/\ell}$, we can materialize all bags of all trees in time $\bigO(n^{2-2/\ell})$. Note
that the number of tuples in a bag is $\bigO(n^{2-2/\ell})$.
Any such union of trees can be handled by our UT-DP framework.

\subsection{Putting everything together}
\label{sec:main_res}

Our main result follows from the above analysis when using \HEAP for the acyclic CQ base case:

\begin{theorem}\label{TH:MAIN}
Given a decomposition algorithm $\mathcal{A}$ that takes time $T(\mathcal{A})$ and space $S(\mathcal{A})$,
ranked enumeration of the results of a full conjunctive query can be performed with 
\new{
$\TTF = \O(T(\mathcal{A}))$,
$\Del(k) = \O(\log k)$,
}
and $\MEM(k) = \O(S(\mathcal{A}) + k)$ in data complexity.
\end{theorem}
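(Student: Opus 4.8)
The plan is to assemble the result from the three layers developed in the paper, verifying that each layer preserves the target bounds $\TTF = \O(T(\mathcal{A}))$, $\Del(k) = \O(\log k)$, and $\MEM(k) = \O(S(\mathcal{A}) + k)$ in data complexity. First I would handle the acyclic base case. Given a full acyclic CQ, the GYO reduction produces a join tree in $\O(|Q|)$, and after a semi-join pass (Yannakakis-style reduction, $\O(n)$ by the linear-lookup data structure assumption of \Cref{sec:complexity_measures}) we obtain a T-DP instance as in \Cref{sec:tdp}. By \Cref{TH:T-DP} the bottom-up phase of \cref{eq:TDP_recursion} correctly computes the optimal solution, and running \HEAP on top of it gives ranked enumeration. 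From \Cref{tab:complexity_dp} (and the T-DP discussion noting the complexity analysis carries over unchanged), \HEAP has $\TTF = \O(\ell n) = \O(n)$ in data complexity, $\Del(k) = \O(\log k + \ell) = \O(\log k)$, and $\MEM(k) = \O(\ell n + k\ell) = \O(n + k)$. So the acyclic case already meets the target with $T(\mathcal{A}) = \O(n)$, $S(\mathcal{A}) = \O(n)$.

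Next I would lift this to cyclic queries via the UT-DP framework of \Cref{sec:utdp,sec:cycles}. Run the decomposition algorithm $\mathcal{A}$; by the sufficient conditions stated in \Cref{sec:cycles}, it produces a union of $u$ acyclic CQs whose outputs union to the output of $Q$, where $u$ depends only on $|Q|$ (hence $u = \O(1)$ in data complexity), at cost $T(\mathcal{A})$ time and $S(\mathcal{A})$ space; moreover each materialized bag has $\O(n^{\omega})$ tuples so that the per-tree input size is absorbed into $T(\mathcal{A})$. We slightly modify $\mathcal{A}$ to track tuple lineage at the schema level so that each input relation's weight is counted exactly once across the bags of a tree (the fix described in \Cref{sec:cycles}); this is a constant-factor, schema-level bookkeeping step and does not change asymptotics. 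Build one \HEAP-based T-DP enumerator per tree --- each with $\TTF = \O(T(\mathcal{A}))$ (since materializing the bags dominates, and DP over bags of size $\O(n^\omega)$ is linear in that), $\Del(k) = \O(\log k)$, $\MEM = \O(S(\mathcal{A}) + k)$ --- and add the top-level $\Union$ priority queue of \Cref{sec:utdp} holding one current solution per tree. A global \texttt{next} pops the minimum from $\Union$ (size $u = \O(1)$, so $\O(1)$ per operation), outputs it, and asks the corresponding per-tree enumerator for its next solution; duplicate output tuples across trees are filtered on-the-fly as in \Cref{sec:cycles} (identical-weight duplicates appear consecutively, so this adds $\O(1)$ delay; ties across distinct output tuples are broken by lexicographic ordering on witnesses per \Cref{sec:ties}). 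Pre-processing cost is $u \cdot \O(T(\mathcal{A})) = \O(T(\mathcal{A}))$; delay is $u \cdot \O(\log k) + \O(1) = \O(\log k)$; memory is $\O(S(\mathcal{A})) + u\cdot \O(k) = \O(S(\mathcal{A}) + k)$.

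The main obstacle --- and the step needing the most care --- is the duplicate-elimination argument together with the weight-accounting modification: I must argue that restricting each per-tree enumerator to count tuple weights only once (via the lineage tracking) still yields a well-defined selective-dioid weight on each output tuple that agrees across trees, so that consecutive-duplicate filtering is both correct and incurs only $\O(1)$ amortized delay; for the tie-breaking case one relies on the lexicographic-on-witness construction of \Cref{sec:ties} to restore the "duplicates appear consecutively" property, which is what makes the $\O(\log k)$ delay survive. A secondary subtlety, flagged in the excerpt, is that the tropical semiring uses additive inverses to simplify the algorithms; I would note that the result still holds without inverses, deferring the minor modifications to \Cref{sec:inverse}. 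Everything else is a direct composition of \Cref{TH:T-DP}, the \HEAP row of \Cref{tab:complexity_dp}, and the $\Union$ construction, so the proof is essentially an accounting argument once these two points are settled.
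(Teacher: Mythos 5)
Your proposal is correct and follows essentially the same route as the paper's own proof: reduce to T-DP via \Cref{TH:T-DP}, run \HEAP to get the acyclic bounds from \Cref{tab:complexity_dp}, then lift to cyclic queries through the UT-DP construction with the top-level $\Union$ priority queue, the lexicographic tie-breaking of \cref{sec:ties} to force duplicates to arrive consecutively, and the observation that all query-dependent factors (number of trees, arity, $\oplus/\otimes$ cost, the extra $\O(\stages^2)$ term without inverses) vanish under data complexity. You correctly flag the same delicate points the paper handles --- weight accounting across bags, consecutive-duplicate filtering, and the inverse-element subtlety deferred to \cref{sec:inverse} --- so no gap remains.
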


\begin{proof}
In all cases we use the \HEAP algorithm.

First, consider the case of path queries and recall that we had made two assumptions in \cref{sec:complexity} for simplicity: (1) that the arities of the relations are bounded, thus $|Q| = \stages$ and (2) that
the operations $\oplus$ and $\otimes$ of the selective dioid take $\gamma = \O(1)$.
Extending the analysis of \cref{sec:complexity} to the general case of unbounded arities
and $\gamma = f(|Q|)$, we get 
\new{
$\TTF = \O(f(|Q|)n)$,
$\Del(k) = O(f(|Q|) (\log k + |Q|))$,
}
and $\MEM(k) = \O(|Q|n + k|Q|)$.
Thus, 
\new{
$\TTF = \O(n)$,
$\Del(k) = O(\log k)$,
}
and $\MEM(k) = \O(n + k)$ in data complexity.

For tree queries, 
\new{
$TTF$ and $\MEM(k)$
}
stay the same, however 
$\Del(k)$
has an additional term that is quadratic in $\stages$ in the absence of the inverse element as we discuss in \cref{sec:inverse}.
Still, the data complexity remains the same as above.

For cyclic queries, first we apply the decomposition algorithm $\mathcal{A}$ to obtain a set of acyclic queries $\boldsymbol{\mathcal{Q}}$.
The number of acyclic queries we get is $g(|Q|)$ (according to our 
assumptions on $\mathcal{A}$ in \cref{sec:cycles}) and also note that the ``bags'' (i.e., the derived input relations) in each acyclic query can be at most the number of attributes $m$~\cite{khamis17panda}. 
Then, we run our UT-DP framework on top of $\boldsymbol{\mathcal{Q}}$ using \HEAP.
The top-level priority queue $\Union$ takes $\O(\log g(|Q|))$ to pop an element or $\O(1)$ in data complexity and then $\O(\log k)$ to pull the next result from the corresponding tree.
To avoid duplicates, we apply our construction of \cref{sec:ties} which imposes a lexicographic order and thus, increases the complexity of pulling results from each acyclic query by an $\O(|Q|)$ factor.
Also, to filter the duplicates we have to spend an additional $\O(g(|Q|))$ factor in delay since the number of duplicates cannot exceed the number of acyclic queries. 

Overall, 
\new{
$\TTF$ is $\O(T(\mathcal{A})$ and $\Del(k)$ is $\O(\log k)$ in data complexity.
}
For the space consumption, note that the total size of the derived input relations of $\boldsymbol{\mathcal{Q}}$ is bounded by $S(\mathcal{A})$ and our framework only adds an $\O(k)$ term in data complexity.
\end{proof}

\section{Ranking Function}
\label{sec:ranking_functions}

We now look deeper into the ranking functions that our framework supports.

\subsection{Attribute weights}
\label{sec:attribute_weights}

In order to keep our formalism clean and easy to follow, we focused only on weights on tuples.
It is however straightforward to also handle weights on attributes
by adding unary tables with weights on single columns. We illustrate next.

\begin{example}[Attribute weights]
	Consider the query $Q(x,y) \datarule R(x,y)$
	over a database $R(A,B)$ with weight function
	$w_R : r \in R \rightarrow \R^+$ on tuples, and two weight functions
	$w_A : a \in \adom(A) \rightarrow \R^+$, and
	$w_b : b \in \adom(B) \rightarrow \R^+$ on attributes.
	The problem can then be translated into one with only weight functions on tuples
	by introducing two new relations
	$S(A) = \adom(A)$ and $T(B) = \adom(B)$ with associated weight functions
	$w_S : s \in S \rightarrow \R^+$, and
	$w_T : t \in T \rightarrow \R^+$ and translated query
	$Q'(x,y) \datarule R(x,y), S(A), T(B)$.
\end{example}

\subsection{On the existence of the inverse (groups vs.\ monoids)}
\label{sec:inverse}

When we presented and analysed the algorithms in
\cref{sec:DPalgorithms},
we assumed for simplicity the existence of an inverse element for the $\otimes$ operator of the selective dioid $(W, \oplus, \otimes, \0, \1)$.
We now discuss what happens in the absence of that inverse element.
We start with some definitions.

\introparagraph{The inverse of an operation}
An \emph{Abelian group} is a commutative monoid $(W, \otimes, \1)$
for which there exists an \emph{inverse} for each element.
More formally, for each $x$ in $W$, there is an inverse element $x'$ in $W$ such that $x \otimes x^{-1} = x^{-1} \otimes x = \1$.
We also write 
$y \oslash x$ 
as short form for 
$y \otimes x^{-1}$ (i.e., ``$\oslash y$" 
composes $y$ with the inverse of $x$).

\begin{savenotes} %
\begin{example}[Groups vs. Monoids]
The archetypical Abelian group is $(\R, +, 0)$, i.e.\ the real numbers with addition.
An example commutative monoid that is not a group (and thus has no inverse in general) is logical conjunction: 
$(\{0, 1\}, \wedge, 1)$.\footnote{We write $1$ for \textsf{true} and $0$ for \textsf{false}.}
Here, $1$ is the identity 
element because 
$x \wedge 1 = 1 \wedge x = x$ for $x \in \{0, 1\}$.
However, for $0$ there is no inverse $0'$ such that $0 \wedge 0' = 0' \wedge 0 = 1$.
Another operation that has no inverse is the minimum: 
$(\R, \min, \infty)$. 
Here $\infty$ is the identity element because 
$\min (x, \infty) = \min (\infty, x) = x$ for $x \in \R$.
However, for no $x$ there is an inverse $x'$ such that $\min(x, x') = \infty$.
\end{example}
\end{savenotes}

In general, the inverse element allows us to 
perform calculations that would be otherwise impossible.
Thus, it can be used to short-circuit long calculations
by reusing prior results.

\begin{example}[Benefit of inverse elements]
Consider a commutative monoid $(W, \otimes, \1)$ and the composition $x \otimes y = z$.
Assume we are given $z$ and $y$ and would like to calculate $x$. 
Then this is only possible in general, if each element has an inverse; in other words, if the monoid is actually a group.
To illustrate this issue, consider first the real numbers with addition
$(\R, +, 0)$
and assume
$(x,y,z) = (1, 2, 3)$. 
Then we can calculate $x=1$ from $z=3$ and $y=2$ as $x = z + y' = 3 + (-2) = 1$.
Next consider logical conjunction $(\{0, 1\}, \wedge, 1)$ with 
$(x,y,z) = (1, 0, 0)$. Then we cannot calculate $x=1$ from $z=0$ and $y=0$ (both $x=1$ or $x=0$ are possible).
Similarly, consider minimum $(\R, \min, \infty)$ with 
$(x,y,z) = (3, 2, 2)$. Then we cannot calculate $x=3$ from $z=2$ and $y=2$ ($x$ could be any value in $[2, \infty]$).
\end{example}

\introparagraph{Ranked enumeration without an inverse}
In the context of our algorithms, the inverse element is not a hard requirement. However, it can help simplify and speed up certain variants of ranked numeration.
First, notice that \RECURSIVE never uses an inverse since it always constructs solutions by appending one state to a suffix or a list of subtrees
(see \Cref{rea_line:init,rea_line:insert} of \Cref{alg:rea}).
Therefore the cost of the solution can easily be calculated by applying $\otimes$.

For the \emph{\RPDP algorithms over T-DP}, 
we have to make a minor adjustment which will incur an additional $\O(\stages^2)$ delay term in the complexities presented in \Cref{tab:complexity_dp}. 
To illustrate this, we use the terminology of \Cref{alg:dp-anyk}.
First, notice that in the path case the inverse element is not needed. 
In \Cref{line:newCand},
a new candidate is inserted into the priority queue with weight 
$\textrm{solution.prefixWeight} \otimes w(\textrm{tail}, s) \otimes \solW_1(s)$.
Intuitively this means that the ``future cost'' of the candidate (when optimally expanded in a solution)
is the weight of its prefix composed with the weight of the new decision 
and with the optimal weight from there onward.
Thus, we are able in $\O(1)$ to calculate the weight it will have if we expand it without actually spending $\O(\stages)$ to expand it.
In T-DP this calculation is not possible because the weight of the optimal extension from $s$ (which was the $\solW_1(s)$ term in the path case) 
involves subtrees that are not in $\lceil s \rceil$, thus it
is not available at state $s$.

One way to circumvent this is to use the inverse and still get an $\O(1)$ computation per new candidate.
Let $\textrm{prevWeight}$ be the weight of the prefix we popped from $\Cand$ in the current iteration.
Then, the weight of the new candidate is $\textrm{prevWeight} \oslash w(\textrm{tail}, \textrm{last}) \oslash \solW_1(last) \otimes w(\textrm{tail}, s) \otimes \solW_1(s)$.
Intuitively, this means that to compute the weight of the new candidate we 
``subtract'' the old decision weight and the optimal subtree weight of its target state and we ``add'' the new ones. 
If we don't have the inverse element, then the above computation is not possible;
instead, we expand each of the $\O(\stages)$ new candidates before inserting them into the $\Cand$ priority queue and traverse each one of them to compose the decision weights (as in \Cref{eq:cost_treeDP}).
This costs $\O(\stages^2)$ in total because we have $\O(\stages)$ candidates and each one has $\O(\stages)$ size.

\subsection{Tie-breaking the output}
\label{sec:ties}

We now elaborate on how to break ties between result weights consistently.
This is a key element for handling cyclic queries with existing decomposition algorithms.
Recall from \Cref{sec:cycles} that we could use a decomposition  
(e.g.\ \PANDA~\cite{khamis17panda}) that generates 
a set of trees whose outputs are not necessarily disjoint. 
Thus the same result tuple could potentially be produced by multiple trees. 
It is easy to detect and remove those duplicates
if they arrive in consecutive order (the step is then linear in number of trees, but constant in data complexity).
This consecutive arrival is guaranteed if there are no ties in the weights of output tuples.
If there are ties, however, the arrival between identical output tuples could be in the order of number of output tuples produced so far.
To see why, imagine an extreme scenario where all the output tuples have the same weight and duplicates arrive in arbitrary order; 
in that case, 
the delay between consecutive results could be in the order of $k$, i.e.\ in the order of the number of already seen output tuples. 
For instance, assume 5 output tuples $\{r, s, t, u, v\}$ 
with the same weight, and assume 10 tree decompositions. 
Then a possible enumeration 
could be 
$(r, s, t, u, r, r, r, r, r, r, r, r, r, s, s, \ldots, t, t, u, \ldots)$.
To prevent this, we redefine our ranking function slightly 
so that it breaks ties in a consistent way 
and thus no two output tuples will have the same weight.
This guarantees again that only duplicates can have the same weight and hence all
the duplicates of a tuple arrive consecutively.

Intuitively, we add a second dimension to our ranking function that captures a lexicographic order on the input tuples. 
Whenever two weights are equal, the tie will be broken by the value of that extra dimension, 
ensuring that only identical results have the same overall weight. 
In the end, the true weight can be recovered by looking only at the first dimension of the weight function.

Given two partially ordered sets $A$ and $B$, 
the lexicographic order on the Cartesian product $A \times B$ is defined as
\begin{align*}
    (a,b) \leq (a',b') \textrm{ iff } a < a' \textrm{ or } (a = a' \textrm{ and } b \leq b')
\end{align*}
It is well known that this order is a total order if and only if the factors of the Cartesian product are totally ordered. 

However, what is less known is that this order is a total order even if the first factor is just a 
\emph{total preorder} (also called preference relation). 
Recall that a total preorder is reflective ($a \leq a$), 
transitive (if $a \leq b$ and $b \leq c$ then $a \leq c$), 
complete (for every $a,b$, $a \leq b$ or $b \leq a$), 
however not necessarily antisymmetric
($a \leq b$ and $b \leq a$ does not imply $a = b$). 
To illustrate this point, 
consider binary output tuples with domain $\{a,b,c,d,e\}$
under the attribute weight model (\cref{sec:attribute_weights}).
Assume a total preorder on the domain values with $a=b<c<d<e$.
Then the lexicographic order for three particular output tuples could be $(a,c) \rightarrow (b,d) \rightarrow (a,e)$.
Thus the three tuples imply a total order although the domain values of the first column do not.

We now show how to use this property to force our any-k enumeration to enumerate the same output tuple
with a delay that depends only on the query
even if we use a decomposition method (such as PANDA) that is not disjoint.
The key idea is to force that each output tuple will be enumerated \emph{consecutively}
even if there are ties, i.e.\ multiple output tuples with the same weights.

Assume that for an output tuple $r$, the original ranking function $w(r)$ was defined with operators $\oplus, \otimes$ and a total order $\leq$. 
Then the new ranking function is the Cartesian product $w'(r) = (w(r), \vec{id}(r))$, 
with $\vec{id}(r)$ capturing a lexicographic order as in 
\cref{sec:ranking},
and the following two operators:

\begin{enumerate}[nolistsep]

\item 
$w'(r_1) \otimes w'(r_2) = w'(r_1)$ iff 
$(w(r_1) \leq w(r_2) \wedge w(r_2) 
	\nleq w(r_1)) \vee (w(r_1) \leq w(r_2) \wedge w(r_2) \leq w(r_1) \wedge \vec{id}(r_1) 
	\preceq_\lex \vec{id}(r_2))$, 
else $w'(r_1) \otimes w'(r_2) = w'(r_2)$
and

\item
$w'(r_1) \otimes w'(r_2) = (w(r_1) \otimes w(r_2), \vec{id}(r_1) \otimes_\lex \vec{id}(r_2))$.
\end{enumerate}

As can be easily seen, the new ranking function is also defined over a \emph{selective dioid}, 
and our any-k algorithms immediately apply.

\subsection{Other examples of ranking functions}

Throughout the main paper we focused on the ranking function that consists of the
operators $\min$ (which is selective) and addition. These correspond to
the \emph{tropical semiring} $(\R_{\min}, \min, +, \infty, 0)$ 
with $\R_{\min} \define \R \cup \{\infty\}$, which is an instance of a \emph{selective dioid} (see the definition in \Cref{sec:ranking}). 
Under this perspective, Bellman's famous \emph{principle of optimality} discussed in \cref{sec:DPtoAnyK}
is a re-statement of the more general \emph{distributivity of addition over minimization}:
$\min(x+z, y+z) = \min(x,y)+z$.

Another example of a selective dioid that our approach works on is the Boolean semiring $(\{0, 1\}, \vee, \wedge, 0, 1)$,
where the disjunction is also selective.
Interestingly, our algorithms can also perform standard query evaluation by inverting
the order to $1 \leq 0$.
Since maintaining priority queues (or sorting) with $\{0, 1\}$ elements takes linear time,
it follows that our algorithms can enumerate answers to a 4-cycle query with $\TTF = \O(n^{1.5})$
and $\TTL = \O(n^{1.5} + |\mathrm{out}|)$. 
These match the best known algorithms for Boolean and full query evaluation which use the submodular width ($\subw(Q_{C4}) = 1.5$)
For worst-case output instances, i.e., $|\mathrm{out}| = n^2$
we also match the AGM bound, i.e., our algorithm, like \NPRR, is worst-case optimal.

Other examples of selective dioids that we can use are $(\R_{\max}, \max, +, -\infty, 0)$ with $\R_{\max} \define \R \cup \{-\infty\}$ or $(\R_{\geq 0}, \max, \times, 0, 1)$ with $\R_{\geq 0} \define [0, \infty)$.
The former finds the heaviest tuples or equivalently, the ``longest'' paths in a graph (according to the input weights).
The latter can be used to simulate bag semantics; if the weight of each input tuple reflects its multiplicity in the input relation, then by using that ranking function we first get the output tuple with the biggest multiplicity in the result and its output weight is that multiplicity.

\section{Experiments}
\label{sec:experiments}

Since asymptotic complexity only tells part of the story, we compare all algorithms
in terms of actual running time.

\introparagraph{Algorithms}
All algorithms are implemented in the same Java environment and use the same data structures
for the same functionality. We compare:
(1) \RECURSIVE representing the \REDP approach,
(2) \HEAP, (3) \LAZY~\cite{chang15enumeration}, (4) \EAGER, (5) \MIN~\cite{yang2018any}
representing the \RPDP approach,
and (6) \NAIVE, which computes the full result using the Yannakakis
algorithm~\cite{DBLP:conf/vldb/Yannakakis81} for acyclic queries and \NPRR~\cite{ngo2018worst} 
for cyclic queries, both followed by sorting.

\introparagraph{Queries}
We explore \emph{paths}, \emph{stars}, and \emph{simple cycles} over binary relations. 
The SQL queries are listed in 
\cref{sec:sql_queries}.
A path is the simplest acyclic query, making it ideal for studying core differences
between the algorithms.
The star represents a typical join in a data warehouse and by treating it as a single root
(the center) with many children, we can study the impact of node degree.
The simple cycles apply our decomposition method as described in \Cref{sec:cycles}.

\introparagraph{Synthetic data}
Our goal for experiments with {synthetic data} 
is to create input with regular structure that 
allows us to
identify and explain the core
differences between the algorithms.
For path and star queries, we 
create tuples 
with values uniformly sampled from the domain $\N_1^{n/10}$. 
That way, tuples join with $10$ others in the next relation, on average.
For cycles, we follow a construction by \cite{ngo2018worst} that creates a worst-case output:
every relation consists of $n/2$ tuples of the form $(0, i)$ and $n/2$ of the
form $(i, 0)$ where $i$ takes all the values in $\N_1^{n/2}$.
Tuple weights are real numbers uniformly drawn from $[0, 10000]$.

\introparagraph{Real Data}
We use two {real networks}.
In Bitcoin OTC \cite{Bitcoin_dataset2,Bitcoin_dataset1}, edges
have weights representing the degree of trust between users.
Twitter \cite{Twitter_dataset} edges model followership among users.
Edge weight is set to the sum of the PageRanks~\cite{brin98pagerank}
of both endpoints. 
To control input size, we only retain edges between users whose IDs are below a given
threshold.
Since the cycle queries are more expensive, we run them on a
smaller sample (TwitterS) than the path queries (TwitterL).
\Cref{tab:datasets} summarizes relevant statistics.
Note that the size of our relations $n$ is equal to the number of edges.

\begin{figure}[!tb]
\footnotesize
\renewcommand{\tabcolsep}{1.0mm}
\begin{center}
\begin{tabular}{|l|r|r|c|c|}
\hline
Dataset                                                 & Nodes & Edges & Max/Avg Degree & Weights \\ \hline
Bitcoin \cite{Bitcoin_dataset2,Bitcoin_dataset1}        & 5,881 & 35,592 & 1,298 / 12.1 & Provided  \\
TwitterS \cite{Twitter_dataset}                          & 8,000 & 87,687 & 6,093 / 21.9 & PageRank \\
TwitterL \cite{Twitter_dataset}                          & 80,000 & 2,250,298 & 22,072 / 56.3 & PageRank \\
\hline
\end{tabular} 
\caption{Datasets used for experiments with real data.}
\label{tab:datasets}
\end{center}
\vspace{-4mm}
\end{figure}

\begin{figure*}[h]

    \centering
    \begin{subfigure}{\linewidth}
        \centering
        \includegraphics[width=0.7\linewidth]{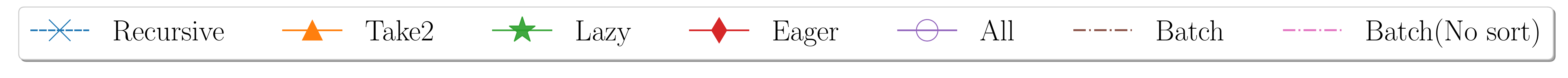}
    \end{subfigure}
    \vspace{-3mm}

    \begin{subfigure}{0.24\linewidth}
        \centering
        \includegraphics[width=\linewidth]{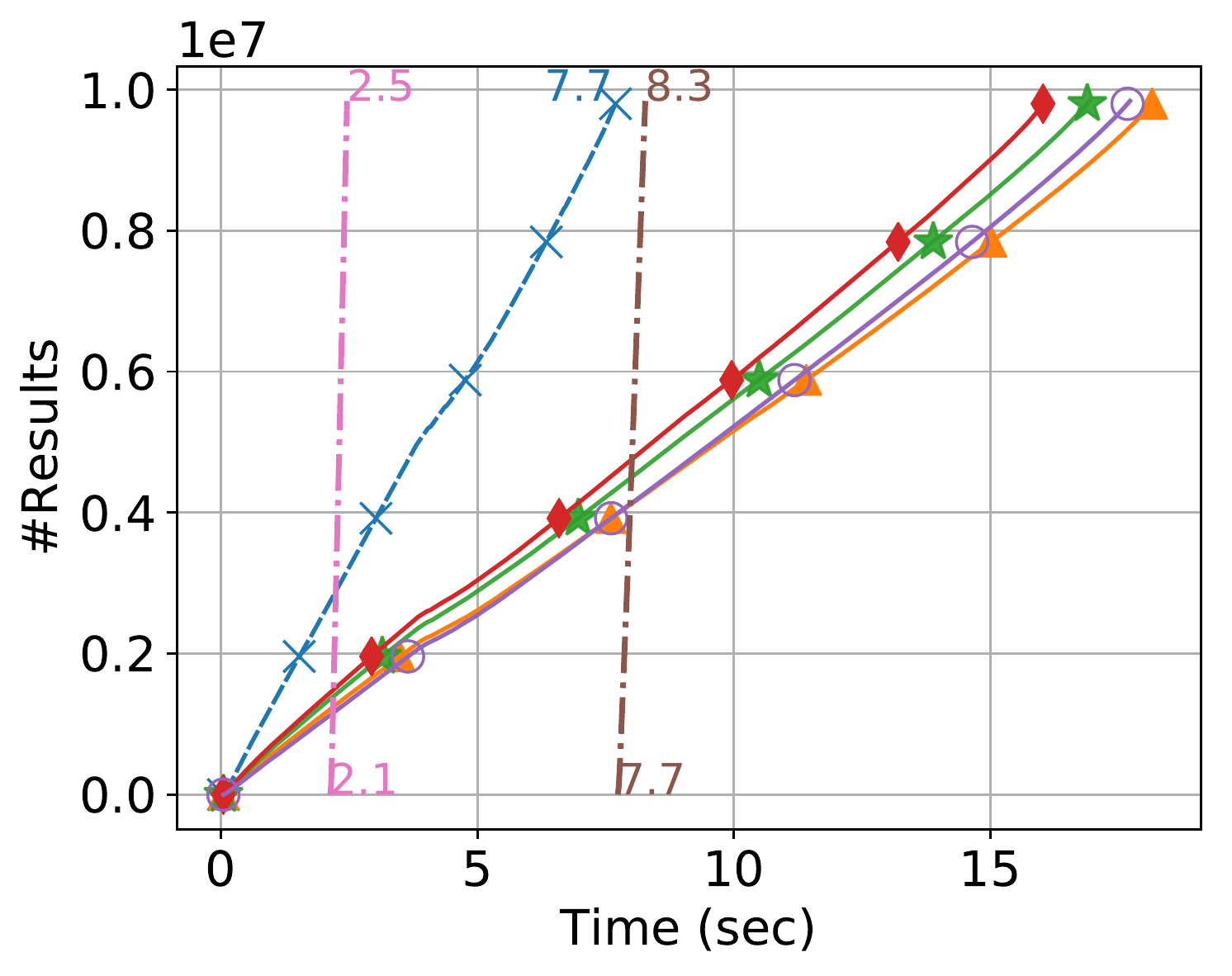}
        \caption{4-Path Synthetic\\($n\!=\!10^4$):\\All $\sim \!10^{7}$ results.}
		\label{exp:4path_syn_small}
    \end{subfigure}%
    \hfill
    \begin{subfigure}{0.24\linewidth}
        \centering
        \includegraphics[width=\linewidth]{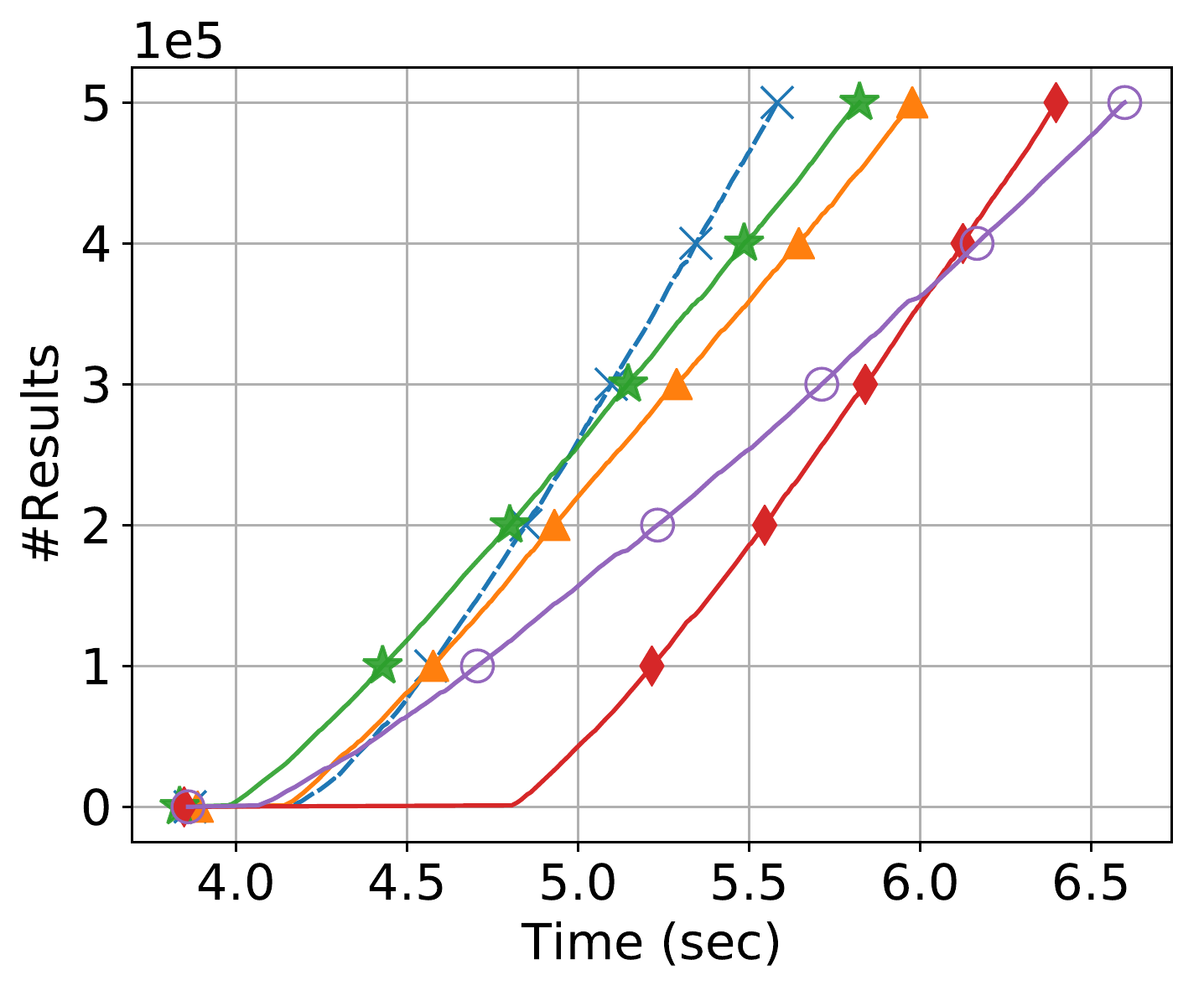}
        \caption{4-Path Synthetic\\($n\!=\!10^6$):\\Top $n/2$ of $\sim 6 \cdot 10^{9}$ results.}
		\label{exp:4path_syn_large}
    \end{subfigure}%
    \hfill
    \begin{subfigure}{0.24\linewidth}
        \centering
        \includegraphics[width=\linewidth]{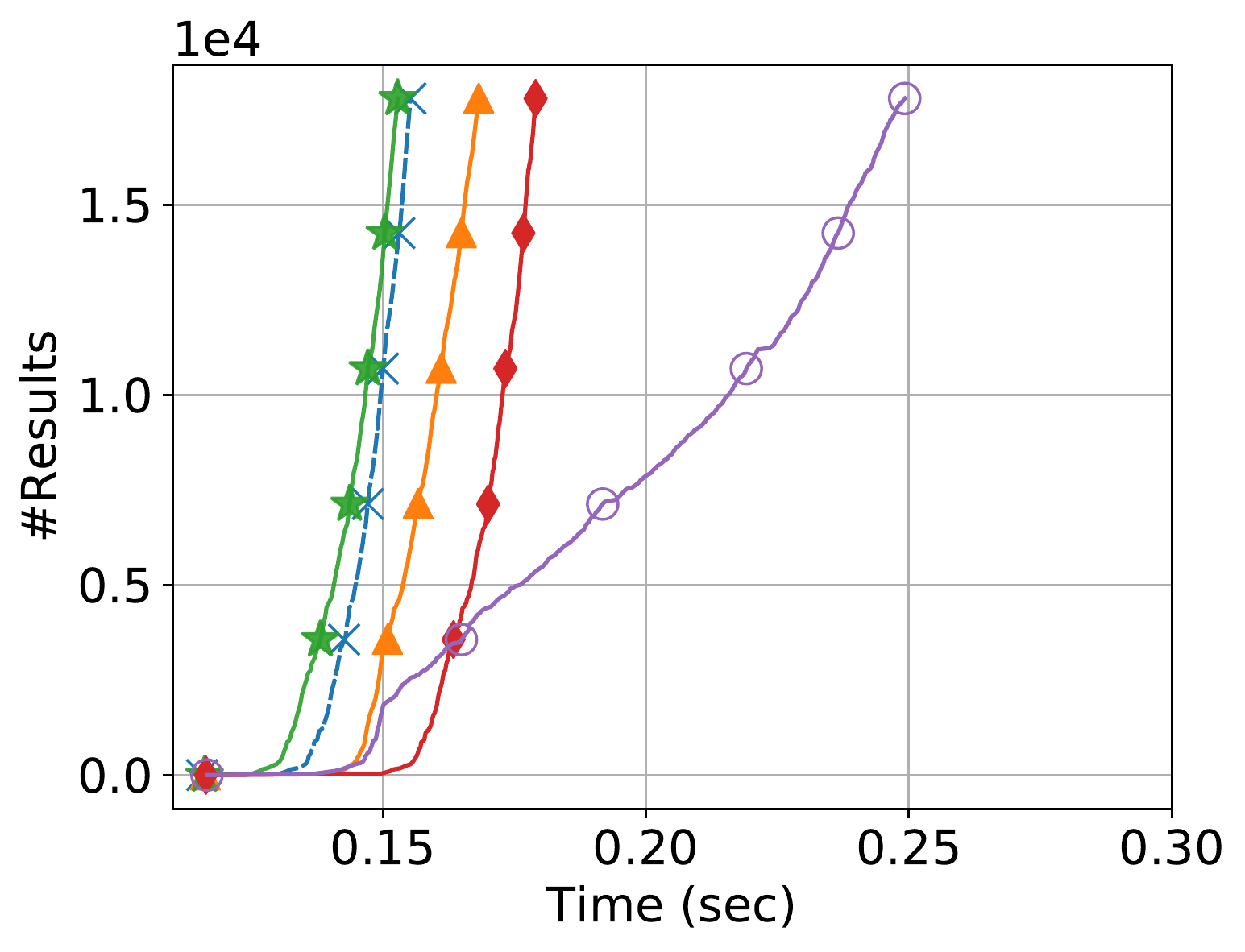}
        \caption{4-Path Bitcoin\\($n\!\sim\!3.6 \!\cdot\! 10^4$):\\Top $n/2$ of $\sim 4 \cdot 10^9$ results.}
		\label{exp:4path_bitcoin}
    \end{subfigure}%
    \hfill
    \begin{subfigure}{0.24\linewidth}
        \centering
        \includegraphics[width=\linewidth]{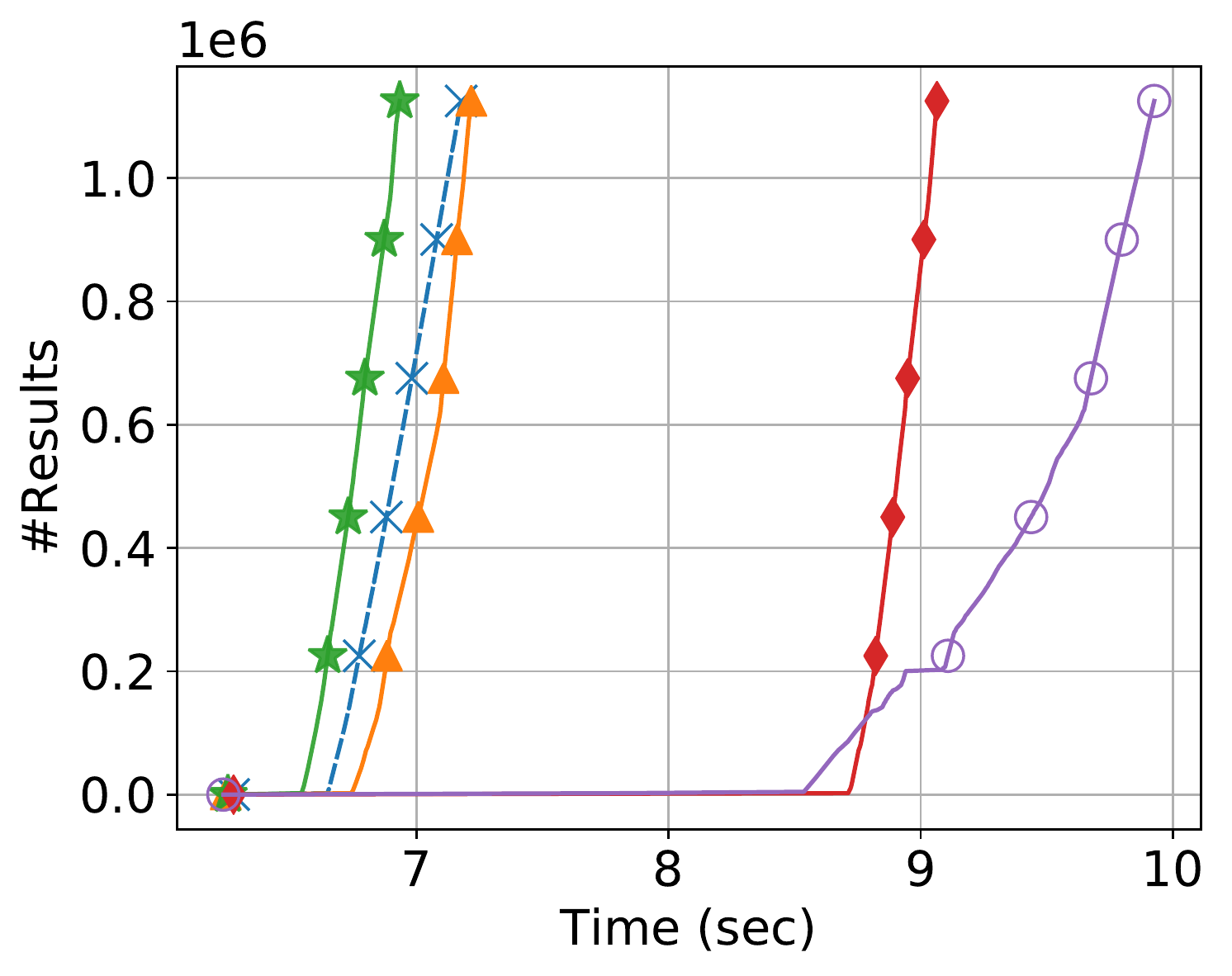}
        \caption{4-Path TwitterL\\($n\!\sim\!2.3 \!\cdot\! 10^6$):\\Top $n/2$ of $\sim \!10^{15}$ results.}
		\label{exp:4path_twitter}
    \end{subfigure}
    \vspace{-1mm}

    \begin{subfigure}{0.24\linewidth}
        \centering
        \includegraphics[width=\linewidth]{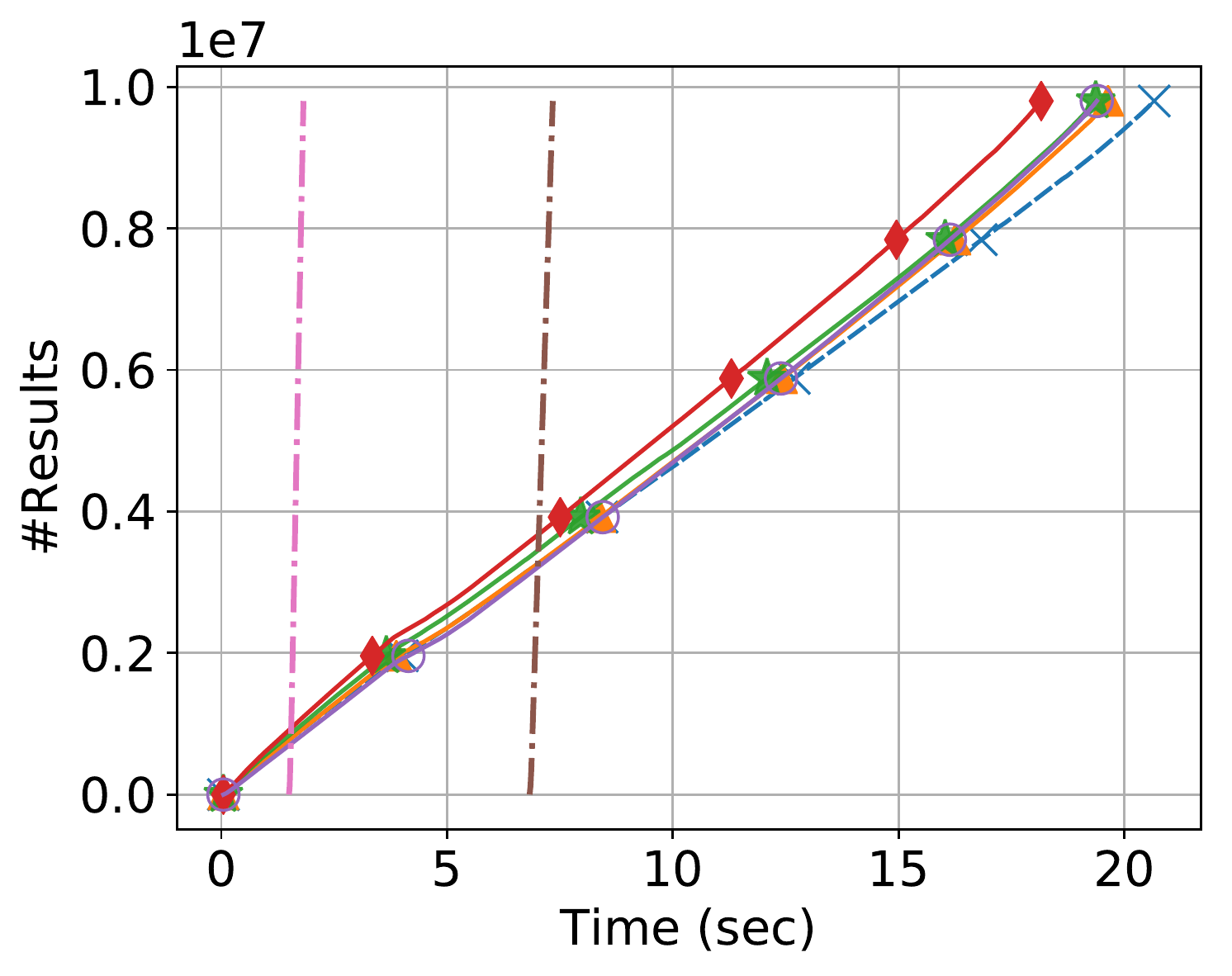}
        \caption{4-Star Synthetic\\($n\!=\!10^4$):\\All $\sim \!10^{7}$ results.}
		\label{exp:4star_syn_small}
    \end{subfigure}%
    \hfill
    \begin{subfigure}{0.24\linewidth}
        \centering
        \includegraphics[width=\linewidth]{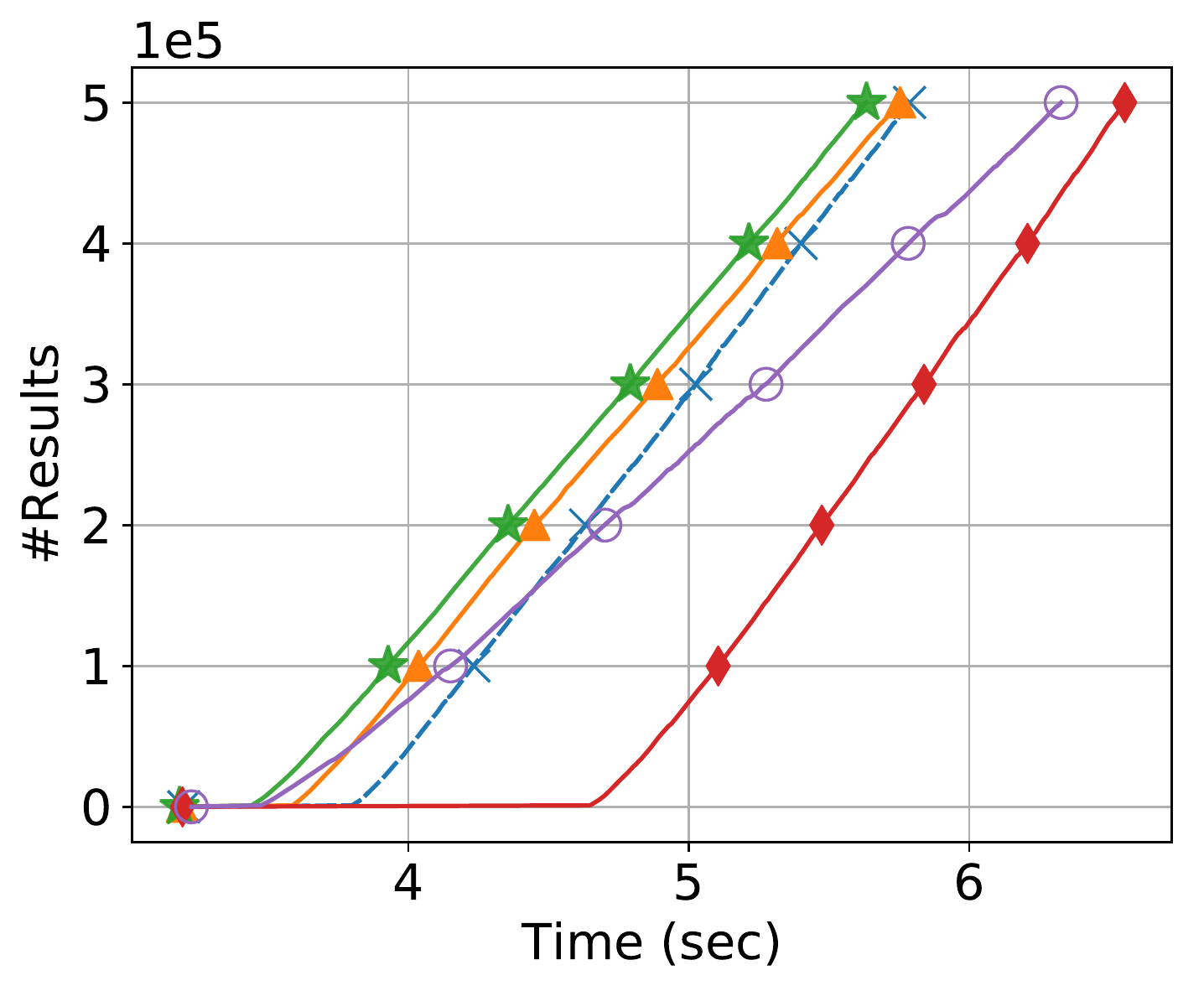}
        \caption{4-Star Synthetic\\($n\!=\!10^6$):\\Top $n/2$ of $\sim 6 \cdot 10^{9}$ results.}
		\label{exp:4star_syn_large}
    \end{subfigure}%
    \hfill
    \begin{subfigure}{0.24\linewidth}
        \centering
        \includegraphics[width=\linewidth]{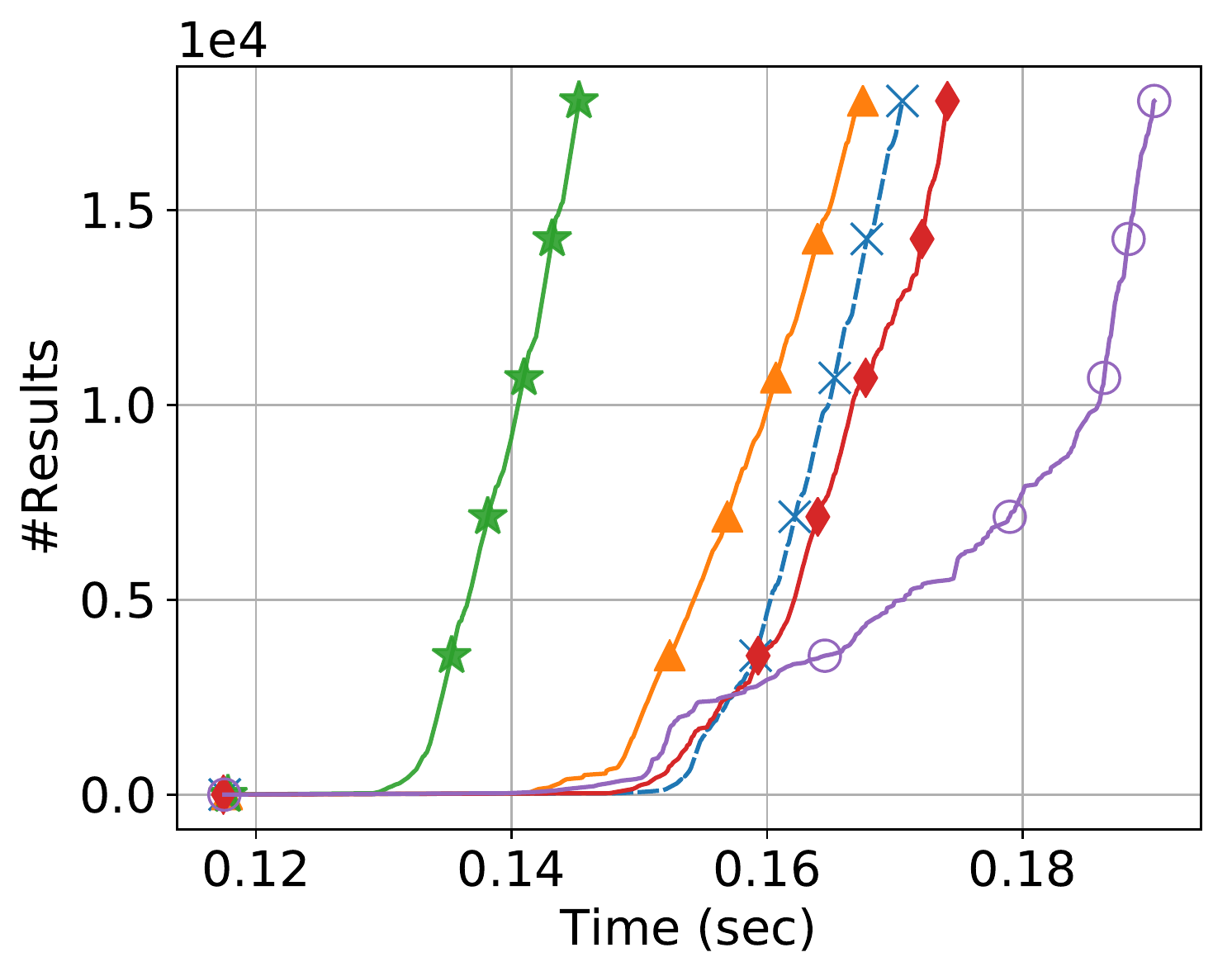}
        \caption{4-Star Bitcoin\\($n\!\sim\!3.6 \!\cdot\! 10^4$):\\Top $n/2$ of $\sim 2 \!\cdot\! 10^{10}$ results.}
		\label{exp:4star_bitcoin}
    \end{subfigure}%
    \hfill
    \begin{subfigure}{0.24\linewidth}
        \centering
        \includegraphics[width=\linewidth]{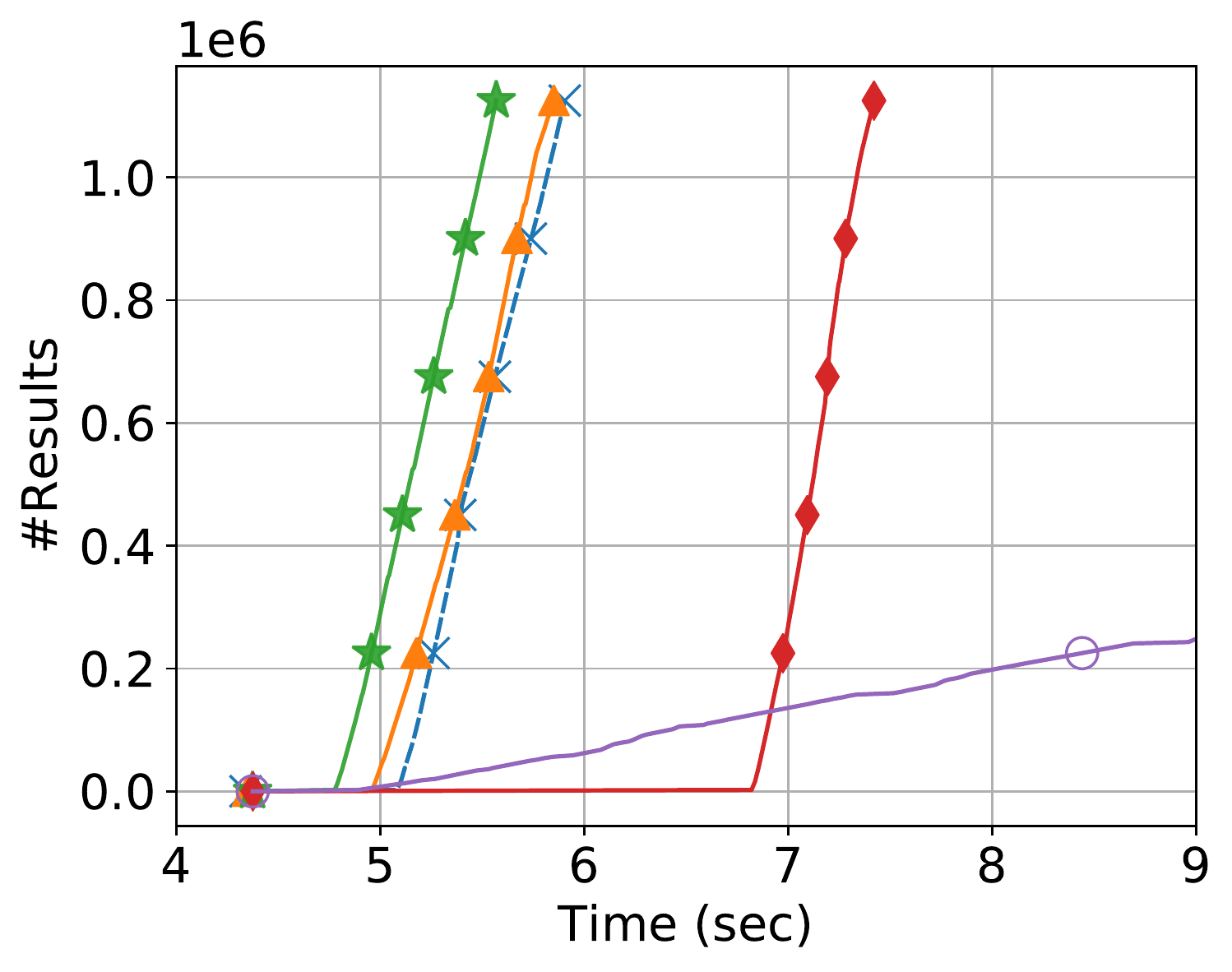}
        \caption{4-Star TwitterL\\($n\!\sim\!2.3 \!\cdot\! 10^6$):\\Top $n/2$ of $\sim 6 \!\cdot\! 10^{15}$ results.}
		\label{exp:4star_twitter}
    \end{subfigure}
    \vspace{-1mm}

    \begin{subfigure}{0.24\linewidth}
        \centering
        \includegraphics[width=\linewidth]{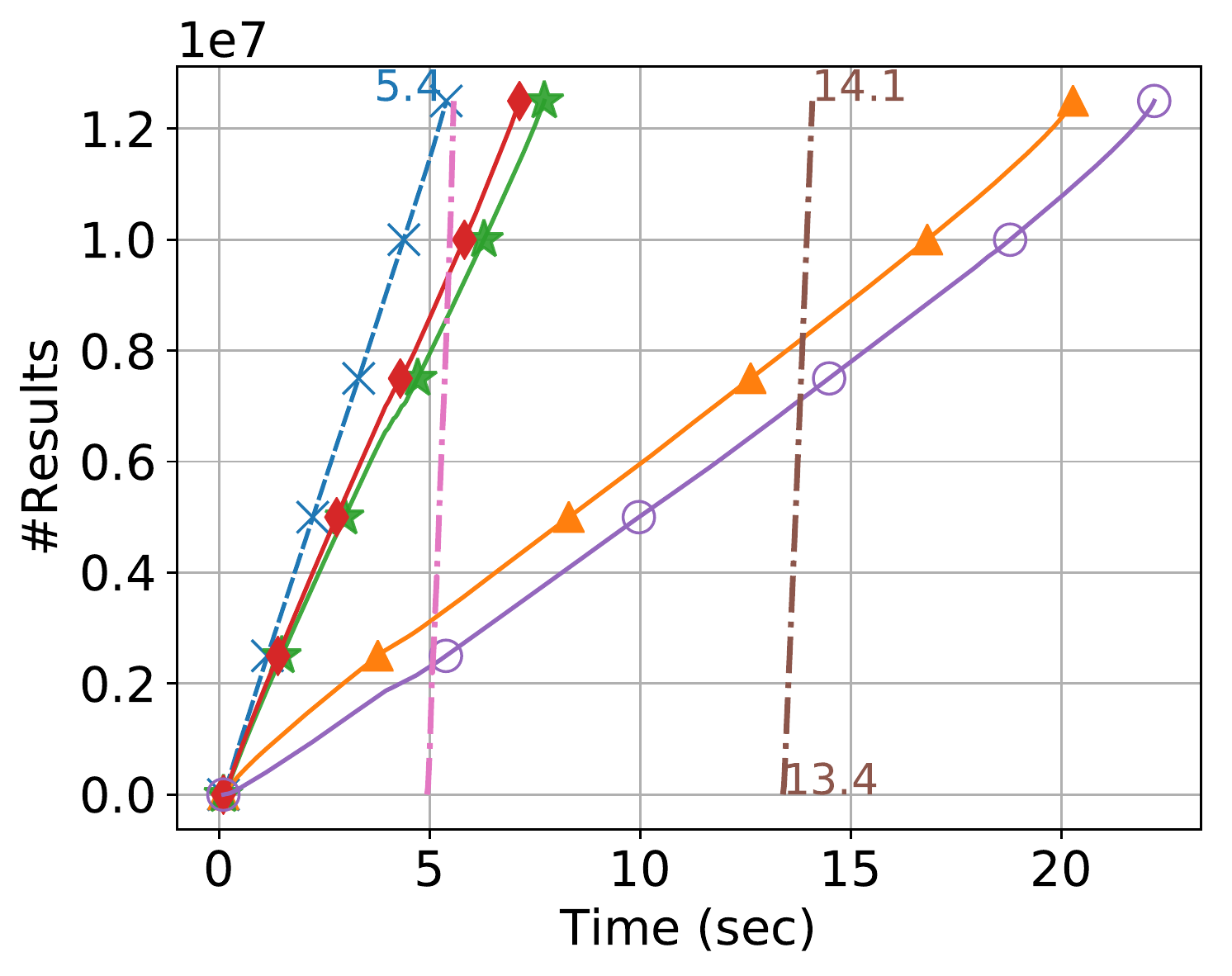}
        \caption{4-Cycle Synthetic\\($n\!=\!5 \!\cdot\! 10^3$):\\All $\sim \!1.2 \!\cdot\! 10^{7}$ results.}
		\label{exp:4cycle_syn_small}
    \end{subfigure}%
    \hfill
    \begin{subfigure}{0.24\linewidth}
        \centering
        \includegraphics[width=\linewidth]{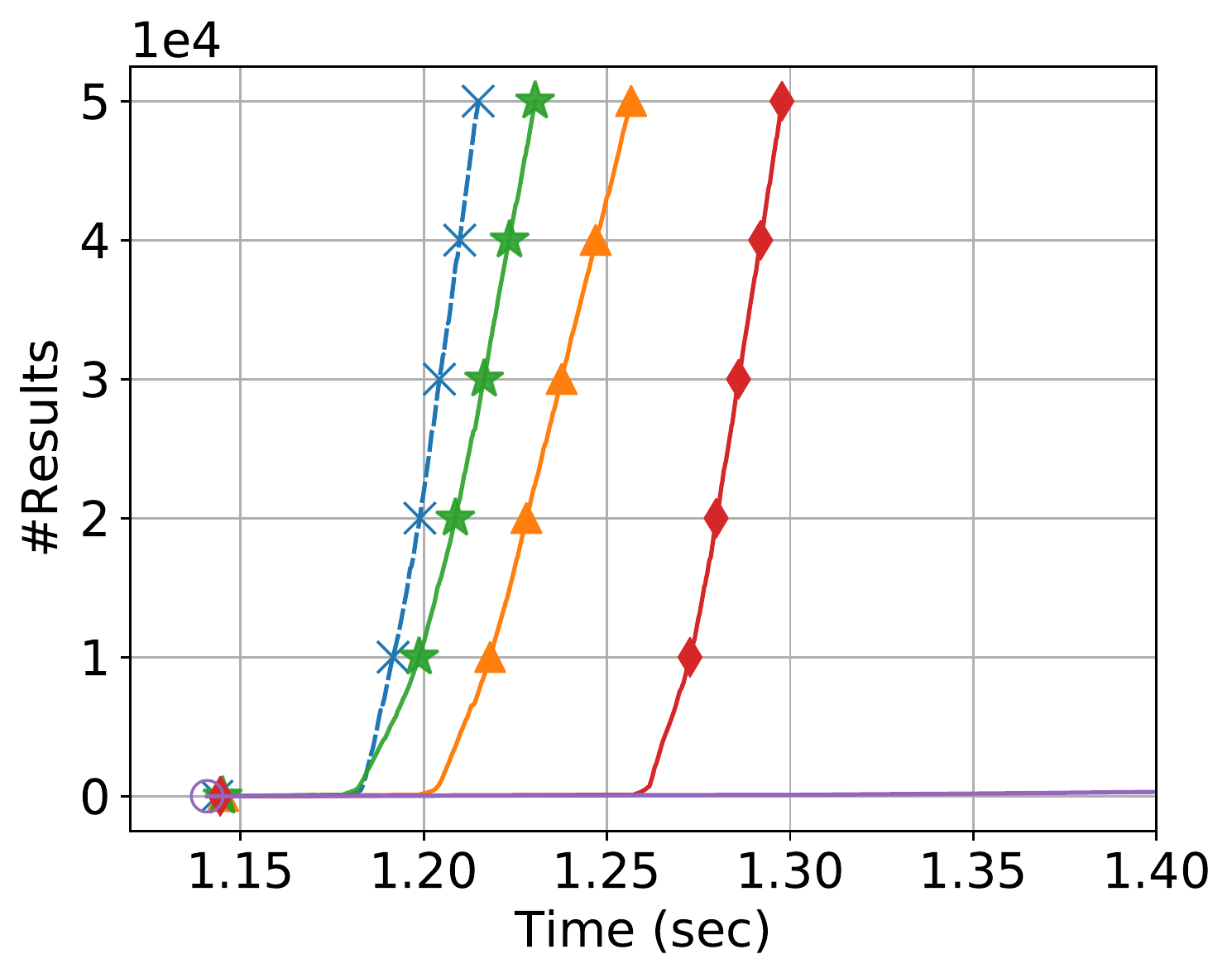}
        \caption{4-Cycle Synthetic\\($n\!=\!10^5$):\\Top $n/2$ of $\sim \!5 \!\cdot\! 10^{9}$ results.}
		\label{exp:4cycle_syn_large}
    \end{subfigure}%
    \hfill
    \begin{subfigure}{0.24\linewidth}
        \centering
        \includegraphics[width=\linewidth]{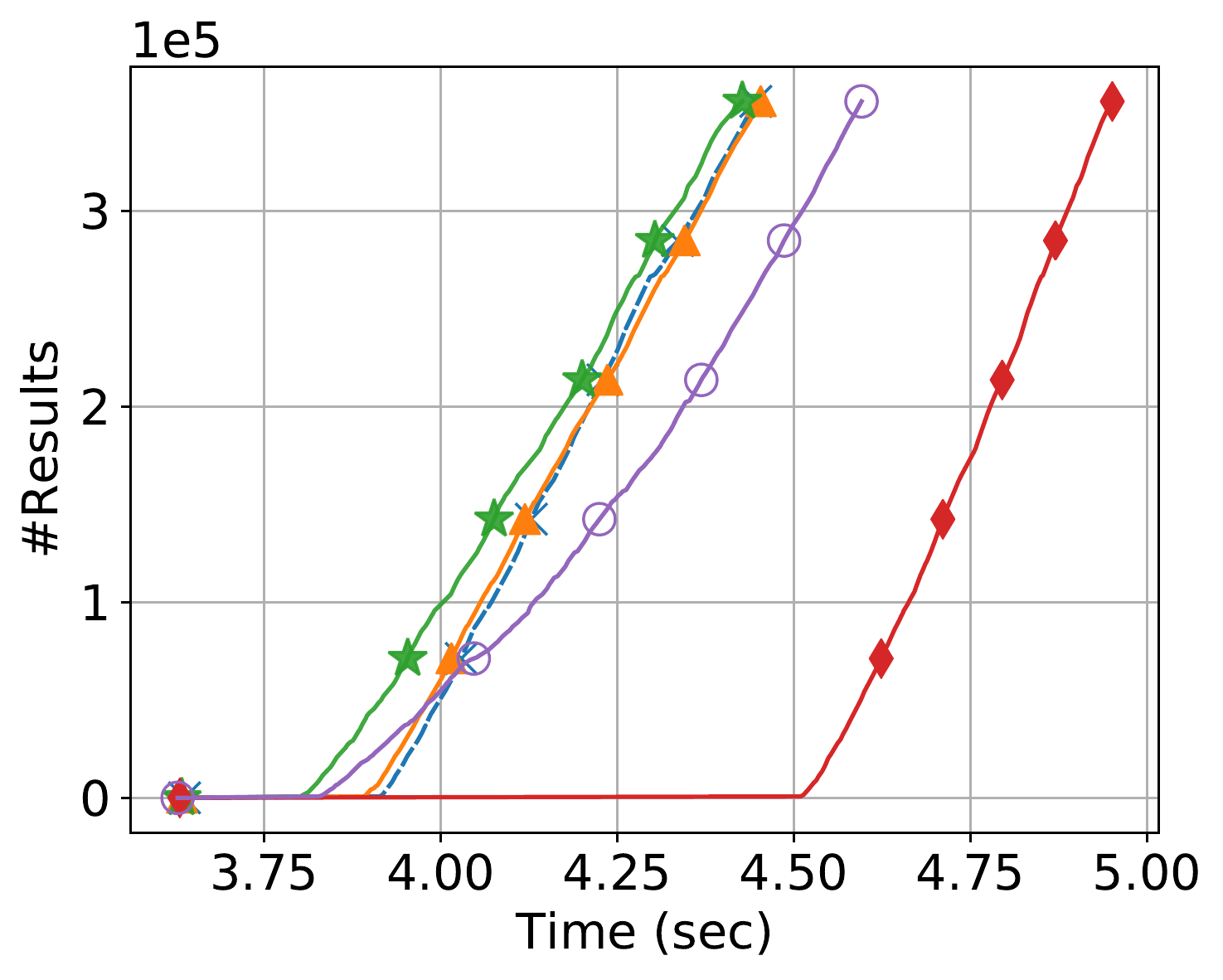}
        \caption{4-Cycle Bitcoin\\($n\!\sim\!3.6 \!\cdot\! 10^4$):\\Top $10n$ of $\sim 7 \cdot 10^{6}$ results.}
		\label{exp:4cycle_bitcoin}
    \end{subfigure}%
    \hfill
    \begin{subfigure}{0.24\linewidth}
        \centering
        \includegraphics[width=\linewidth]{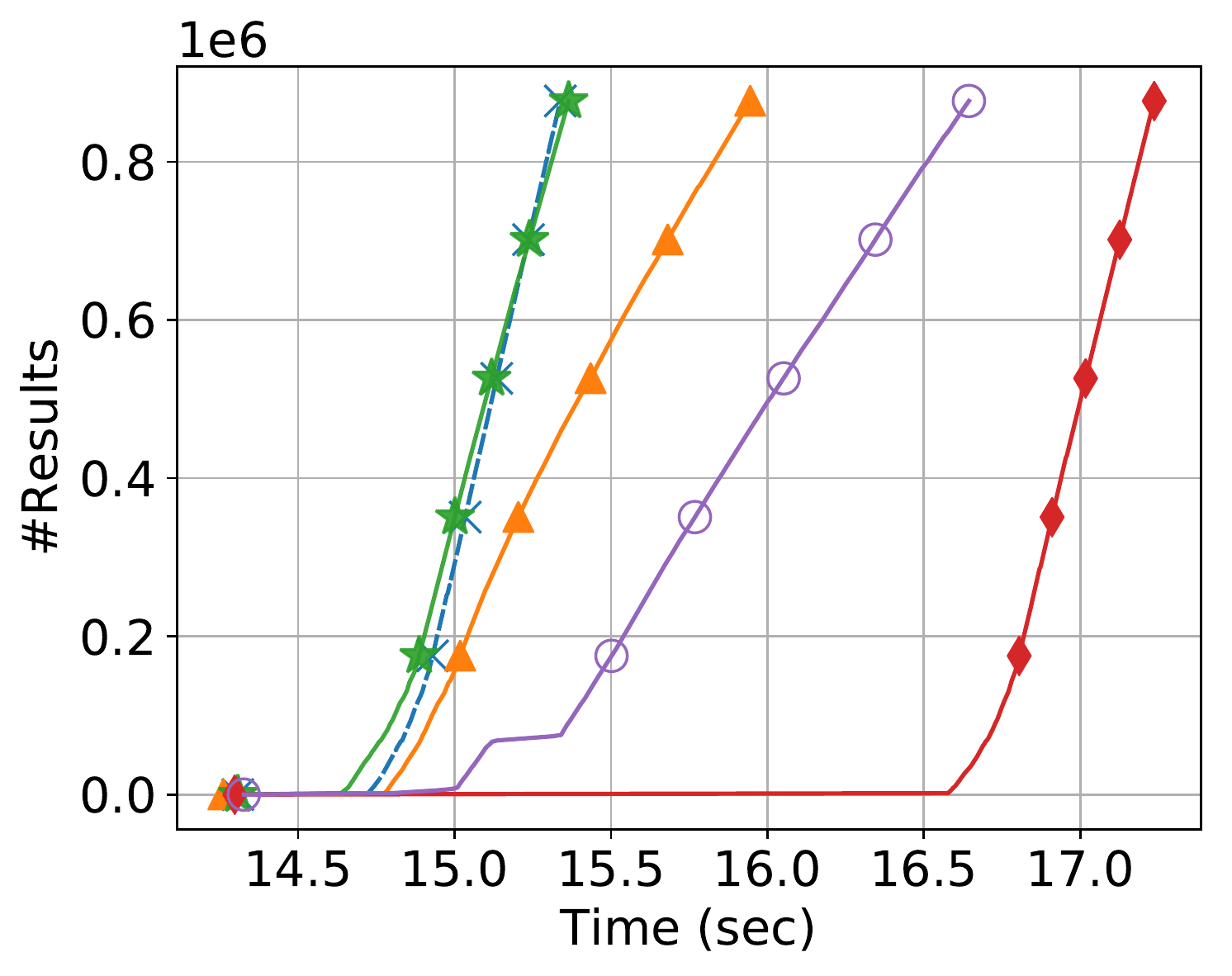}
        \caption{4-Cycle TwitterS\\($n\!\sim\!8.8 \!\cdot\! 10^4$):\\Top $10n$ of $\sim 3 \cdot 10^{8}$ results.}
		\label{exp:4cycle_twitter}
    \end{subfigure}

    \caption{Experiments on queries of size $4$ (\cref{sec:evaluation_4}).
	}
    \label{exp}
\end{figure*}

\introparagraph{Implementation details}
All algorithms are implemented in Java and run on an
Intel Xeon E5-2643 CPU with 3.3Ghz and 128 GB RAM with Ubuntu Linux.
Each data point 
is the median of 200 runs. 
We initialize all data structures lazily when they are accessed for
the first time. For example, in \EAGER, we do not sort the $\Choices$ set of a
node until it is visited.
This can significantly reduce $\TT(k)$ for small $k$, 
and we apply this optimization to all algorithms. 
Notice that our complexity analysis in \cref{sec:complexity}
assumes constant-time inserts for priority queues, which is important for algorithms
that push more elements than they pop per iteration.
This bound is achieved by data structures that are well-known to perform poorly
in practice~\cite{cherkassky96shortest,LarkinSenTarjan2004:PQs}. 
To address this issue in the experiments, we use ``bulk inserts'' which heapify
the inserted elements~\cite{chang15enumeration}
or standard binary heaps when query size is small.

\subsection{Experimental results}
\label{sec:evaluation_4}

\Cref{exp} reports the number of output tuples returned in ranking order over time
for queries of size $4$.
On the larger input, \NAIVE runs out of memory or we terminate it after 2 hours.
This clearly demonstrates the need for our approach. 
We then set a limit on the number of returned results
and compare our various any-k algorithms for relatively small $k$.
We also use a fairly small synthetic input to be able to compare TTL performances against \NAIVE.

\resultbox{%
\introparagraph{Results}
For $\TTL$, \RECURSIVE is fastest on paths and cycles, finishing even before \NAIVE.
This advantage disappears in star queries due to the small depth of the tree.
For small $k$, \LAZY is consistently the top-performer and
is even faster than the asymptotically best \HEAP.
\NAIVE is impractical for real-world data since it attempts to compute the full result, which is extremely large.}

For path and cycle queries on the small synthetic data, \RECURSIVE is faster than \NAIVE
(\cref{exp:4path_syn_small,exp:4cycle_syn_small}) due to the large number of suffixes shared between
different output tuples. It returns the \emph{full sorted result faster (7.7 sec and 5.4 sec) than
\NAIVE (8.3 sec and 14.1 sec)}.
Especially for cycles, our decomposition method really pays off compared to \NAIVE~\cite{ngo2018worst},
as \RECURSIVE terminates around the same time \NAIVE starts to sort.
For star queries, \RECURSIVE behaves like an \RPDP approach 
because of the shallowness of the tree (\cref{exp:4star_syn_small}).
When many results are returned,
the strict \RPDP variants
(\EAGER, \LAZY) have an advantage over the relaxed ones (\HEAP, \MIN) 
as they produce fewer candidates per iteration and maintain a smaller priority queue.
\EAGER is slightly better than \LAZY because sorting is faster than
incrementally converting a heap to a sorted list.
This situation is reversed for small $k$ where
\emph{initialization time} becomes a crucial factor:
Then \EAGER and \RECURSIVE lose their edge, while \LAZY shines
(\cref{exp:4path_bitcoin,exp:4star_bitcoin,exp:4star_twitter,exp:4cycle_bitcoin,exp:4cycle_twitter}).
\RECURSIVE starts off slower,
but often overtakes the others for sufficiently large $k$
(\cref{exp:4path_syn_large,exp:4cycle_syn_large}).
\EAGER is also slow in the beginning
because it has to sort each time it accesses a new choice set.
\HEAP showed mixed results, performing near the top (\cref{exp:4star_syn_large})
or near the bottom (\cref{exp:4cycle_twitter}).
\MIN performs poorly overall due to the large number of successors it inserts into its priority
queue.

\begin{figure*}[t]
    \centering
    \begin{subfigure}{\linewidth}
        \centering
        \includegraphics[width=0.7\linewidth]{figs/experiments/legend.pdf}
    \end{subfigure}
    \vspace{-3mm}
    
    \begin{subfigure}{0.24\linewidth}
        \centering
        \includegraphics[width=\linewidth]{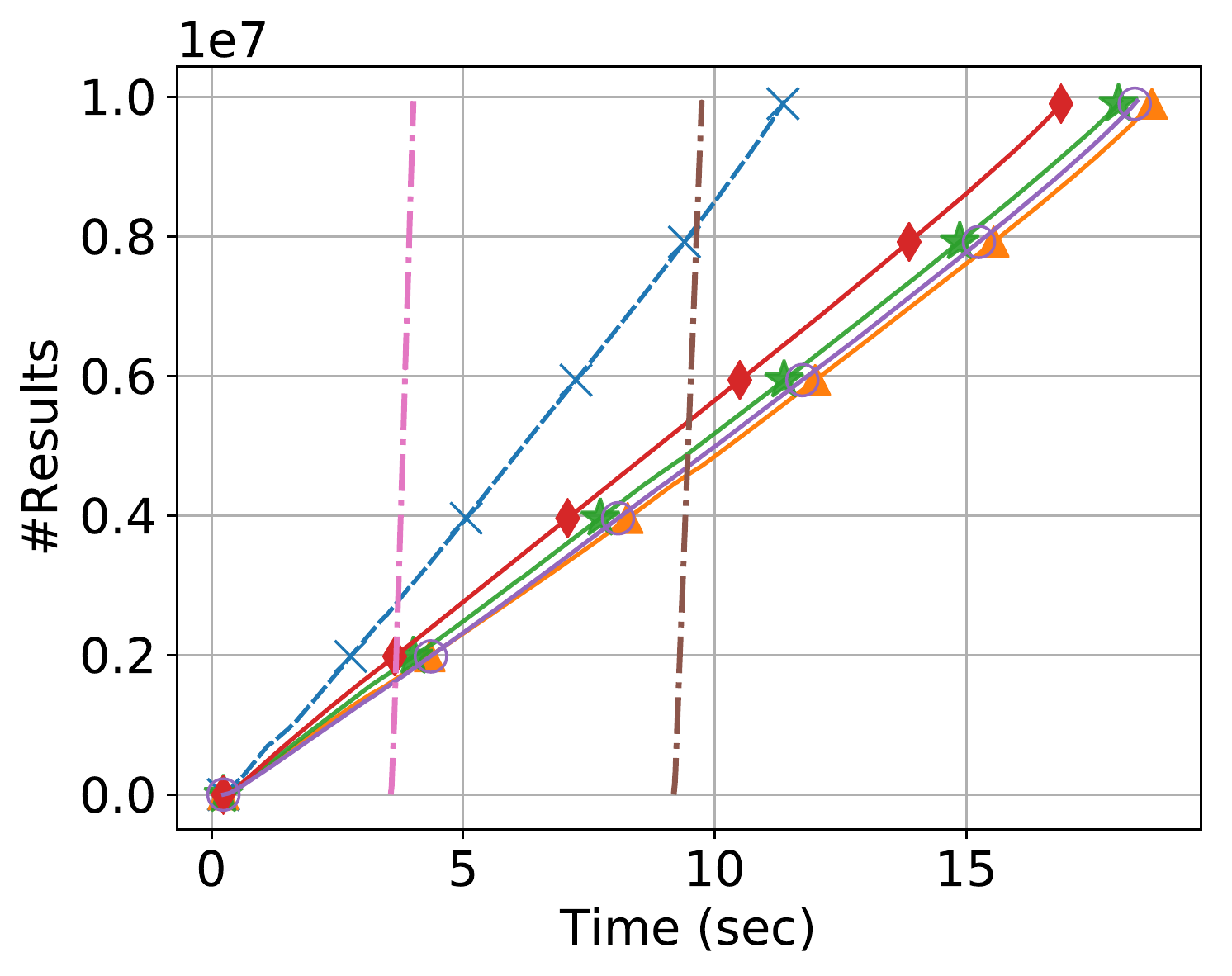}
        \caption{3-Path Synthetic\\($n\!=\!10^5$):\\All $\sim \!10^{7}$ results.}
		\label{exp:3path_syn_small}
    \end{subfigure}%
    \hfill
    \begin{subfigure}{0.24\linewidth}
        \centering
        \includegraphics[width=\linewidth]{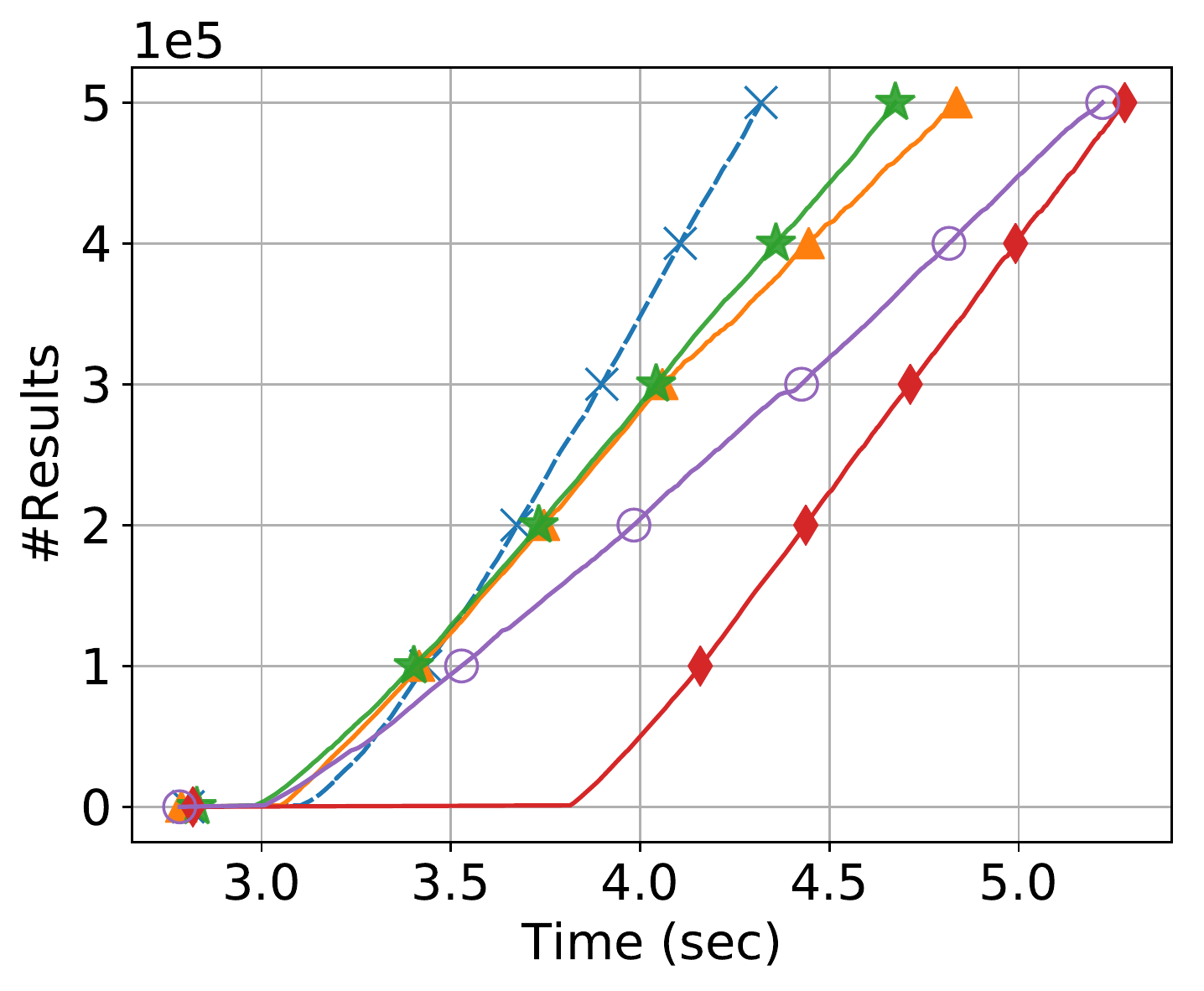}
        \caption{3-Path Synthetic\\($n\!=\!10^6$):\\Top $n/2$ of $\sim 10^8$ results.}
		\label{exp:3path_syn_large}
    \end{subfigure}%
    \hfill
    \begin{subfigure}{0.24\linewidth}
        \centering
        \includegraphics[width=\linewidth]{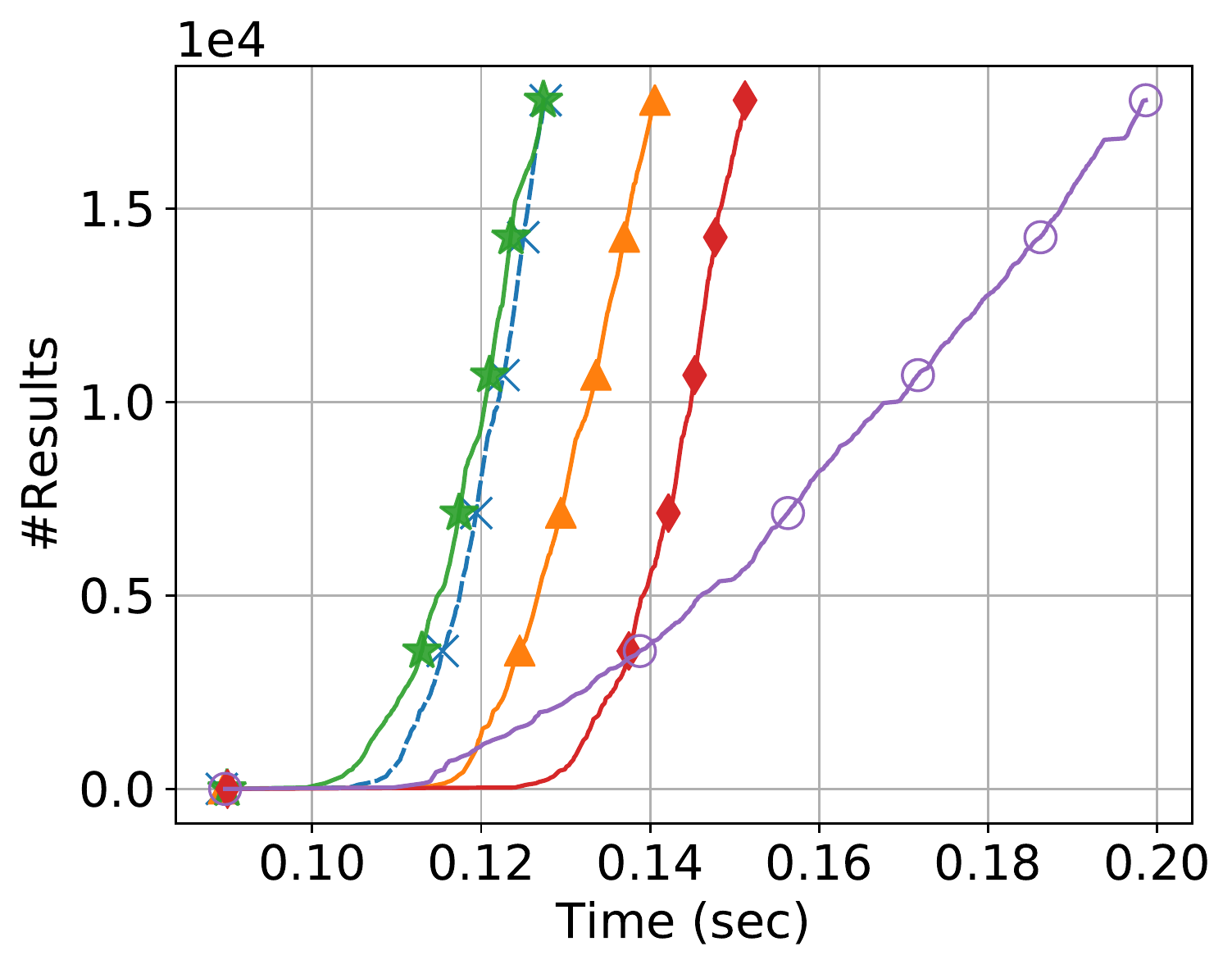}
        \caption{3-Path Bitcoin\\($n\!\sim\!3.6 \!\cdot\! 10^4$):\\Top $n/2$ of $\sim 8 \cdot 10^7$ results.}
		\label{exp:3path_bitcoin}
    \end{subfigure}%
    \hfill
    \begin{subfigure}{0.24\linewidth}
        \centering
        \includegraphics[width=\linewidth]{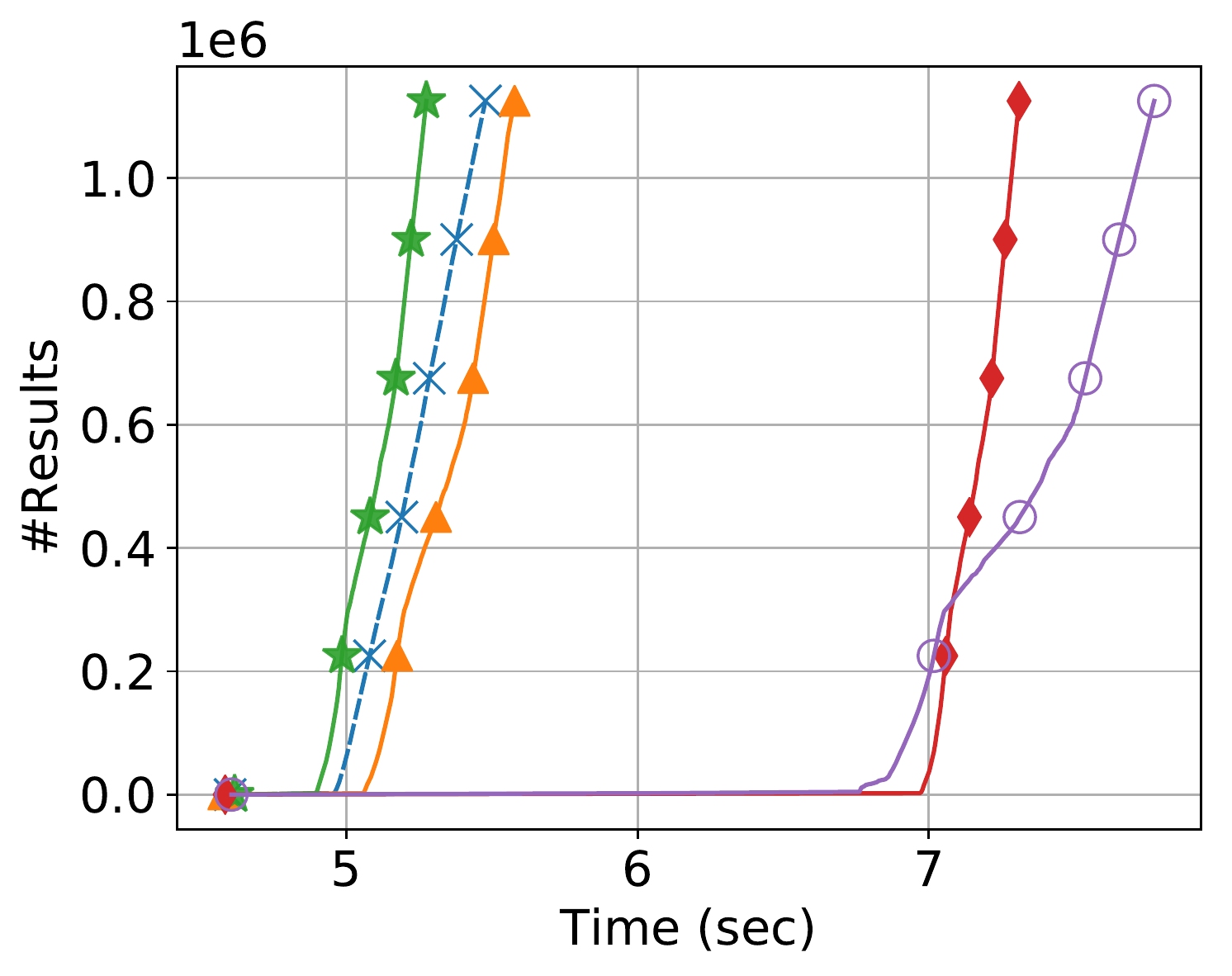}
        \caption{3-Path TwitterL\\($n\!\sim\!2.3 \!\cdot\! 10^6$):\\Top $n/2$ of $\sim 9 \!\cdot\! 10^{11}$ results.}
		\label{exp:3path_twitter}
    \end{subfigure}
    \vspace{-1mm}
    
        \begin{subfigure}{0.24\linewidth}
        \centering
        \includegraphics[width=\linewidth]{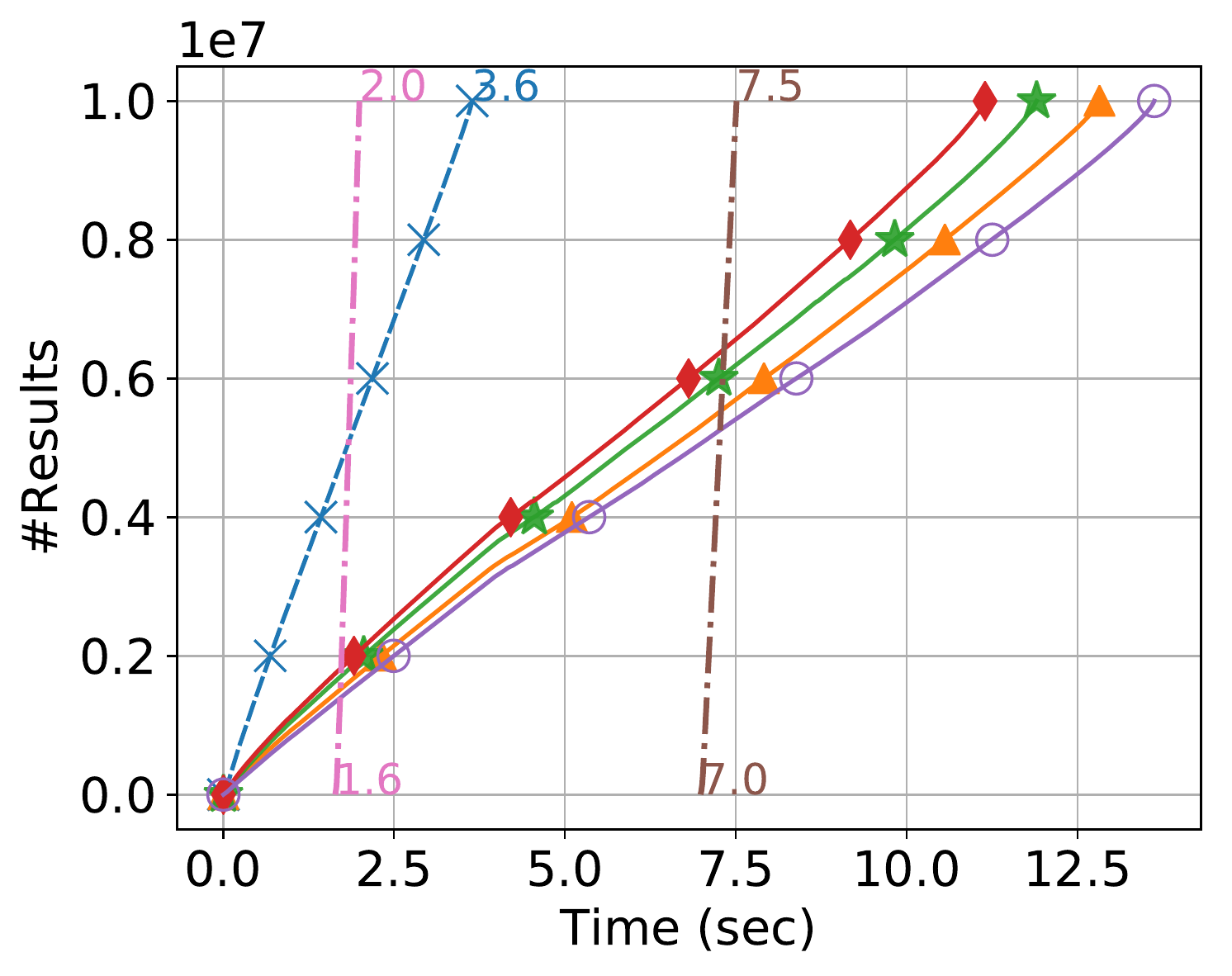}
        \caption{6-Path Synthetic\\($n\!=\!10^2$):\\All $\sim \!10^{7}$ results.}
		\label{exp:6path_syn_small}
    \end{subfigure}%
    \hfill
    \begin{subfigure}{0.24\linewidth}
        \centering
        \includegraphics[width=\linewidth]{figs/experiments/path_n1000000_l4_d100000.pdf}
        \caption{6-Path Synthetic\\($n\!=\!10^6$):\\Top $n/2$ of $\sim 10^{11}$ results.}
		\label{exp:6path_syn_large}
    \end{subfigure}%
    \hfill
    \begin{subfigure}{0.24\linewidth}
        \centering
        \includegraphics[width=\linewidth]{figs/experiments/path_bitcoinotc_l4.pdf}
        \caption{6-Path Bitcoin\\($n\!\sim\!3.6 \!\cdot\! 10^4$):\\Top $n/2$ of $\sim 10^{12}$ results.}
		\label{exp:6path_bitcoin}
    \end{subfigure}%
    \hfill
    \begin{subfigure}{0.24\linewidth}
        \centering
        \includegraphics[width=\linewidth]{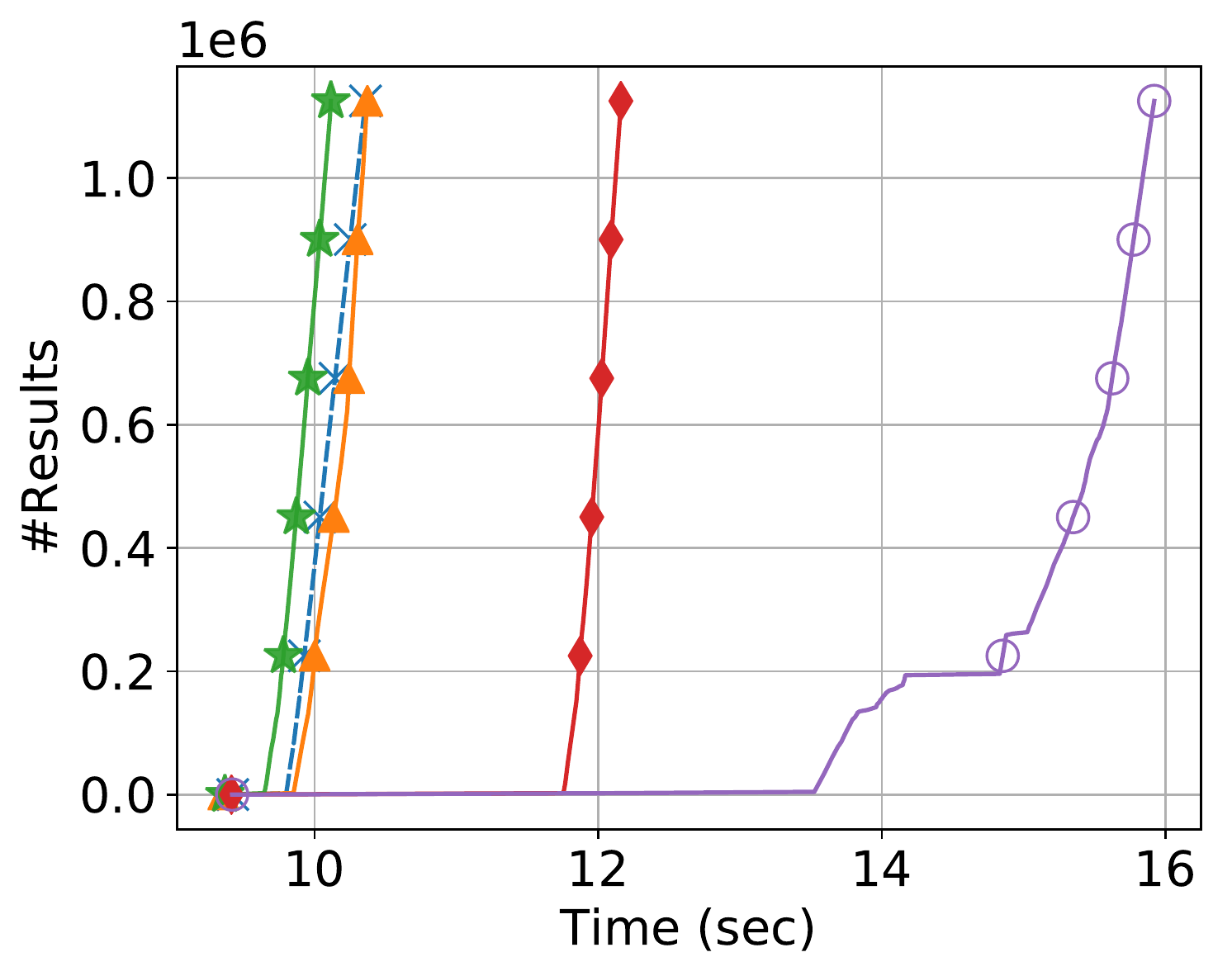}
        \caption{6-Path TwitterL\\($n\!\sim\!2.3 \!\cdot\! 10^6$):\\Top $n/2$ of $\sim 4 \!\cdot\! 10^{19}$ results.}
		\label{exp:6path_twitter}
    \end{subfigure}
    \vspace{-1mm}

    \caption{Experiments on path queries of sizes $3$ and $6$.}
    \label{exp_extra_paths}
\end{figure*}

\begin{figure*}[t]
    \centering
    \begin{subfigure}{\linewidth}
        \centering
        \includegraphics[width=0.7\linewidth]{figs/experiments/legend.pdf}
    \end{subfigure}
    \vspace{-3mm}
    
    \begin{subfigure}{0.24\linewidth}
        \centering
        \includegraphics[width=\linewidth]{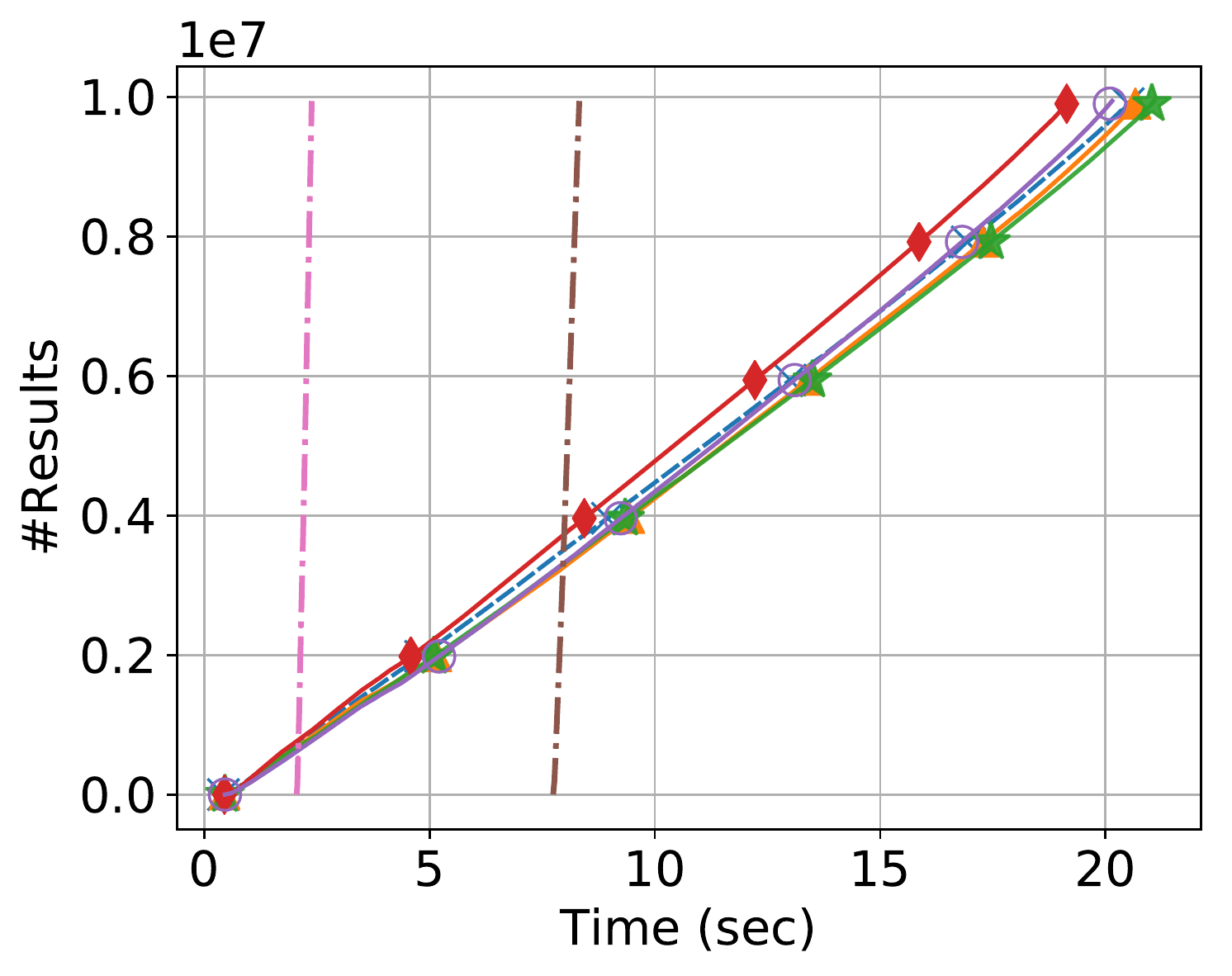}
        \caption{3-Star Synthetic\\($n\!=\!10^5$):\\All $\sim \!10^{7}$ results.
        }
		\label{exp:3star_syn_small}
    \end{subfigure}%
    \hfill
    \begin{subfigure}{0.24\linewidth}
        \centering
        \includegraphics[width=\linewidth]{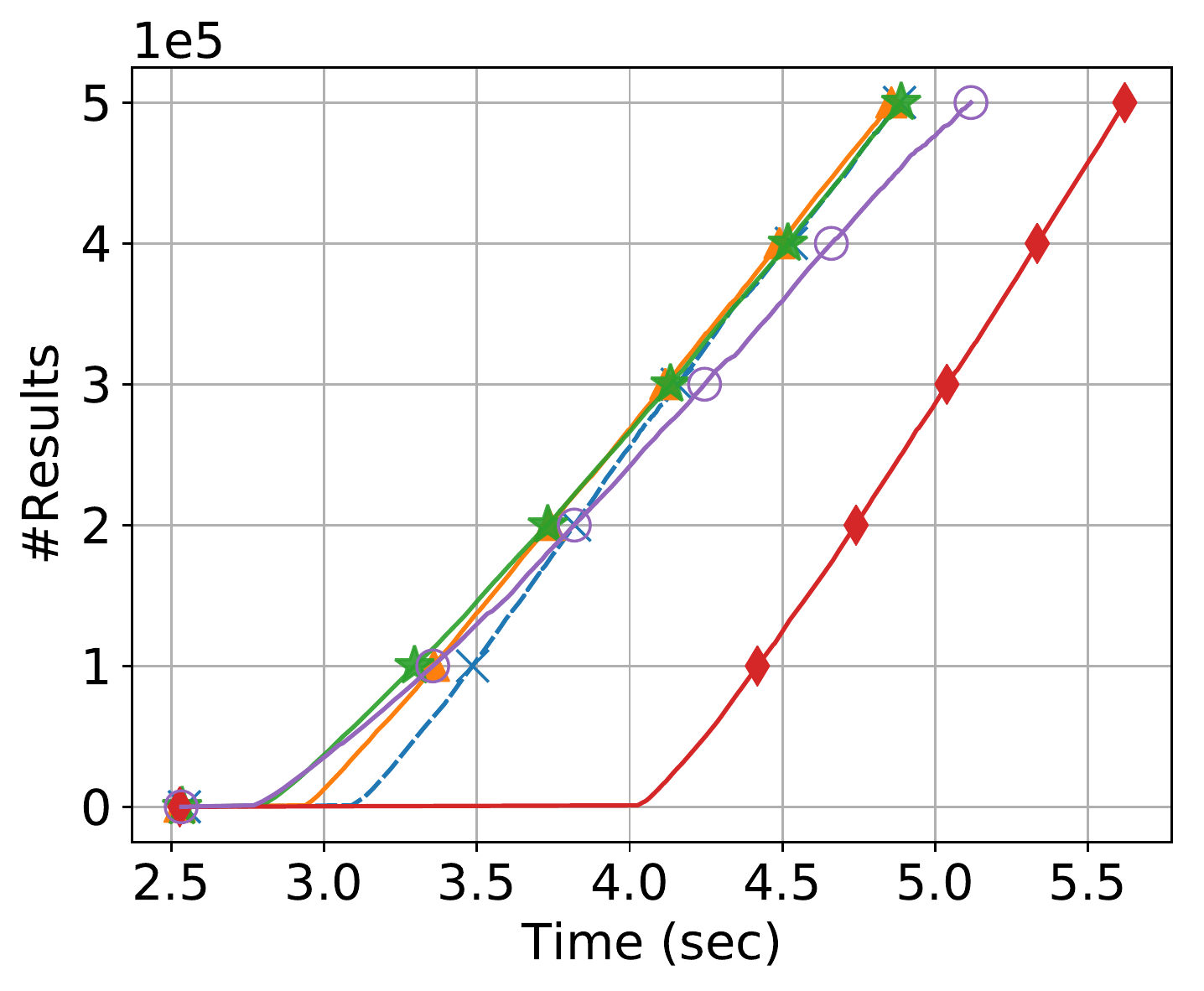}
        \caption{3-Star Synthetic\\($n\!=\!10^6$):\\Top $n/2$ of $\sim 10^{8}$ results.}
		\label{exp:3star_syn_large}
    \end{subfigure}%
    \hfill
    \begin{subfigure}{0.24\linewidth}
        \centering
        \includegraphics[width=\linewidth]{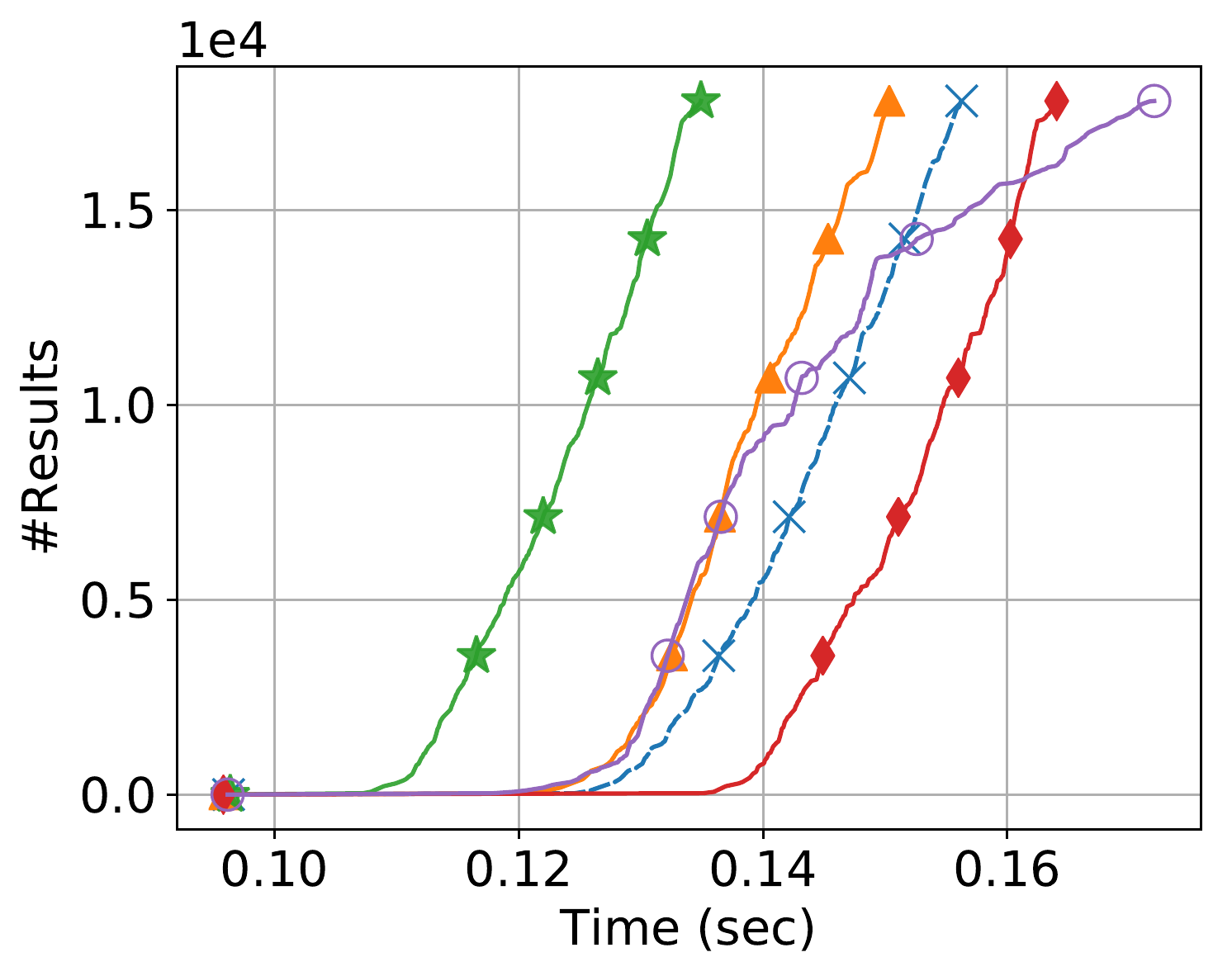}
        \caption{3-Star Bitcoin\\($n\!\sim\!3.6 \!\cdot\! 10^4$):\\Top $n/2$ of $\sim 10^{8}$ results.
        }
		\label{exp:3star_bitcoin}
    \end{subfigure}%
    \hfill
    \begin{subfigure}{0.24\linewidth}
        \centering
        \includegraphics[width=\linewidth]{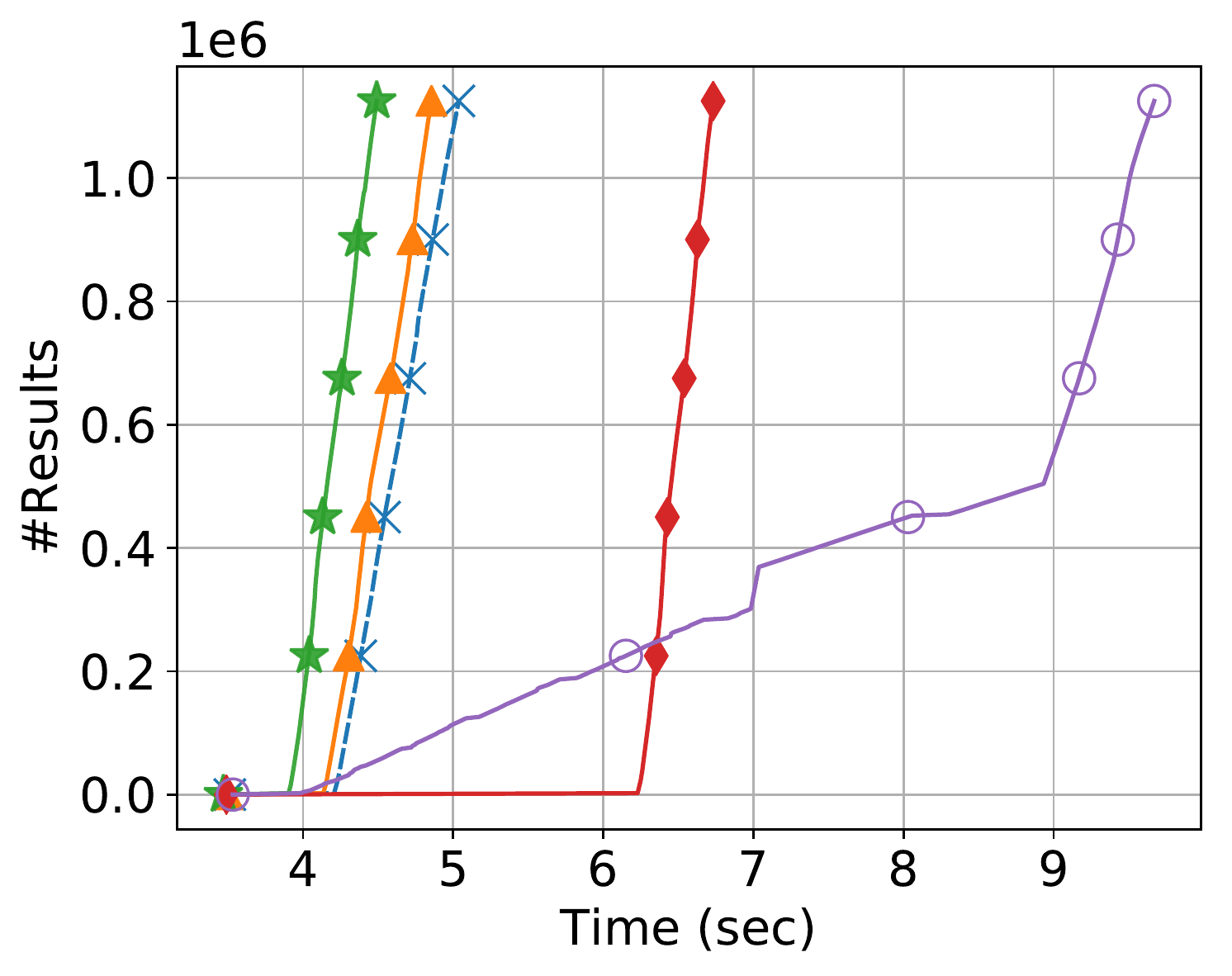}
        \caption{3-Star TwitterL\\($n\!\sim\!2.3 \!\cdot\! 10^6$):\\Top $n/2$ of $\sim 4 \!\cdot\! 10^{12}$ results.
        }
		\label{exp:3star_twitter}
    \end{subfigure}
    \vspace{-1mm}
    
    \begin{subfigure}{0.24\linewidth}
        \centering
        \includegraphics[width=\linewidth]{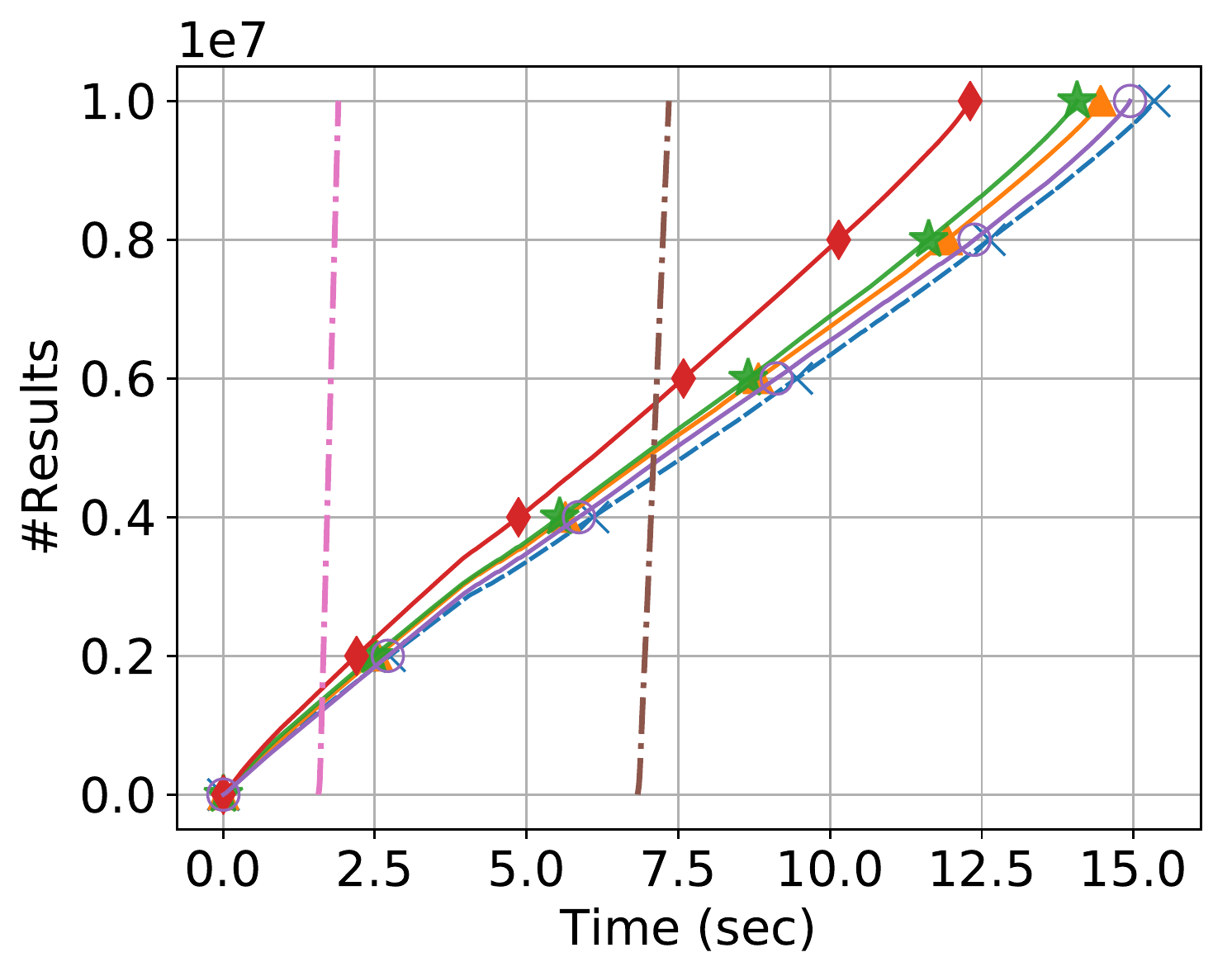}
        \caption{6-Star Synthetic\\($n\!=\!10^2$):\\All $\sim \!10^{7}$ results.}
		\label{exp:6star_syn_small}
    \end{subfigure}%
    \hfill
    \begin{subfigure}{0.24\linewidth}
        \centering
        \includegraphics[width=\linewidth]{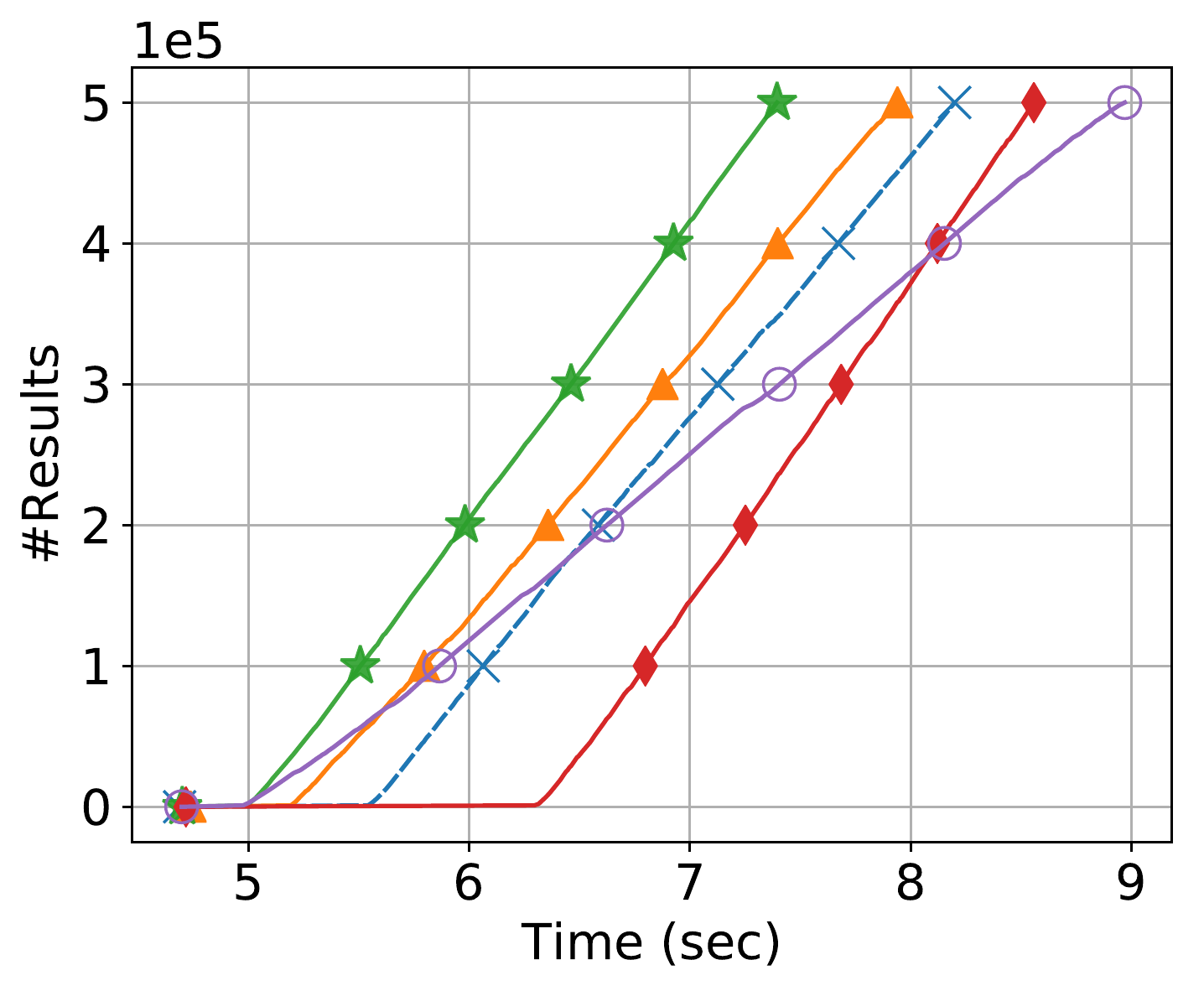}
        \caption{6-Star Synthetic\\($n\!=\!10^6$):\\Top $n/2$ of $\sim 10^{11}$ results.}
		\label{exp:6star_syn_large}
    \end{subfigure}%
    \hfill
    \begin{subfigure}{0.24\linewidth}
        \centering
        \includegraphics[width=\linewidth]{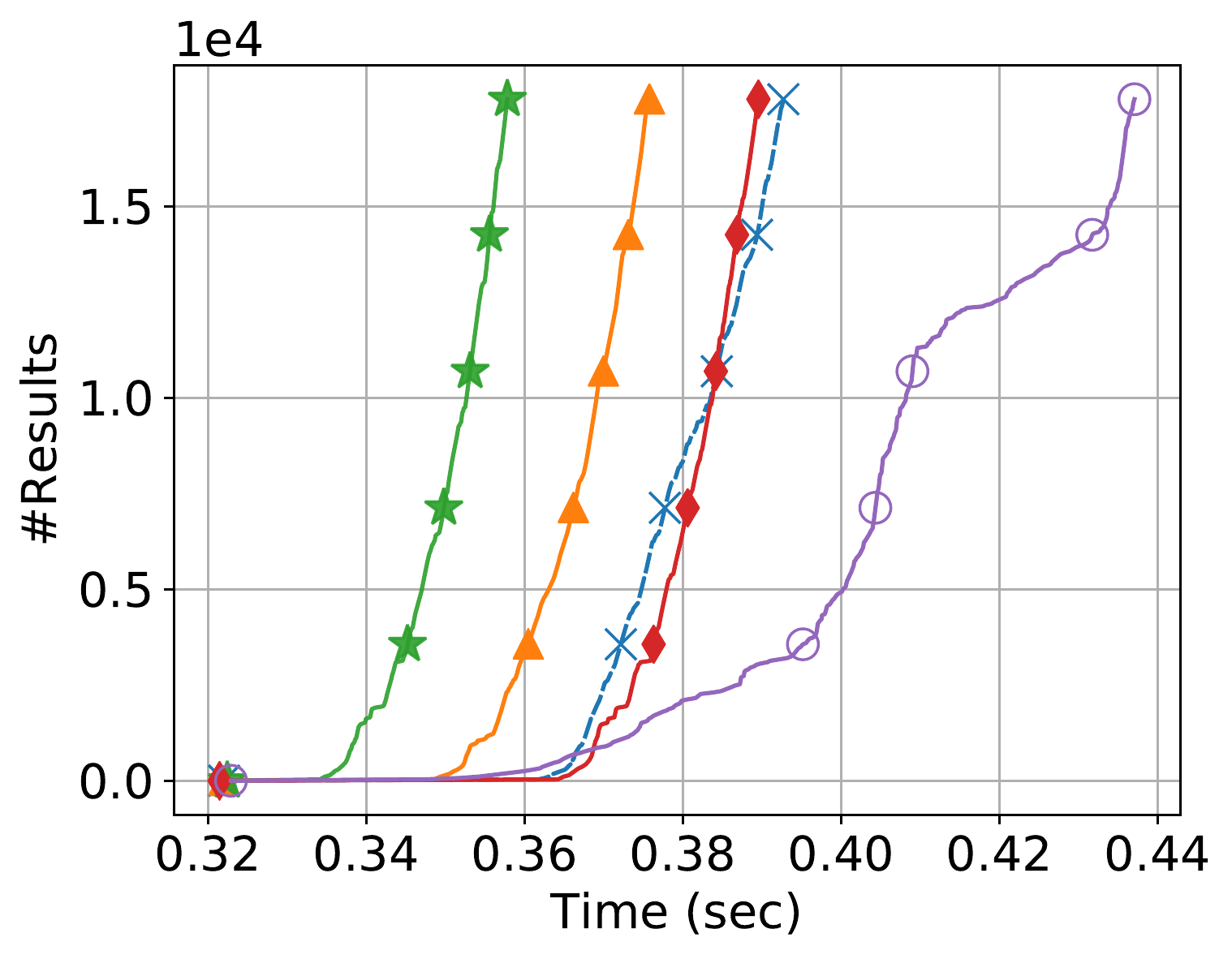}
        \caption{6-Star Bitcoin\\($n\!\sim\!3.6 \!\cdot\! 10^4$):\\Top $n/2$ of $\sim 3 \!\cdot\! 10^{14}$ results.}
		\label{exp:6star_bitcoin}
    \end{subfigure}%
    \hfill
    \begin{subfigure}{0.24\linewidth}
        \centering
        \includegraphics[width=\linewidth]{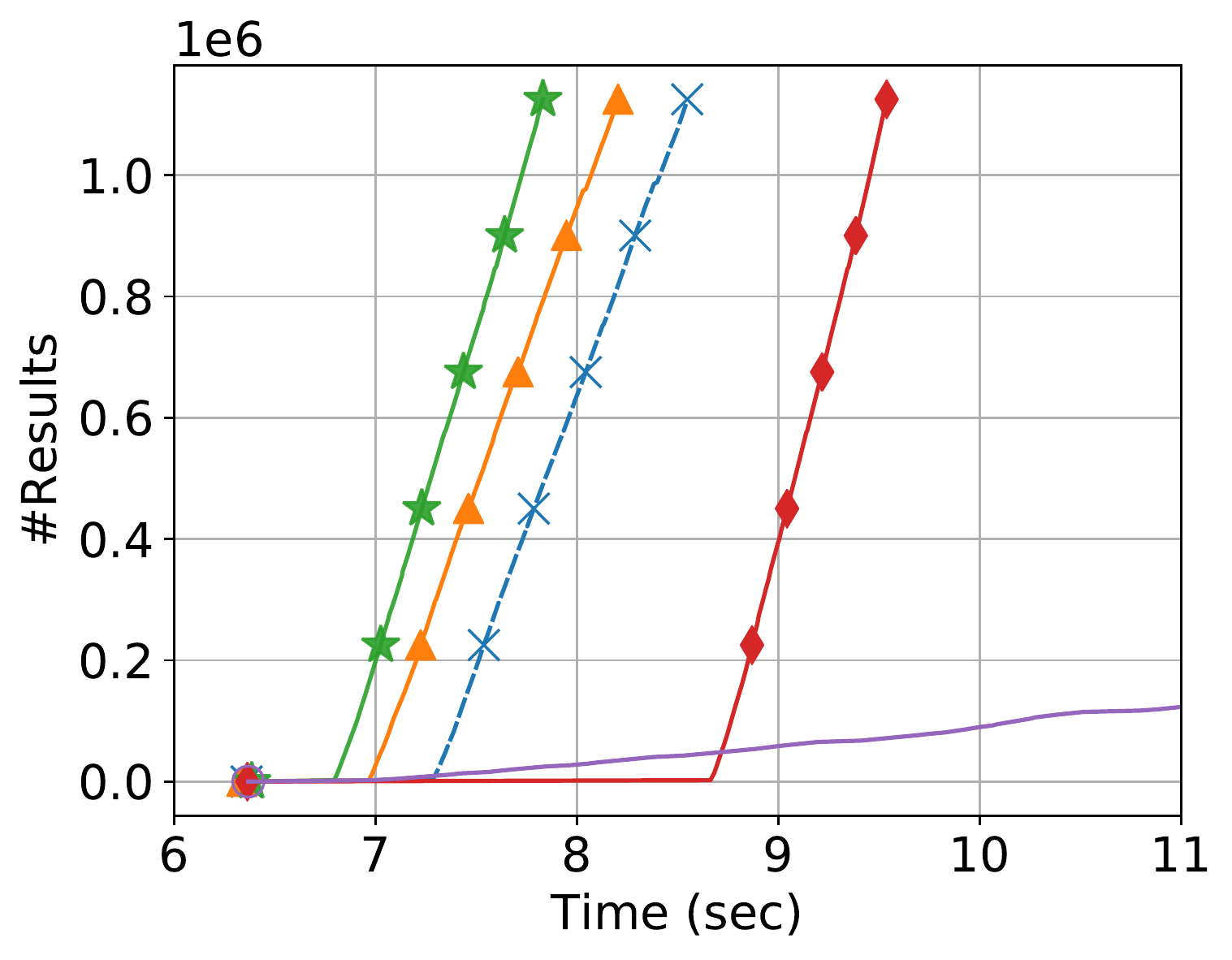}
        \caption{6-Star TwitterL.\\Top $n/2$ of $\sim 10^{22}$ results.}
		\label{exp:6star_twitter}
    \end{subfigure}
    \vspace{-1mm}
    
    \caption{Experiments on star queries of sizes $3$ and $6$.}
    \label{exp_extra_stars}
\end{figure*}

\begin{figure*}[t]
    \centering
    \begin{subfigure}{\linewidth}
        \centering
        \includegraphics[width=0.7\linewidth]{figs/experiments/legend.pdf}
    \end{subfigure}
    \vspace{-3mm}
    
    \begin{subfigure}{0.24\linewidth}
        \centering
        \includegraphics[width=\linewidth]{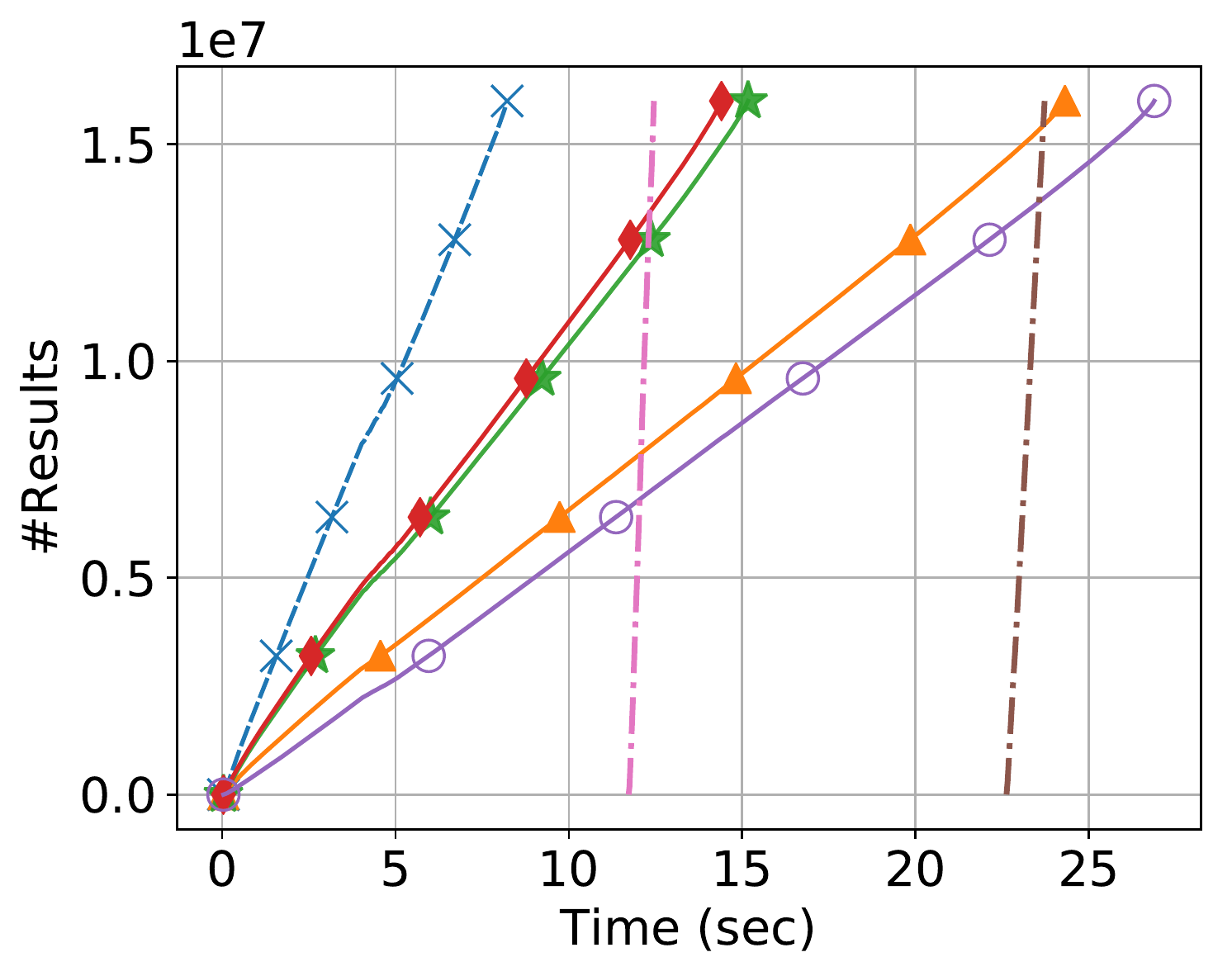}
        \caption{6-Cycle Synthetic\\($n\!=\! 4 \cdot 10^2$):\\All $\sim \!1.6 \!\cdot\! 10^{7}$ results.}
		\label{exp:6cycle_syn_small}
    \end{subfigure}%
    \hfill
    \begin{subfigure}{0.24\linewidth}
        \centering
        \includegraphics[width=\linewidth]{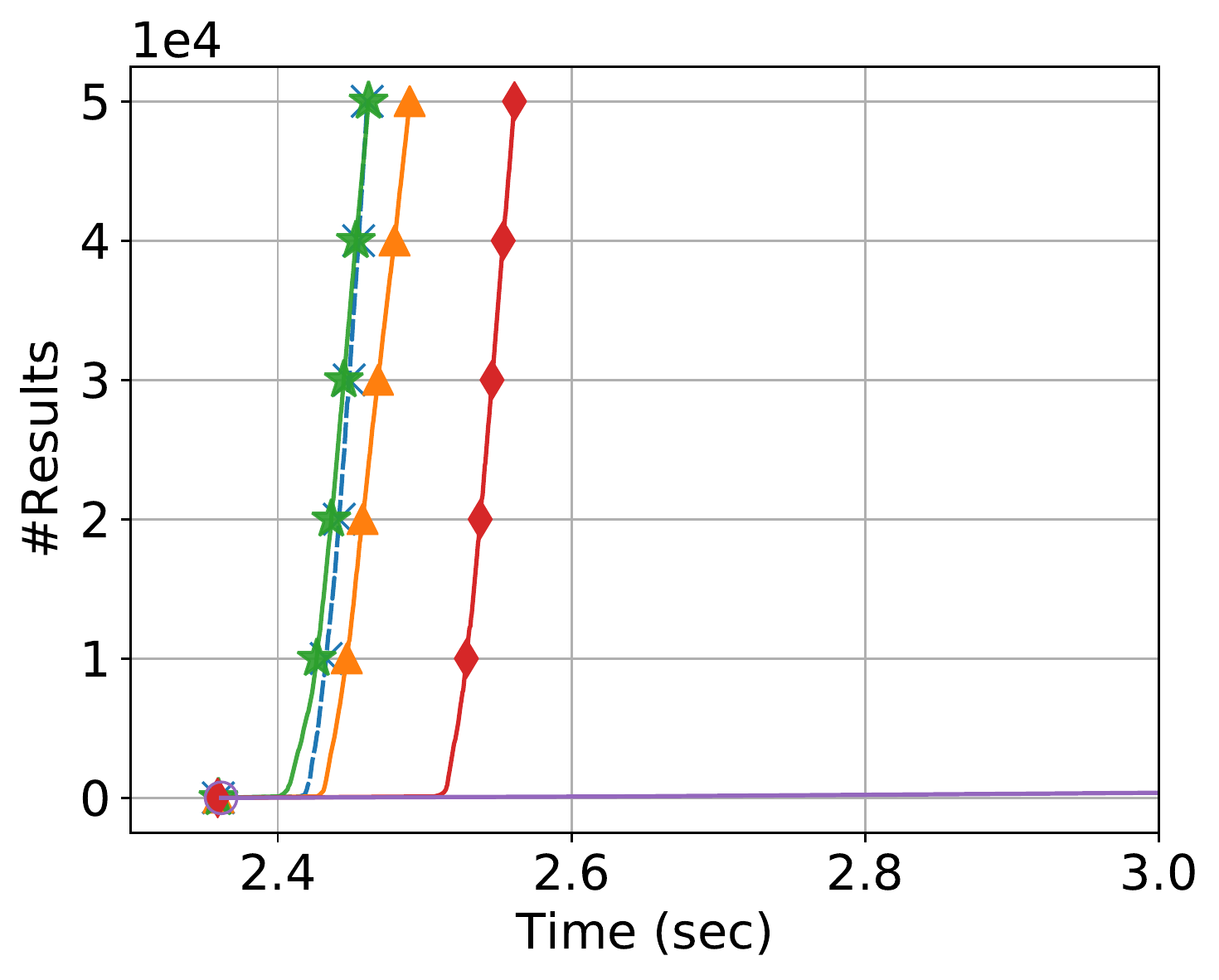}
        \caption{6-Cycle Synthetic\\($n\!=\!10^5$):\\Top $n/2$ of $\sim 3 \!\cdot\! 10^{14}$ results.}
		\label{exp:6cycle_syn_large}
    \end{subfigure}%
    \hfill
    \begin{subfigure}{0.24\linewidth}
        \centering
        \includegraphics[width=\linewidth]{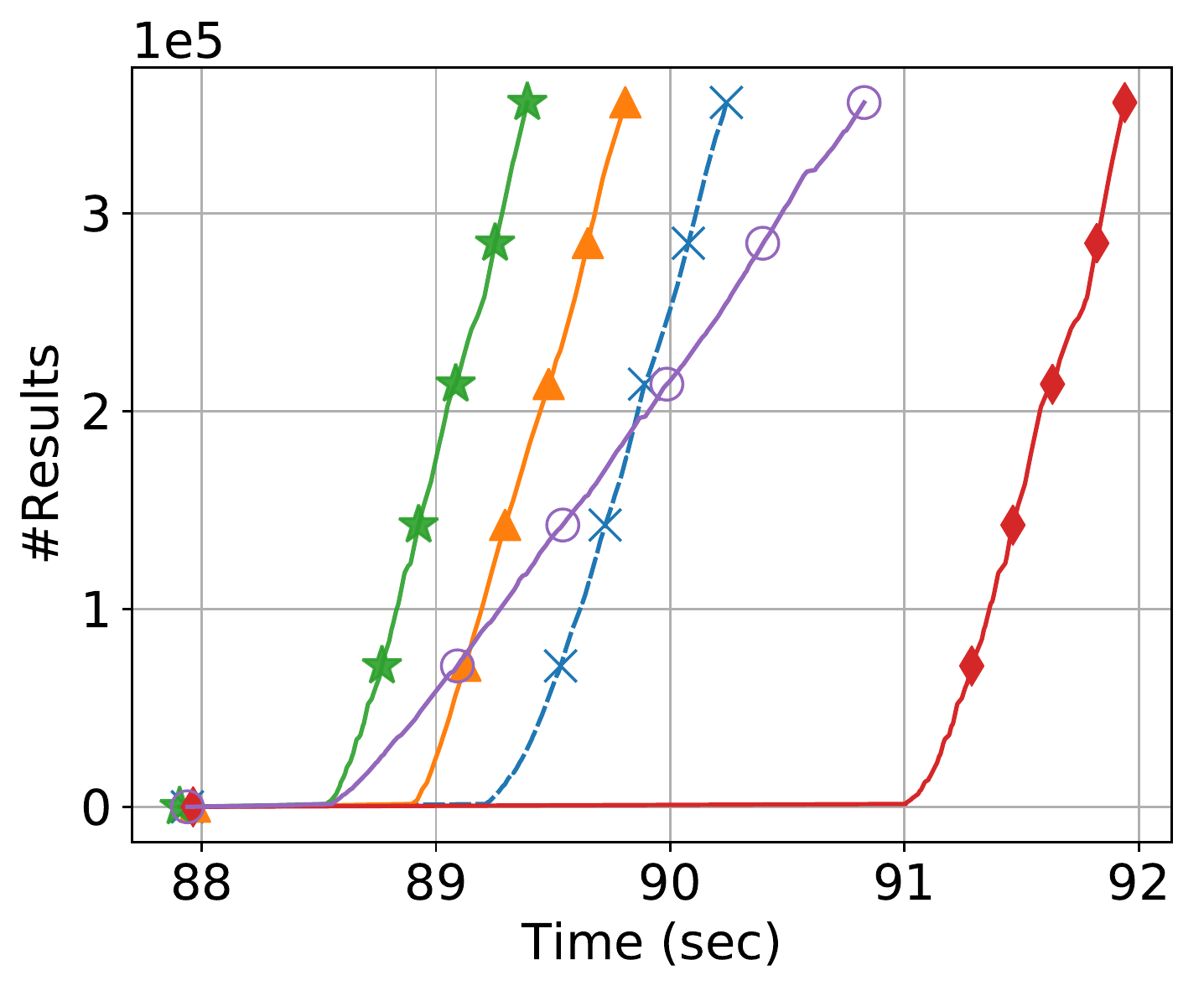}
        \caption{6-Cycle Bitcoin\\($n\!\sim\!3.6 \!\cdot\! 10^4$):\\Top $50n$ of $> 10^{9}$ results.}
		\label{exp:6cycle_bitcoin}
    \end{subfigure}%
    \hfill
    \begin{subfigure}{0.24\linewidth}
        \centering
        \includegraphics[width=\linewidth]{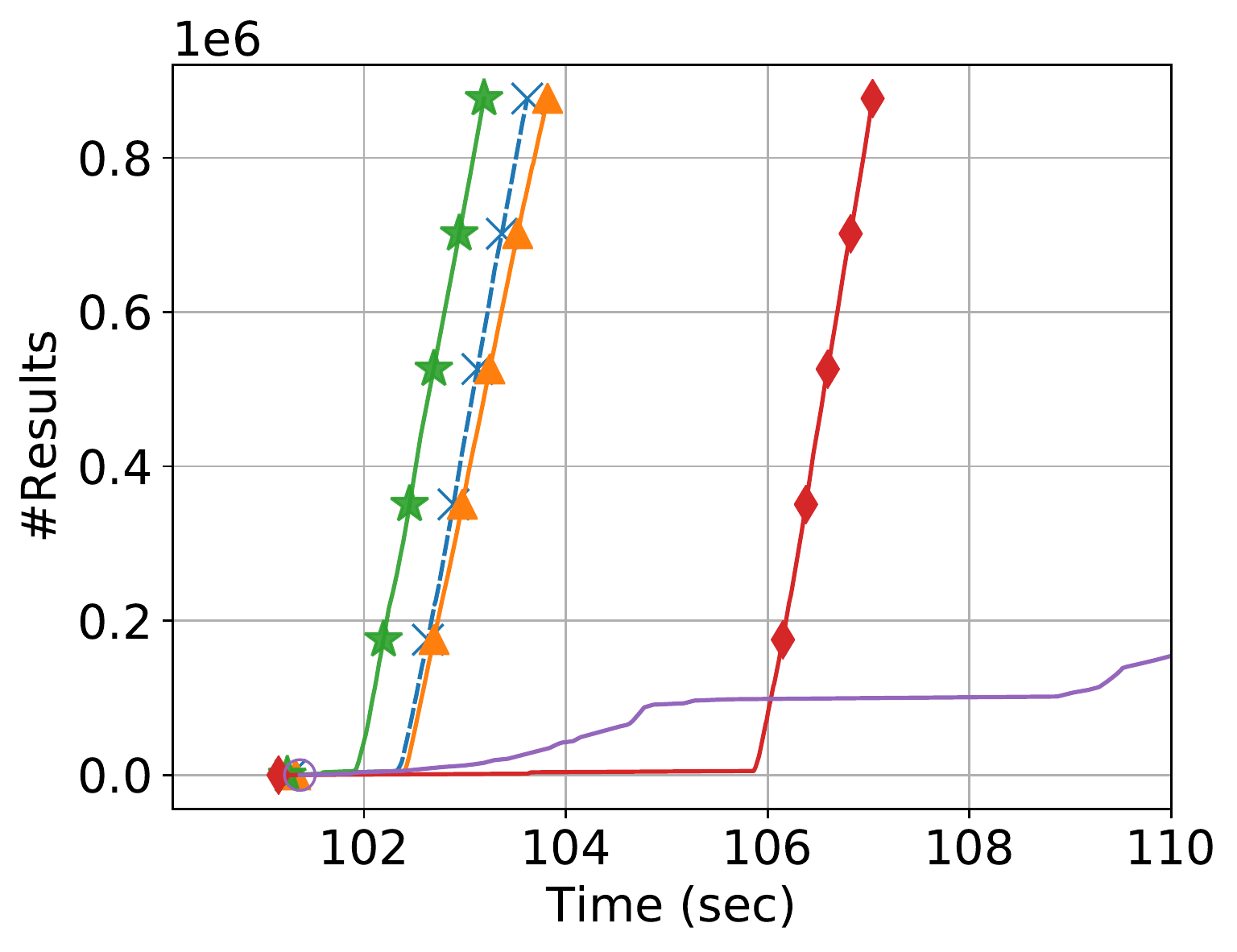}
        \caption{6-Cycle TwitterS\\($n\!\sim\!8.8 \!\cdot\! 10^4$):\\Top $50n$ of $> 10^{9}$ results.}
		\label{exp:6cycle_twitter}
    \end{subfigure}
    \vspace{-1mm}
    
    \caption{Experiments on cycle queries of size $6$.}
    \label{exp_extra_cycles}
\end{figure*}

\subsection{More results for different query sizes}

We performed the same experiments for different query sizes:
3-Path, 6-Path, 3-Star, 6-Star, and 6-Cycle. 
We do not consider the cycle of length 3 (i.e., the triangle query) because our simple cycle decomposition does not give any bound that would be better than the \NAIVE algorithm.
Our goal with these experiments is to observe how the conclusions we made in \Cref{sec:evaluation_4} are affected when the query size changes.

\Cref{exp_extra_paths,exp_extra_stars,exp_extra_cycles} depicts our results.
The main observation is that \RECURSIVE's \TTL benefits more from longer queries (\Cref{exp:6path_syn_small,exp:6cycle_syn_small}) than shorter ones (\Cref{exp:3path_syn_small}).
This makes sense because in a long path, there are more solutions that share the same suffixes and \RECURSIVE essentially reuses the ranking of those common suffixes to sort the entire solution set faster.
In general, the situation that we saw in \cref{exp} is repeated.  
\LAZY is again the winning algorithm for small $k$ across the board, while \MIN generally underperforms.
\EAGER only makes sense in cases where a large number of results is returned if \RECURSIVE cannot reuse computation as it does in the path case, e.g. in the extreme case of star queries (\Cref{exp:3star_syn_small,exp:6star_syn_small})

\resultbox{%
\introparagraph{Results}
\RECURSIVE's $\TTL$ advantage over \NAIVE is more evident in longer queries since there are more opportunities of reusing computation. 
\LAZY again dominates for the first results (small $k$) for all query sizes.}

\subsection{Comparison against PostgreSQL}
\label{sec:psql}

To validate our \NAIVE implementation, we compare it against
PostgreSQL 9.5.20. 
Following standard methodology~\cite{bakibayev12fdb}, we remove the system overhead
as much as possible and make sure that the input relations are cached in memory:
we turn off fsync, synchronous\_commit, full\_page\_writes, we set bgwriter\_delay to the maximum (10 sec), bgwriter\_lru\_maxpages to 0, checkpoint\_timeout to 1 hour and max\_wal\_size to a large value (1000 GB). 
We also give shared\_buffers and work\_mem 32 GB and set the isolation level to the lowest possible (READ UNCOMMITED).
Like before we run 200 instances and report the median result.
For each of those instances, we run PSQL 3 times and time only the last run to ensure that the input relations are cached.

Our results for the synthetic datasets are gathered in \Cref{tab:psql}.
Overall, we found our \NAIVE implementation to be $12\%$ to $54\%$ faster than PSQL.
Although the two implementations are not directly comparable since they are written
in different languages and PostgreSQL is a full-fledged database system,
this result shows that our \NAIVE implementation is competitive with existing
batch algorithms.

\begin{figure*}[h]
\centering
\footnotesize
\begin{tabular}{|c|c|c|c|c|c|c|c|c|}
\hline
 & \makecell{\textbf{3-Path}\\$n=10^5$\\$10^7$ Results} & \makecell{\textbf{4-Path}\\$n=10^4$\\$10^7$ Results} & \makecell{\textbf{6-Path}\\$n=10^2$\\$10^7$ Results} & \makecell{\textbf{3-Star}\\$n=10^5$\\$10^7$ Results} & \makecell{\textbf{4-Star}\\$n=10^4$\\$10^7$ Results} &
 \makecell{\textbf{6-Star}\\$n=10^2$\\$10^7$ Results} & \makecell{\textbf{4-Cycle}\\$n=5 \cdot 10^3$\\$1.25 \cdot 10^7$ Results} & 
 \makecell{\textbf{6-Cycle}\\$n=4 \cdot 10^2$\\$1.6 \cdot 10^7$ Results}\\
 \hline
 \NAIVE & 9.74 & 8.27 & 7.51 & 8.32 & 7.34 & 7.35 & 14.09 & 23.72\\
 \hline
 \PSQL & 12.18 & 13.39 & 16.45 & 11.84 & 13.10 & 16.04 & 30.36 & 26.86\\
 \hline
 $\%$ faster & 20\% & 38\% & 54\% & 30\% & 44\% & 54\% & 54\% & 12\%\\
 \hline
\end{tabular} 
\caption{Seconds to return the full result for \NAIVE and \PSQL on our synthetic data.
}
\label{tab:psql}
\vspace{-2mm}
\end{figure*}

\section{Extensions}

\subsection{Join queries with projections}

So far, we have only considered \emph{full} conjunctive queries, i.e.\ those that can be written in Datalog as $Q(\vec x) \datarule g_1(\vec x_1),\ldots, g_\stages(\vec x_\stages)$ 
where $\vec x = \bigcup_{i=1}^{\stages} \vec x_i$.
A non-full conjunctive query (also called a join query with projection) 
$Q(\vec y) \datarule g_1(\vec x_1),\ldots, g_\stages(\vec x_\stages)$ has $\vec y \subset \vec x$ and asks to return only the \emph{free} variables $\vec y$,
while the remaining variables $\vec x \setminus \vec y$ 
(also called existentially quantified
variables) are projected away.
As mentioned in \cref{sec:cq_def}, our approach covers in principle all conjunctive queries:
For non-full queries, we can perform the enumeration as if they were full and then project the output tuples on the free variables, discarding the duplicates.
However, this approach might not always be ideal.
In this section, we investigate the different possible semantics of ranked enumeration with projections and extend our approach to cover some of these cases more efficiently.

\introparagraph{Two principal ways to define ranked enumeration}
There are at least two reasonable semantics for ranked enumeration over joins queries with projections.
Consider the $2$-path query $Q(x_1) \datarule R_1(x_1, x_2), R_2(x_2, x_3)$ where we want to return only the first attribute $x_1$.
Recall that we assume input weights have been placed on the relation tuples.
What do we do if the same value $v_1$ of $x_1$ appears in two different results 
of the full query 
$(v_1, v_2, v_3)$ and $(v_1, v_2', v_3')$ 
with weights $w$ and $w'$, respectively?
We identify two different semantics:

\begin{enumerate}
\item \emph{\Allweights} 
semantics: The first option is to return $v_1$ twice with both weights $w, w'$ in the correct sequence.
The corresponding SQL query would be:
\begin{verbatim}
    SELECT   R1.A1, R1.W + R2.W as Weight
    FROM     R1, R2
    WHERE    R1.A2=R2.A2
    ORDER BY Weight ASC
    LIMIT    k
\end{verbatim}
In general, we return the results and the weights that the full conjunctive query would return
projected on the variables $\vec y$.\footnote{In the case that two output results have the same weight, we still return both of them.}
Thus, it is trivial to extend our approach to \allweights semantics, as it is essentially equivalent to the ranked enumeration of full CQs:
We enumerate the full CQ $Q(\vec x)$ as before, yet we apply a projection $\pi_{\vec y}(r)$ 
to the output tuples $r$ before returning them.
The guarantees that we get in this case are the same as in \cref{TH:MAIN}.

\item \emph{\Minweight} semantics: 
The second option is to return $v_1$ only once with the best (minimum) of the two weights.
In this case, the SQL query is:

\begin{verbatim}
    SELECT 	 X.A1, X.Weight
    FROM
        (SELECT   R1.A1, MIN(R1.W + R2.W) as Weight
        FROM     R1, R2
        WHERE    R1.A2=R2.A2
        GROUP BY R1.A1) X
    ORDER BY X.Weight
    LIMIT 	k
\end{verbatim}
If $r$ is a result of the query $Q(\vec y)$ denoted by $r \in Q(\vec y)$, we return the results of the query $Q(\vec y)$ 
ranked by weight 
$w(r) = \min_{r' \in Q(\vec y) \::\: \pi_{\vec y}(r') = r} \{ w(r') \}$.
In other words, each returned tuple $r$ has the minimum weight over all tuples $r'$ of the full query $Q(\vec x)$ 
that map to $r$ if projected on $\vec y$.
While it is still possible to apply the projection as a
post-filtering step and get a correct algorithm,
there is no guarantee on the delay.
If a lot of consecutive results project to the same variables $y$, 
then we might have to wait for the next result for a time that can be as high as $\O(|\mathrm{out}|)$ in the worst case.
We next discuss a non-trivial extension that can handle \minweight semantics with logarithmic delay in certain cases.

\end{enumerate}

\introparagraph{\Minweight semantics}
To efficiently handle 
\minweight semantics, we resort to the techniques that have been developed for \emph{unranked} constant-delay enumeration.
Bagan et al.~\cite{bagan07constenum} show that the acylic queries which are \emph{free-connex} 
admit constant delay enumeration after linear-time preprocessing.
Multiple characterizations of these queries exist \cite{Berkholz20tutorial}.
One particularly useful way to identify them is to check for the
acyclicity of the corresponding hypergraph that includes an additional hyperedge connecting the head variables $\vec y$ \cite{brault13thesis}.
For further reading on the topic, we refer the reader to the
paper that introduced the concept \cite{bagan07constenum} and recent surveys and tutorials \cite{Berkholz20tutorial,durand20tutorial,DBLP:journals/sigmod/Segoufin15}.
In the following, we proceed to modify some of the common
techniques for free-connex acyclic queries in order to
accommodate efficient ranked enumeration under \minweight semantics.

Intuitively, unranked constant-delay enumeration on free-connex acyclic queries \cite{Berkholz20tutorial}
works by constructing an appropriate join tree
that groups the free variables together.
The tree is first swept bottom-up with
semi-joins as in the
Yannakakis algorithm \cite{DBLP:conf/vldb/Yannakakis81}
and then pruned so that only the free variables remain.
The answers to the query can then be enumerated as if it were full (with no projections).
We present a modification of this approach for ranked enumeration under \minweight semantics with a logarithmic (instead of constant) delay.
Essentially, it involves replacing the semi-joins with our Dynamic Programming scheme.

\begin{figure*}[tb!]
    \centering    
    \begin{subfigure}{0.3\linewidth}
        \centering
        \includegraphics[width=0.45\linewidth]{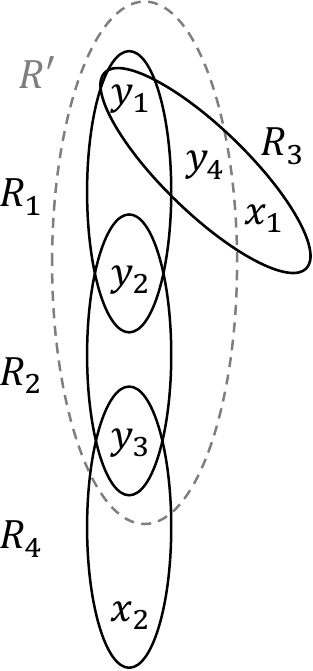}
        \caption{The hypergraph of the query and an additional atom $R'$ that is used to check the free-connex property.}
		\label{fig:fc_query}
    \end{subfigure}%
    \hfill
    \begin{subfigure}{0.4\linewidth}
        \centering
        \includegraphics[width=\linewidth]{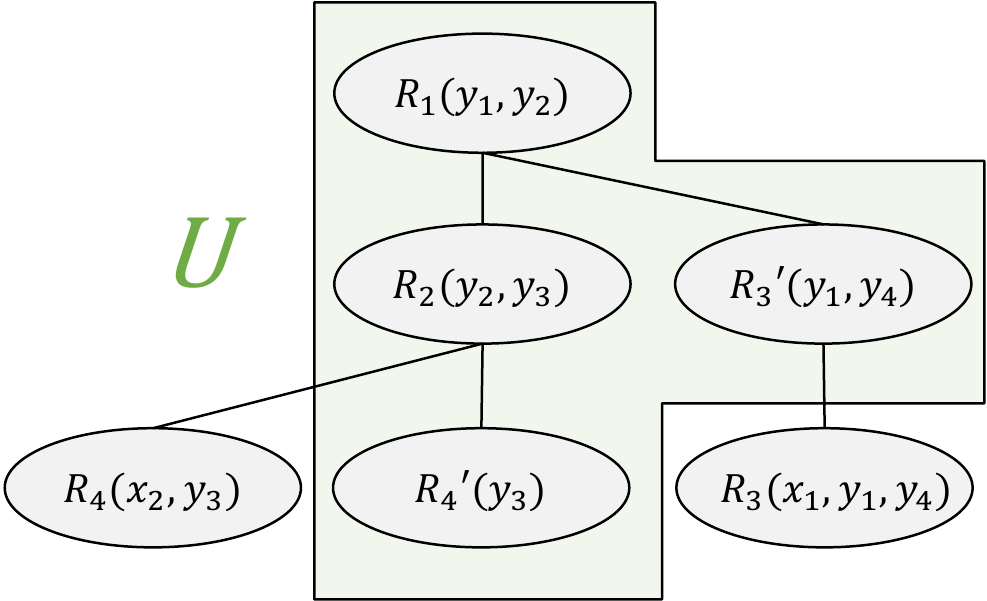}
        \caption{The join tree with a connected subset of nodes $U$ that contain precisely the free variables.}
		\label{fig:fc_join_tree}
    \end{subfigure}%
    \hfill
    \begin{subfigure}{0.25\linewidth}
        \centering
        \includegraphics[width=0.95\linewidth]{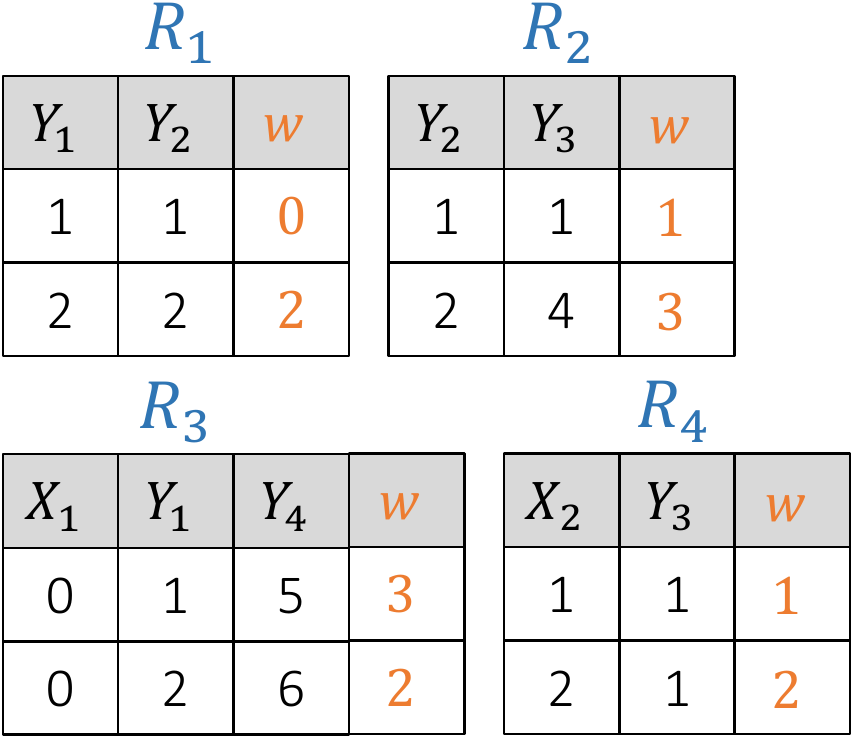}
        \caption{An example database instance.}
		\label{fig:fc_relations}
    \end{subfigure}%

    \begin{subfigure}{0.48\linewidth}
        \centering
        \includegraphics[width=\linewidth]{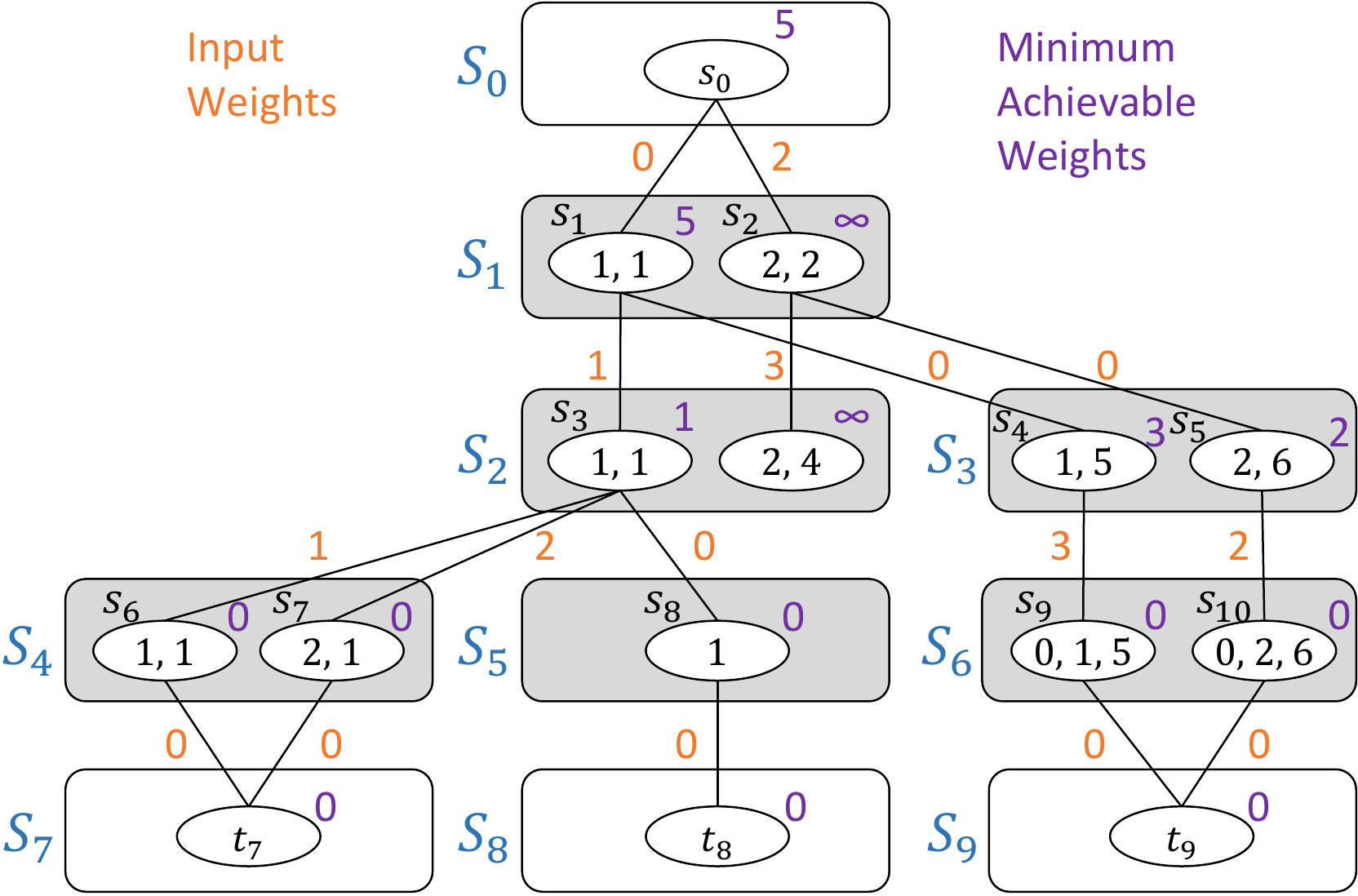}
        \caption{T-DP state-space $T$ that corresponds to the full query $Q(x_1, x_2, y_1, y_2, y_3, y_4)$ using the join tree of \cref{fig:fc_join_tree}.
        For a state $s$, an identifier is depicted on its top-left and $\solW_1(s)$ on its top-right.
        }
		\label{fig:fc_tdp1}
    \end{subfigure}%
    \hfill
    \begin{subfigure}{0.48\linewidth}
        \centering
        \includegraphics[width=\linewidth]{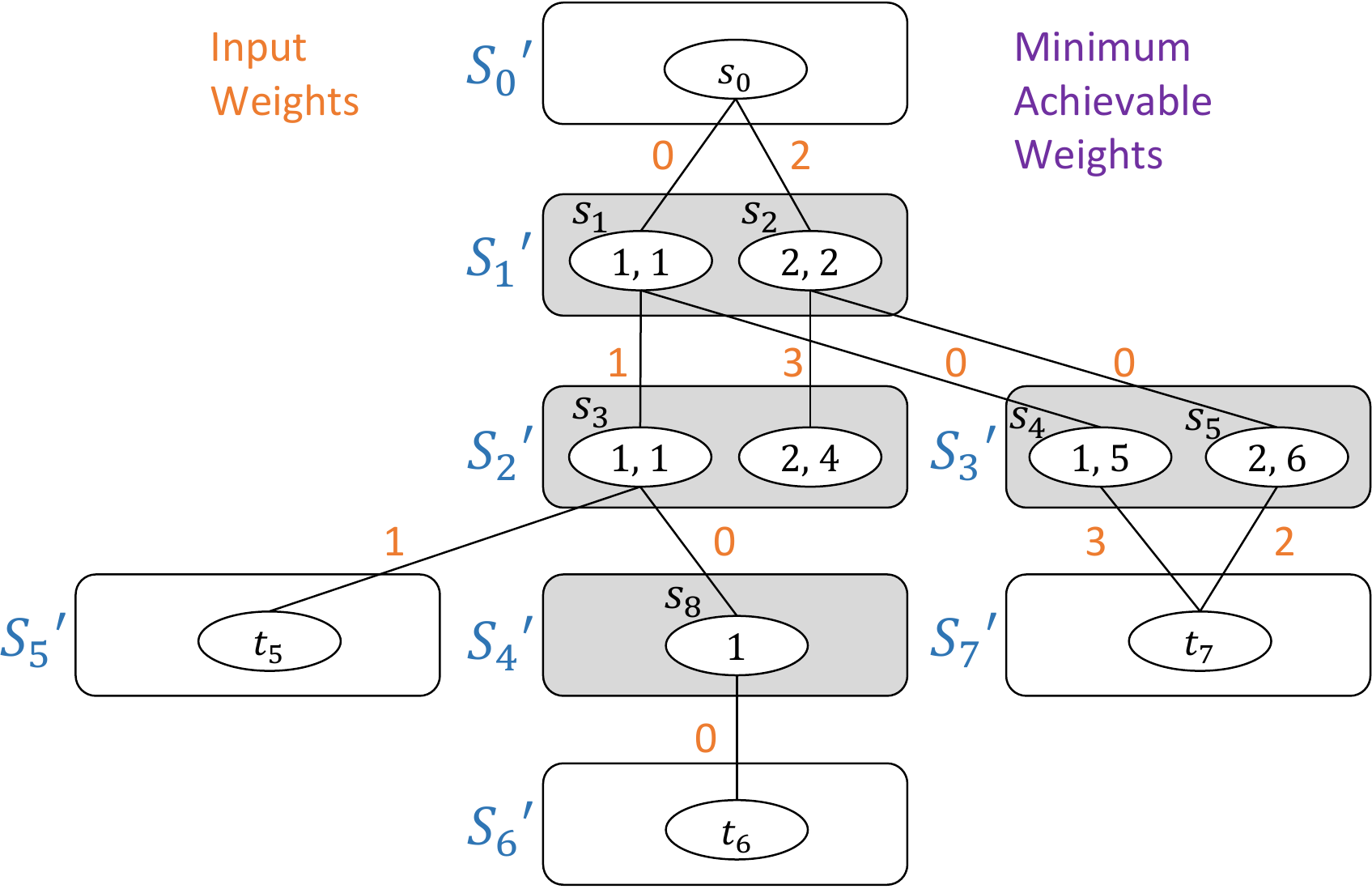}
        \caption{T-DP state-space $T'$ used for the ranked enumeration of $Q(y_1, y_2, y_3, y_4)$.
        Notice that stages not in $U$ have been removed and replaced by artificial terminal ones. The input weights have been modified accordingly.}
		\label{fig:fc_tdp2}
    \end{subfigure}%
    
    \caption{\Cref{ex:fc}: ranked enumeration under \minweight semantics for the acyclic and free-connex query
    $Q(y_1, y_2, y_3, y_4) \datarule R_1(y_1, y_2), R_2(y_2, y_3), R_3(x_1, y_1, y_4), R_4(x_2, y_3)$ 
    on an example database.}
    \label{fig:fc}
\end{figure*}

\begin{example}[Free-connex query]
\label{ex:fc}
Consider the free-connex acyclic query    
$Q(y_1, y_2, y_3, y_4) \datarule R_1(y_1, y_2), R_2(y_2, y_3), R_3(x_1, y_1, y_4), R_4(x_2, y_3)$.
We can check that it is indeed free-connex if we
add an additional relation $R'$ 
that encompasses all the free variables $\vec y$
(\cref{fig:fc_query})
and then verify that the modified query is acyclic
(e.g. by finding a join tree).
Using the algorithm of Brault-Baron~\cite{brault13thesis},
we can construct a join tree for $Q$ such that a
connected subset of nodes $U$ contain all the free variables
and no existentially quantified ones (\cref{fig:fc_join_tree}).
In order to achieve that, we have to introduce two additional atoms with two new relation symbols: 
$R_3' = \pi_{Y_1, Y_4}(R_3)$ and
$R_4' = \pi_{Y_3}(R_4)$.
Given this join tree and a database instance (\cref{fig:fc_relations}), 
we can construct the T-DP state space $T$ shown in \cref{fig:fc_tdp1}.
The non-artificial stages (depicted in gray) correspond to
the nodes of the join tree and are populated by states 
(depicted by small white circles)
that correspond to tuples from the relations.
If we were to run our ranked enumeration algorithms on $T$,
we would enumerate the answers to the full query 
$Q(x_1, x_2, y_1, y_2, y_3, y_4)$.
Instead, we only run the bottom-up phase that computes the values $\solW_1(s)$ for all states $s$, 
shown with purple color in the figure.
We proceed by removing the stages that do not belong to $U$ and replacing them with artifical terminal nodes, thereby getting a modified state-space $T'$ shown in \cref{fig:fc_tdp2}.
Observe that ranked enumeration on $T'$ will now enumerate the answers to the query $Q(y_1, y_2, y_3, y_4)$.
To get the correct \minweight semantics, we also have to modify the input weights on $T'$.
Consider state $s_3 \in S_2$ on $T$ that has two branches, one towards $S_4$ and one towards $S_5$.
For the first branch, we have to choose among two decisions $(s_3, s_6)$ and $(s_3, s_7)$.
The minimum is achieved with $(s_3, s_6)$ since 
$1 = w(s_3, s_6) + \solW_1(s_6) < w(s_3, s_7) + \solW_1(s_7) = 2$.
Therefore, when we remove stage $S_4$ in $T'$ and replace it with a terminal stage 
$S_5' = \{ t_5 \}$, we set the weight of the 
decision $(s_3, t_5)$ to be equal to that minimum, i.e.,
$w(s_3, t_5) = \min_{(s_3, s) \in \Dec_{24}} 
\big\{ 
\weight(s_3, s) \aggr \solW_1(s)
\big\}
= 1
$.
Notice that the minimum achievable weights per branch that we need have
already been computed from the bottom-up phase on $T$ (see \cref{eq:TDP_recursion}).
\end{example}

\begin{theorem}[Free-connex acyclic queries]
Ranked enumeration of the results of a free-connex acyclic query under \minweight semantics can be performed with $\TTF = \O(n)$ and $\Del(k) = \O(\log k)$ in data complexity.
\end{theorem}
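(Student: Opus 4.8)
The plan is to follow exactly the construction illustrated in \Cref{ex:fc}: reduce ranked enumeration under \minweight semantics to a single T-DP problem on a pruned state space, and then invoke the machinery of \Cref{TH:MAIN}. First I would invoke Brault-Baron's characterization~\cite{brault13thesis}: since $Q$ is free-connex acyclic, there is a join tree of $Q$ (possibly after adding projected copies of some atoms, such as $R_3'=\pi_{Y_1,Y_4}(R_3)$ in the example) in which a \emph{connected} set of nodes $U$ contains precisely the free variables $\vec y$ and no existentially quantified variable. Computing this tree is $\O(|Q|)$, hence $\O(1)$ in data complexity. From this join tree I build the T-DP state space $T$ for the \emph{full} query as in \Cref{sec:tdp}: one stage per tree node, one state per surviving tuple, and decisions given by the parent--child joins. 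Running only the bottom-up phase of \Cref{eq:TDP_recursion} on $T$ takes $\O(n)$ time in data complexity (it is the Yannakakis semi-join pass carried out over the tropical semiring), and by \Cref{TH:T-DP} it correctly yields $\solW_1(s)$ for every state $s$, together with the per-branch minima $\min_{(s,s_c)\in\Dec_{pc}}\{\weight(s,s_c)\aggr\solW_1(s_c)\}$ that are computed on the way.

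Next I would \emph{prune} $T$ to a state space $T'$ as follows. Because $U$ is connected, the stages of $T$ not in $U$ form a collection of subtrees, each hanging off some branch of a boundary node $p\in U$. For each such branch $c$ of a boundary node $s\in\Sset_p$, I delete the entire pruned subtree and replace it with a single artificial terminal stage reached by one edge whose weight is set to the per-branch minimum $w(s,t_c):=\min_{(s,s_c)\in\Dec_{pc}}\{\weight(s,s_c)\aggr\solW_1(s_c)\}$ already available from the bottom-up pass. (For leaves of $U$ that had no pruned child, nothing changes; artificial terminals with weight $0$ are added if needed to meet the leaf-uniqueness convention of \Cref{sec:tdp}.) This rewriting touches each state and decision of $T$ at most a constant number of times, hence costs $\O(n)$.

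The correctness argument, which I expect to be the technical heart, has two parts. \emph{No duplicates and exact projection:} since $U$ is connected and contains exactly the variables $\vec y$, a solution of the T-DP problem $T'$ is a tree with one state per stage of $U$; two solutions that agree on the states lying in the atoms of $\vec y$ assign the same values to $\vec y$, and conversely each $\vec y$-assignment appearing in $Q(\vec y)$ arises from exactly one $T'$-solution (the extra projected atoms only enforce consistency, they do not split an assignment). \emph{Weight equals \minweight:} the objective \cref{eq:cost_treeDP} on $T'$ restricted to a fixed $\vec y$-assignment equals the weight on the $U$-part plus the sum of the artificial-edge weights; by construction each artificial-edge weight is the minimum of $\weight(s,s_c)\aggr\solW_1(s_c)$ over the pruned branch, and because the branches of a state are chosen \emph{independently} in \cref{eq:TDP_recursion} (distributivity of $\otimes$ over $\min$, i.e.\ the T-DP principle of optimality), this product of per-branch minima is exactly $\min$ over all completions of $\vec y$ to a full witness. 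Hence $\weight$ of the $T'$-solution for $\vec y$ equals $\min_{r':\pi_{\vec y}(r')=r}w(r')$, which is precisely the \minweight ranking.

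Finally, $T'$ is an ordinary T-DP instance whose input size is $\O(n)$ (pruning only removes tuples and adds a constant number of terminals per boundary branch). Running \HEAP on $T'$ as in \Cref{TH:MAIN} gives $\TTF=\O(T(\mathcal{A}))$ with the trivial ``decomposition'' $\mathcal{A}$ on an acyclic query, i.e.\ $\TTF=\O(n)$, together with $\Del(k)=\O(\log k)$ in data complexity; combining with the $\O(n)$ preprocessing for the bottom-up pass and the pruning completes the proof. The only place I would need to be careful is confirming that the per-branch-minimum rewriting does not interfere with the inverse-element subtleties of \Cref{sec:inverse}: since $T'$ may still branch, the $\O(\stages^2)$ delay term of \Cref{sec:inverse} applies in the absence of an inverse, but it is a constant in data complexity and does not affect the stated bound.
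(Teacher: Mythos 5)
Your proposal is correct and follows essentially the same route as the paper's proof: build the T-DP state space for the full query from a Brault-Baron join tree with the connected free-variable subset $U$, run the bottom-up pass, prune to $T'$ by replacing each non-$U$ branch of a boundary state with an artificial terminal edge weighted by the per-branch minimum $\min_{(s,s_c)\in\Dec_{pc}}\{\weight(s,s_c)\aggr\solW_1(s_c)\}$, argue the bijection between $T'$-solutions and answers of $Q(\vec y)$ with \minweight weights, and run \HEAP on $T'$. Your justification of why the sum of per-branch minima equals the minimum over all completions (via the independence of branches in \cref{eq:TDP_recursion}) is, if anything, slightly more explicit than the paper's.
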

\begin{proof}
Let $\vec y$ be the set of free variables of a  free-connex acyclic query $Q(\vec y)$.
Recall that in a join tree, each node represents one atom of the query --
for a node $t$, let $\varset(t)$ denote the set of variables of the corresponding atom.
Since the query is free-connex, we can compute 
a join tree with a connected subset of nodes $U$ that satisfy $\bigcup_{t \in U} \varset(t) = \vec y$ 
in $\O(|Q|)$ time using known techniques \cite{Berkholz20tutorial,brault13thesis}.
In order to achieve this, some additional atoms might be introduced in the query.
Set the input weights of all the tuples or T-DP states materialized from those atoms to $0$.
Also set the root of the tree to be some node $t \in U$.

Next, from the join tree, construct in a bottom-up fashion the corresponding T-DP state space graph as in \cref{sec:tdp} and denote it as $T$.
This takes $\O(n)$ time.
Every T-DP solution of $T$ 
is by construction an answer to the full query $Q(\vec x)$ (with no projections).
Given that (1) $U$ contains all the free variables $y$ needed for answering $Q(\vec y)$ and (2) bottom-up consistency has already been enforced, 
it suffices to perform ranked enumeration only to the subtree induced by $U$.
To do that, create a copy $T'$ of $T$ 
that only retains the stages
that belong to $U$
\footnote{By slightly abusing the notation, we say that a stage belongs to $U$ if its corresponding node in the join tree belongs to $U$}.
Complete $T'$ with an artificial starting stage as the root of the tree
and terminal stages as the leaves, exactly as in \cref{sec:tdp}.
We argue that 
there is a 1 to 1 correspondence between the T-DP solutions 
of $T'$ and the answers to $Q(\vec y)$.
To see this, first consider a T-DP solution $\sol$ of $T'$.
It must contain states that belong to $\SsetR$, 
(recall that these are the ones that were not removed from the bottom-up pass) 
hence they can reach the terminal nodes of the original T-DP state space graph $T$.
Thus, there is a way to extend $\sol$ to a solution to the original state-space $T$, 
which corresponds to an answer to the full query $Q(\vec x)$.
Thus, the values assigned to the variables $\vec y$ are an answer to $Q(\vec y)$.
Conversely, consider an answer to $Q(\vec y)$ which is an assignment of values to the the $y$ variables.
Since the subset $U$ of the join tree contains precisely the free variables $y$, 
we can find tuples in the materialized relations of the join tree and equivalently, states in $T'$ that form a T-DP solution using those values.

To get \minweight semantics, we have to make adjustments to the input weights of $T'$.
In particular, we introduce some additional terminal stages 
to $T'$ and set the weights of the decisions that reach them according to the weights of $T$ that have been removed from $T'$.
Let $S_r$ be a (non-artificial) stage in $T$ and $S_p = \parent(S_r)$ its parent such that $S_r \notin U$ and $S_p \in U$.
Also let $S_p'$ be the copy of $S_p$ in $T'$.
We add to $T'$ a stage $S_t' = \{ s_t' \}$ and decisions 
$(s_p', s_t')$ for $s_p' \in S_p'$ with $\solW_1(s_p) \neq \infty$.
Notice that this does not add or remove any T-DP solutions from $T'$.
The weight of the new decisions is set to be the minimum achievable weight that $s_p$ could reach in $T$ from the $S_r$ branch:
$w(s_p', s_t') = \min_{(s_p, s_r) \in \Dec_{pr}} =
\big\{ 
\weight(s_p, s_r) \aggr \solW_1(s_r)
\big\}$.
This can be done in time linear in $T$.
For a solution $\sol$ of $T$ and a solution $\sol'$ of $T'$, let $\sol' \subset \sol$ denote the fact that $\sol$ is an extension of $\sol'$, i.e., they agree on the subset $U$.
By our construction, it is easy to see that for a solution $\sol'$ of $T'$ we get $w(\sol') = \min_{\sol : \sol' \subset \sol} w(\sol)$.

The total time spent so far is linear in the size of the database.
After we modify the weights of the decisions as described above,
we perform a bottom-up pass on $T'$ once more
and finally, apply \HEAP on $T'$ to get ranked enumeration with $\O(\log k)$ delay.
\end{proof}

We proceed to strengthen the above result with lower bounds 
that also originate from the works on unranked enumeration.
First, we state some complexity-theoretic assumptions that are commonly used and on which we will rely on.
For more information on these conjectures, we refer the reader to
the extensive discussion by Berkholz et al. \cite{Berkholz20tutorial}.
\begin{itemize}
    \item \BMM is the hypothesis that two $n \times n$ Boolean matrices cannot be multiplied over the Boolean semiring in time $\O(n^2)$.
    \item \SBMM is the hypothesis that two Boolean matrices cannot be multiplied over the Boolean semiring in time $\O(m)$, where $m$ is the number of the non-zero entries in the input and the output.
    \item \TRIANGLE is the hypothesis that a triangle cannot be identified in a graph of $m$ edges within $\O(m)$ time.
    \item \HYPERCLQ is the hypothesis that a $(k+1, k)$-hyperclique cannot be identified in a $k$-uniform hypergraph within $\O(m)$, where $m$ is the number of hyperedges and $k \geq 3$. 
    Note that a $(k+1, k)$-hyperclique is a set of $k+1$ vertices such that every $k$-element subset is a hyperedge
    and in a $k$-uniform hypergraph, all hyperedges contain $k$ vertices.
\end{itemize}
Besides the efficient algorithm for free-connex acyclic queries,
Bagan et al.~\cite{bagan07constenum} provide a complementary negative result for queries without self-joins that rests on the \BMM hypothesis.
In particular, if a self-join-free acyclic query is not free-connex then we cannot enumerate its answers with constant delay after linear preprocessing.
This essentially creates a dichotomy for the class of self-join-free acyclic queries: 
the only ones that can be handled efficiently are those that are free-connex.
Later developments replace the original \BMM assumption with \SBMM, which is considered more likely to be true \cite{Berkholz20tutorial}
and further extend the dichotomy to all self-join-free conjunctive queries with some additional assumptions.
We restate this result below:

\begin{theorem}[\cite{bagan07constenum,brault13thesis}]
Assuming \SBMM, \TRIANGLE and \HYPERCLQ, 
unranked enumeration of the results of a self-join-free conjunctive query 
that is not acyclic free-connex 
cannot be done with $O(n)$ preprocessing 
and $\O(\log n)$ delay. 
\end{theorem}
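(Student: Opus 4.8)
This is a restatement of a known dichotomy, so the plan is to reconstruct the argument as a case analysis on \emph{how} a self-join-free CQ $Q$ fails to be acyclic free-connex, reducing one of the assumed-hard problems to enumeration in each case. The glue holding the two cases together is a single observation: any enumeration algorithm with $\O(n)$ preprocessing and $\O(\log n)$ delay on a database of size $n$ decides whether $Q$ has an answer in time $\O(n)$, and, if run to completion, produces the entire answer set in time $\O(n + |\mathrm{out}|\log n)$; one or the other of these bounds will be ``too fast'' under the conjectures.

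\textbf{Case 1: $Q$ is not acyclic.} Here I would invoke the structural characterization of Brault-Baron~\cite{brault13thesis}: a self-join-free CQ whose hypergraph is not $\alpha$-acyclic contains, after projecting away irrelevant atoms and variables, either a simple cycle (the binary obstruction) or a $(k+1,k)$-hyperclique pattern for some $k\ge 3$. From an instance of \TRIANGLE (resp.\ \HYPERCLQ) with $m$ (hyper)edges I would build a database $D$ of size $\O(m)$ over the schema of $Q$: load the (hyper)edge relation into the atoms lying along the cyclic obstruction, identifying attributes according to the cycle/hyperclique adjacency, and fill every other atom with ``transparent'' content that does not restrict the join. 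Self-join-freeness is exactly what makes this possible, since it lets each atom be populated independently. Then $Q(D)\neq\emptyset$ iff the input graph/hypergraph contains the pattern, so an $\O(n)$-preprocessing, $\O(\log n)$-delay algorithm would solve \TRIANGLE (resp.\ \HYPERCLQ) in $\O(m + \log m)=\O(m)$ time, a contradiction.

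\textbf{Case 2: $Q$ is acyclic but not free-connex.} By the characterization recalled before \Cref{ex:fc} (acyclicity fails once the head-variable hyperedge $\vec y$ is added), $Q$ has a \emph{free-path}: two free variables linked through a chain of existential variables with no atom covering both endpoints. I would reduce from \SBMM: given Boolean matrices $A$ and $B$ with $m$ nonzeros in total, put the nonzero pattern of $A$ into the atom at one end of the free-path and that of $B$ into the atom at the other end, letting the shared existential variable play the role of the summation index, while making every remaining atom transparent; then the projection of $Q(D)$ onto the two endpoint variables is exactly the support of $AB$, over a database of size $\O(m)$. An $\O(n)$-preprocessing, $\O(\log n)$-delay enumeration therefore yields $AB$ in time $\O(m + |AB|\log m)$; as in the cited works~\cite{bagan07constenum,brault13thesis,Berkholz20tutorial}, this contradicts \SBMM, the residual logarithmic factor being absorbed by a standard amplification of the instance (or by the sharper form of these lower bounds stated for the quasilinear-preprocessing, polylogarithmic-delay regime).

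\textbf{Main obstacle.} The delicate ingredient is not the two reductions but the structural dichotomy that feeds them: proving that every self-join-free $Q$ that is not acyclic free-connex exhibits one of the three obstructions as a substructure that survives projection faithfully enough to transfer hardness. This is the core of Brault-Baron's analysis, and it is also where the self-join-free hypothesis is indispensable, since it is precisely what guarantees that the atoms along an obstruction can be set to arbitrary relations independently of one another.
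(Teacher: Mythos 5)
The paper does not prove this theorem at all: it is imported verbatim from the cited literature (Bagan et al.\ and Brault-Baron, as modernized in the Berkholz et al.\ tutorial), so there is no in-paper proof to compare against. Your reconstruction is a faithful outline of the argument in those sources: the case split on \emph{why} free-connexity fails, the free-path/matrix-product reduction for the acyclic non-free-connex case, and the chordless-cycle/non-conformal-clique obstructions feeding \TRIANGLE and \HYPERCLQ in the cyclic case, with self-join-freeness used exactly where you say it is. One point deserves more care than a parenthetical: with $\O(\log n)$ delay the full output is produced in $\O(m + |\mathrm{out}|\log m)$, which does \emph{not} literally contradict \SBMM, \TRIANGLE, or \HYPERCLQ as the paper defines them (each forbids only an $\O(m)$ algorithm). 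Closing that gap requires either the stronger "no $\O(m\,\mathrm{polylog}\,m)$" form of the hypotheses that the cited works actually use for the sub-polylogarithmic-delay regime, or an argument that the decision version alone (answerable after preprocessing plus one delay, i.e.\ in $\O(m + \log m) = \O(m)$) suffices — which works for \TRIANGLE and \HYPERCLQ but not for the \SBMM case, where the whole product must be emitted. You flag this, but in a full write-up it is the step most likely to be waved at rather than proved.
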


The following is immediate, since ranked enumeration
is a harder problem than unranked enumeration 
and $\log k = \O(\log n)$ in data complexity:

\begin{corollary}[Ranked enumeration of conjunctive queries]
Assuming \SBMM, \TRIANGLE and \HYPERCLQ, 
ranked enumeration of the results of a self-join free conjunctive query under \minweight semantics
can be performed with $\TTF = \O(n)$
and $\Del(k) = \O(\log k)$ delay
in data complexity
if and only if the query is acyclic and free-connex.
\end{corollary}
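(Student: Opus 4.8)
The plan is to prove the two directions of the biconditional separately, with the forward direction being an immediate corollary of the preceding theorem on free-connex acyclic queries and the backward direction reducing to the dichotomy theorem for \emph{unranked} enumeration.

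For the \emph{if} direction, suppose $Q$ is acyclic and free-connex. Then the theorem proved just above already supplies an algorithm for ranked enumeration of $Q$ under \minweight semantics with $\TTF = \O(n)$ and $\Del(k) = \O(\log k)$ in data complexity, so there is nothing further to argue.

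For the \emph{only if} direction, I would argue by contraposition: assume $Q$ is self-join-free but \emph{not} acyclic free-connex, and show that, under \SBMM, \TRIANGLE, and \HYPERCLQ, no algorithm can perform ranked enumeration of $Q(\vec y)$ under \minweight semantics with $\TTF = \O(n)$ and $\Del(k) = \O(\log k)$. The key observation is that any such algorithm is in particular an \emph{unranked} enumeration algorithm for $Q(\vec y)$. Indeed, instantiate the weight function so that every input tuple receives the same weight (for instance the $\1$ element of the dioid, or any fixed constant in the tropical semiring). Then all answers of $Q(\vec y)$ have the same \minweight, and — crucially, because we are under \minweight rather than \allweights semantics — each distinct projected answer is emitted exactly once; the order in which they are returned is irrelevant for unranked enumeration. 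Hence running the ranked algorithm on this instance enumerates the answers of $Q(\vec y)$ without repetition. Preprocessing carries over verbatim, $\TTF = \O(n)$; and for the delay, observe that in data complexity the total number of answers is at most $n^{m}$, where $m$ is the constant number of variables, so after $k$ results $k \le n^{m}$ and thus $\log k \le m \log n = \O(\log n)$, giving $\Del(k) = \O(\log k) = \O(\log n)$. This contradicts the restated dichotomy theorem of Bagan et al.\ and Brault-Baron, which asserts that a self-join-free conjunctive query that is not acyclic free-connex cannot be enumerated (even unranked) with $\O(n)$ preprocessing and $\O(\log n)$ delay under those hypotheses. The contrapositive is established, and the two directions together yield the corollary.

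The one point that warrants care is the reduction step in the backward direction: one must check that supplying a constant weight function does not invalidate any property the ranked algorithm depends on — it does not, since a constant function is a perfectly legitimate weight function over any selective dioid — and that it is the \minweight semantics (not \allweights) that collapses duplicates and so makes the emitted stream a genuine duplicate-free unranked enumeration. Everything else is routine bookkeeping with the $\log k = \O(\log n)$ bound in data complexity.
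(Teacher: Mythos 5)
Your proof is correct and follows essentially the same route as the paper, which dispatches the corollary in one line by noting that ranked enumeration is harder than unranked enumeration and that $\log k = \O(\log n)$ in data complexity. Your elaboration of the ``only if'' direction --- instantiating a constant weight function so that the \minweight-semantics algorithm becomes a duplicate-free unranked enumerator, then invoking the dichotomy of Bagan et al.\ and Brault-Baron --- is exactly the reduction the paper leaves implicit.
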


\subsection{Minimum-cost homomorphism}
\label{sec:mincosthomo}

The connections between conjunctive query evaluation, constraint satisfaction, 
and the hypergraph homomorphism problem are well-known
\cite{DBLP:conf/stoc/ChandraM77,DBLP:journals/siamcomp/FederV98,KOLAITIS2000302}.
We now apply our framework to the minimum-cost homomorphism problem
and generalize it in the same spirit 
as we generalized a standard Dynamic Programming (find the top-1 solution) to an any-$k$ problem.
In other words, we want to perform ranked enumeration, finding the min cost homomorphism, then the 2nd lowest cost homomorphism, etc.
For that purpose we need to introduce a slight variation of the well-studied hypertree decompositions.

\introparagraph{Pinned hypertree decomposition}
A hypergraph $\H(N,E)$ 
is a pair $\H = (N, E)$ 
where $N$ is a set of elements called \emph{nodes}, 
and $E$ is a set of non-empty subsets of $N$ (i.e.\ $E \subseteq 2^N \setminus \emptyset$) called \emph{edges}.

\begin{definition}[TD~\cite{RobertsonS:1986}]
\label{def:TD}
A tree decomposition  
of a hypergraph $\H(N, E)$
is a pair $\langle T, \chi \rangle$ 
where $T = (V, F)$ is a tree, and $\chi$ is a labeling function 
assigning to each vertex $v \in V$ 
a set of vertices $\chi(v) \subseteq N$, 
such that the following three conditions are satisfied: 
	\begin{enumerate}[nolistsep]
	\item (node coverage) 
	for each node $b \in N$, there exists 
	$v \in V$ 
	such that $b \in \chi(v)$; 
	
	\item (edge coverage)
	for each hyperedge $h \in E$, 
	there exists $v \in V$ 
	such that $h \subseteq \chi(v)$; 
	and 
	
	\item (coherence)
	for each node $b \in N$, 
	the set 
	$\chi^{-1}(b) =\{ v \in V \mid b \in \chi(v) \}$ 
	induces a connected subtree of $T$.

	\end{enumerate}
\end{definition}

To distinguish between vertices of $\H$ and $T$, 
we will denote the former \emph{nodes}  $N$,
and the latter \emph{vertices} $V$.
Thus, we also call the set $\chi(v)$ for $v \in V$ the nodes of $v$.

\begin{definition}[HD~\cite{GottlobLS:2002}]
A (generalized) hypertree decomposition \HD{}
of a hypergraph $\H$ is a 
triple $\HD = \langle T, \chi, \lambda \rangle$, 
called a hypertree for $\H$, 
where $\langle T, \chi \rangle$ 
is a tree decomposition of $\H$, 
and $\lambda$ is a function labeling the vertices of $T$ by sets of hyperedges of $\H$ such that, 
\begin{enumerate}[nolistsep]
	\setcounter{enumi}{3}
	\item
	for each vertex $v$ of $T$, $\chi(v) \subseteq \bigcup_{h \in \lambda(v)} h$.\footnote{Notice we use continuous numbering for the conditions as they build upon each other and we will need 6 conditions in total for our formulation.}
\end{enumerate}
\end{definition}

In other words, the additional condition is that all nodes in the $\chi$ labeling of the TD are covered by hyperedges in the $\lambda$ labeling.

A rooted hypertree decomposition $\langle T, \chi, \lambda, r \rangle$ of $\H$ 
is obtained by additionally choosing a root $r \in V$,
which defines a child/parent relation between every pair of adjacent vertices, 
and ancestors/descendants in the usual way: 
In a rooted tree, the parent of a vertex is the vertex connected to it on the path to the root; 
every vertex except the root has a unique parent. 
A child of a vertex v is a vertex of which v is the parent. 
A descendant of any vertex $v$ is any vertex which is either the child of $v$ or is (recursively) 
the descendant of any of the children of $v$.
We write $p(v)$ for the parent of node $v$,
$C(v)$ for the set of children,
and $D(v)$ for the set of descendents. 
A node without children is called a leaf.

Define as \emph{reverse tree-order} the partial ordering on the vertices $V(T)$ with 
$u \leq_T v$ if and only if the unique path from $u$ to the the root passes through $v$~\cite{Diestel:2005}.
if $u <_T v$ we say that $u$ lies below $v$ in T.
We call 
\begin{align*}
	\lceil v \rceil := \{ u | u \leq_T v \}
\end{align*}
the \emph{down-closure} of $y$.
In other words,
$\lceil v \rceil = D(v) \cup \{ v \}$.

Let $\langle T, \chi, r \rangle$ be a rooted tree decomposition of $\H$. 
For a node $v \in V$, 
we denote $\chi(\lceil v \rceil) = \bigcup_u \chi(u)$, 
with $u \in \lceil v \rceil$. 
In other words, $\chi(\lceil  v \rceil)$ contains any node that is contained in either $v$ or any of its descendants.
We also define the subgraph $\H( \lceil v \rceil)$  as 
$\H( \lceil v \rceil) = \H[\chi( \lceil v \rceil)]$.

A vertex $v$ \emph{introduces} node $b$
if $b \in \chi(v) \setminus \bigcup_c \chi(c)$, 
where the union is taken over all children $c \in C(v)$ of $v$~\cite{BodlaenderBL:2013}.
In other words, a vertex introduces a node if that node is contained in the vertex but none of its children.
Analogously, a vertex $v$ \emph{forgets} (or ``projects away'') node $b$
if $b \in \bigcup_c \chi(c) \setminus \chi(v)$.
Since a node can only be present in a connected set of vertices (forming a subtree), 
each node can be introduced multiple times, but only forgotten once.

DP algorithms rely on the following two key properties, 
which follow easily from \cref{def:TD}:
($i$) first, $\H(\lceil r \rceil) = \H$;
($ii$) second, 
for every $v \in V$, 
the only vertices of $\H( \lceil  v \rceil )$ that 
(in $\H$) may be incident with edges that are not in $\H( \lceil v \rceil )$ are vertices in $\chi(v)$.

For our particular formulation of dynamic programming (DP) over hypertree decompositions,
we need to add one more labeling function to the hypertree decomposition. We explain the intuition first:
In a \HD{}, an edge $h \in E$ can be mapped to multiple vertices in $\HD$. 
However, when we add up the weights of a homomorphism, we need to make sure that weights are counted only once.
We thus use a ``pinning'' function that maps each $h$ to exactly one vertex in $T$ in which $h$ appears without projection 
(i.e., $\chi(v) \supseteq h$). 
We say that vertex $v$ ``pins'' edge $h$.\footnote{In the literature of tree decompositions of NP-hard problems, this problem is often solved by defining a ``nice'' tree decomposition that can be constructed from arbitrary tree decompositions with polynomial overhead, and subtracting weights at special vertices called ``joins.'' This construction requires $\oplus$ to be a group instead of a simpler monoid (because we need an inverse element). Because in our work, we do care about polynomial overhead, we prefer to use a definition that avoids requiring an inverse operation.}

\begin{definition}[Pinned HD]
A pinned hypertree decomposition 
of a hypergraph $\H$ is a 
quadruple $\langle T, \chi, \lambda, \rho \rangle$, 
where $\langle T, \chi, \lambda \rangle$ is a \HD{}
and $\rho$ is a function labeling the vertices of $T$ by sets of hyperedges of $\H$ such that, 
\begin{enumerate}[nolistsep]
	\setcounter{enumi}{4}
	\item
	for each $h \in E$,
	there exists \emph{exactly one} $v \in V$,
	such that $h \subseteq \rho(v)$ and $\chi(v) \supseteq h$; and
	
	\item
	for each $v \in V: \rho(v) \subseteq \lambda(v)$.
\end{enumerate}
\end{definition}

\introparagraph{Min Cost Homomorphism} 
Let $\H$ and $\G$ be hypergraphs, possibly with loops. 
A homomorphism from $\H$ to $\G$ is a function 
$\theta : V(\H) \rightarrow V(\G)$ such that for all 
$h \in E(\H)$, 
$\theta(h) \in E(\G)$.
Let 
$w : E(\G) \rightarrow \R$ be a cost function.
The cost of a homomorphism $\theta$ from $\H$ to $\G$ is then
\begin{align*}
	w(\theta) &= \bigoplus_{h \in E(\H)} w \big(\theta(h)\big)
\end{align*}	

The Minimum Cost Homomorphism problem (MCH) is then defined as

\begin{definition}[Min Cost Homomorphism (MCH)]
Given hypergraphs $\H$ and $\G$, 
with edge weights $w$, 
decide whether a homomorphism from $\H$ to $\G$ exists, 
and if so, compute one of minimum weight.
\begin{align}
	w^* = \min_{\theta} \big\{ w(\theta) \big\} \label{eq:mincost}
\end{align}
\end{definition}

Let $\langle T, \chi, \lambda, \rho, r \rangle$
be a rooted pinned hypertree decomposition of $\H$. 
For a node $v \in V (T)$, 
our DP computes values 
$\val(v, \theta)$ for every 
$\theta : \chi(v) \rightarrow N(\G)$, 
defined as follows:
\begin{align*}
	\val(v, \theta) = \min \big\{w(\mu) \mid \mu : 
	\chi( \lceil v \rceil) 
	\rightarrow 
	N(\G) \textrm{ s.t. } \mu|_{\chi(v)} = \theta \big\}.
\end{align*}

So $\val(v, \theta)$ is the minimum weight of a homomorphism $\mu$ from 
$\H(\lceil u \rceil)$ to $\G$ that coincides with $\theta$. 
Then since $\H(\lceil r \rceil) = \H$, 
the minimum weight of a homomorphism from $\H$ to $\G$ is computed by taking the minimum value of 
$\val(r, \theta)$ 
over all 
$\mu : \chi(r) \rightarrow N(\G)$.

The values 
$\val(v, \theta)$
can then be computed recursively in any sequence 
consistent with the reverse tree order $\leq_T$ as follows (\cref{alg:MCH-DP}):
\begin{enumerate}[nolistsep]

	\item
	If $v$ is a leaf node, then initialize the weights with all grounded weights $\val(v,\theta) = w(\theta)$ (\cref{line:leaf}).

	\item
	If $v$ is not a leaf node, then first let $S$ be the set of variables that appear in either $v$ or its children (\cref{line:defS}).
	Then for each homomorphism $\mu$ of the nodes in $S$ to nodes $N(\G)$, 
	determine the cost for the sum of: 
	($i$) cost inherited all children consistent with $\mu$, and
	($ii$) additional cost incurred at that node $v$ for all pinned edges $h \in \rho(v)$ (\cref{line:homo}).	
	\begin{align*}
		\val(v, \mu) = &\bigoplus_{c \in C(v)} \val(c, \mu|_{X(c)} ) \oplus 
					    \bigoplus_{h \in \rho(v)} 	w \big( \mu(h) \big)  
	\end{align*}
	Then determine the minimum over all $\mu$ consistent with the variables in $v$ (\cref{line:takemin}).

	\item
	Finally, determine the minimum weight of all homomorphisms consistent with the root (\cref{line:end}).
\end{enumerate}

The minimum weight homomorphism $\theta^*$ can then be reassembled in one pass forward from the root to the leaves 
in the standard way as explained in \cref{sec:nsdp}. (To simplify the exposition, 
we are following the example of Bertsekas~\cite{bertsekas05dp}: all dynamic programming algorithms construct the solution from the trace by which the optimum cost is found.)

\begin{theorem}\label{th:mch-dp}[Correctness of \FuncSty{MCH-DP}]
	If the operation $\oplus$ is commutative, associative, and nondecreasing, 
	then \cref{alg:MCH-DP} finds all optimal solutions of \cref{eq:mincost}.
\end{theorem}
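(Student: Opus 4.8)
The plan is to establish, by induction on the vertices of $T$ taken in reverse tree order (leaves first, root last), the following invariant: for every vertex $v$ and every map $\theta\colon \chi(v)\to N(\G)$,
\[
\val(v,\theta)=\min\Bigl\{\,\bigoplus_{h\in\rho(\lceil v\rceil)} w\bigl(\mu(h)\bigr)\ \Bigm|\ \mu\colon \chi(\lceil v\rceil)\to N(\G),\ \mu|_{\chi(v)}=\theta,\ \mu(h)\in E(\G)\ \forall h\in\rho(\lceil v\rceil)\Bigr\},
\]
where $\rho(\lceil v\rceil):=\bigcup_{u\le_T v}\rho(u)$ is the set of hyperedges pinned somewhere in the subtree below $v$ (and the minimum over an empty family is the top element of the order). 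Intuitively, $\val(v,\theta)$ is the cheapest way to extend the partial assignment $\theta$ downward through $\lceil v\rceil$, charging each hyperedge exactly at the unique vertex that pins it. For the base case $v$ a leaf, $\lceil v\rceil=\{v\}$, so $\chi(\lceil v\rceil)=\chi(v)$ and the only admissible $\mu$ is $\theta$ itself; the invariant reduces to $\val(v,\theta)=\bigoplus_{h\in\rho(v)}w(\theta(h))$, which is exactly the leaf initialization.

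For the inductive step, let $v$ be internal with children $C(v)$, and assume the invariant for each $c\in C(v)$. The work is to justify the recursion from a structural decomposition of $\lceil v\rceil$ supplied by the tree-decomposition axioms (\cref{def:TD}). First, $\chi(\lceil v\rceil)=\chi(v)\cup\bigcup_{c\in C(v)}\chi(\lceil c\rceil)$ by definition of the down-closure. Second, coherence (connectedness) forces, for distinct children $c\neq c'$, that $\chi(\lceil c\rceil)\cap\chi(\lceil c'\rceil)\subseteq\chi(v)$ and moreover $\chi(\lceil c\rceil)\cap\chi(v)=\chi(c)\cap\chi(v)$; hence any $\mu$ on $\chi(\lceil v\rceil)$ with $\mu|_{\chi(v)}=\theta$ is exactly a family $(\mu_c)_{c\in C(v)}$ of maps $\mu_c\colon\chi(\lceil c\rceil)\to N(\G)$ chosen \emph{independently} subject only to $\mu_c|_{\chi(c)\cap\chi(v)}=\theta|_{\chi(c)\cap\chi(v)}$. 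Third, part~5 of the pinned-HD definition gives $h\subseteq\chi(v)$ for every $h\in\rho(v)$, so $\bigoplus_{h\in\rho(v)}w(\mu(h))$ depends only on $\theta$, and $\rho(\lceil v\rceil)$ is the disjoint union of $\rho(v)$ and the $\rho(\lceil c\rceil)$, so the cost splits as $\bigoplus_{h\in\rho(v)}w(\theta(h))\oplus\bigoplus_{c}\bigl(\bigoplus_{h\in\rho(\lceil c\rceil)}w(\mu_c(h))\bigr)$. Because $\oplus$ is commutative, associative, and nondecreasing in each argument, the outer minimum distributes over this $\oplus$-combination of independent coordinates; substituting the inductive hypothesis for each $\val(c,\cdot)$ and then minimizing also over the shared values of $\theta$ on $\chi(c)$ yields precisely the quantity the algorithm computes, namely $\min_{\mu}\bigl(\bigoplus_{c\in C(v)}\val(c,\mu|_{\chi(c)})\oplus\bigoplus_{h\in\rho(v)}w(\mu(h))\bigr)$ over extensions $\mu$ of $\theta$ to $\chi(v)\cup\bigcup_c\chi(c)$. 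This gives the invariant at $v$.

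Applying the invariant at the root $r$: there $\lceil r\rceil=V(T)$, so $\chi(\lceil r\rceil)=N(\H)$ by node coverage and $\rho(\lceil r\rceil)=E(\H)$ since pinning assigns each hyperedge to exactly one vertex; hence the inner sum equals $w(\mu)$ and $\mu$ ranges over all homomorphisms $\H\to\G$. Therefore $\min_\theta\val(r,\theta)=\min_\mu w(\mu)=w^*$, which is what the final step returns. To obtain \emph{all} optimal homomorphisms rather than just $w^*$, keep at each vertex back-pointers to \emph{all} choices attaining the minimum in the recursion (retaining the non-minimal pointers costs nothing asymptotically, exactly as observed for serial DP in \cref{sec:DPtoAnyK}); reading these pointers downward from $r$ reconstructs precisely those homomorphisms whose restriction to each subtree $\lceil v\rceil$ is a minimum-cost extension of its restriction to $\chi(v)$, and a standard exchange argument — replacing a suboptimal subtree-restriction by an optimal one does not increase total cost, by monotonicity of $\oplus$ together with the decoupling above — shows these are exactly the solutions of \cref{eq:mincost}.

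The main obstacle is the structural decoupling in the inductive step: one must carefully extract from the tree-decomposition axioms (especially coherence) that a homomorphism on one child's subtree interacts with the parent bag and with sibling subtrees only through the variables $\chi(c)\cap\chi(v)$, and that the pinning function $\rho$ charges each hyperedge's weight to exactly one local term. Once this is set up, everything else is algebra relying solely on the three assumed properties of $\oplus$, with ``nondecreasing'' being what converts ``$\min$ of an $\oplus$-sum of independent terms'' into ``$\oplus$-sum of the per-term minima.''
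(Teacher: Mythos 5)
Your proof is correct and follows essentially the same route as the paper's: a bottom-up induction over the tree decomposition in which commutativity, associativity, and monotonicity of $\oplus$ let the minimization be pushed into the independent child subtrees, with the coherence condition guaranteeing that independence and the pinning function $\rho$ ensuring each hyperedge's weight is charged exactly once. Your version is in fact more carefully worked out than the paper's own terse argument --- you state the subtree invariant explicitly, justify the decoupling across sibling subtrees via coherence, and address the recovery of \emph{all} optimal solutions via back-pointers, which the paper's proof leaves implicit.
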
	

\begin{proof}
We prove that \FuncSty{MCH-DP} finds the optimal value $w^*$ of $w(\theta)$. 
At each vertex $v$ of the hypertree decomposition,
the computation eliminates (``forgets'') a set of nodes $T$ from its children:
$T \define \bigcup_c \chi(c) \setminus \chi(v)$.
This elimination replaces all homomorphisms $\mu$ involving $T$ and $\chi(v)$
with the restricted homomorphism $\theta$ 
where the image contains only variables in $\chi(v)$ (\cref{line:end}).
So it suffices to show that the elimination of 
$T$ does not change the minimum value. 

The proof succeeds by induction. 
Consider a vertex $v$ that forgets variables $T$ 
and for which none of its descendants forgets any variable.
Let $\rho(\lceil v \rceil)$ be the set of hyperedges $h \in E(\H)$ that were pinned by any of $\lceil v \rceil$:
$\rho(\lceil v \rceil) = \{h \in E(\H) \mid \exists u \in \lceil v \rceil: h \in \rho(u) \}$.
Similarly, let $\rho(D(v))$ be the set of hyperedges $h \in E(\H)$ that were pinned by any of the descendants of $v$, but not $v$.
Further, let $\theta = \mu|_{\chi(v)}$.

Then
\begin{align}
	&\min_{\theta} \big\{ \bigoplus_{h \in E(\H)} w \big(\theta(h)\big) \big\} = 
	\notag
	\\
	&\min_{\mu} 
		\big\{
			\val(v, \mu|_{\chi(v)}) 
		\oplus
			\bigoplus_{h \in E(\H) \setminus \rho[v]} w \big(\mu(h)\big) \label{eq:right}
	\big\}	
\end{align}

But the right-hand side \cref{eq:right} is now
\begin{align*}
	&\min_{\mu} 
	\big\{
		\min_{\theta} 
		\big\{
			\bigoplus_{h \in \rho[v]} w \big(\theta(h)\big) 		
		\big\}
		\oplus
		\bigoplus_{h \in E(\H) \setminus \rho[v]} w \big(\theta(h)\big) 
	\big\}	
\end{align*}

By the commutativity, associativity, and nondecreasing monotonicity of $\oplus$, 
this expression is equal to the left-hand side \cref{eq:right}.
By straight-forward induction, it follows that $\min_{\theta}\val(r, \theta)$ is the optimal value \cref{eq:mincost}.
\end{proof}

\setlength{\textfloatsep}{0.1cm}
\begin{algorithm}[t]
	\smaller
	\caption{DP Formulation for Minimum Cost Homomorphism Problem over Pinned Hypertree Decomposition}\label{alg:MCH-DP}
\SetKwInput{Algorithm}{Algorithm}
\SetKwFunction{AAA}{AAA}
\Algorithm{\FuncSty{MCH-DP}}

\KwIn{Hypergraphs $\H$ and $G$. 
Cost function $w : V(\G) \rightarrow \R$.
Rooted pinned \HD{} of $\H$: $\langle T, \chi, \lambda, \rho, r \rangle$ with reverse tree order $\leq_T$.}
\KwOut{$w^* = \min_{\theta} \big\{ w(\theta) \big\}$}
\BlankLine

\BlankLine
\For{\text{\normalfont each vertex $v \in V(T)$ in order $\leq_T$}}{
	\uIf{$v$ is a leaf node}{
		$\val(v,\theta) = w(\theta)$ \label{line:leaf}
	}
	\Else{
		Let $S = \chi(v) \cup \bigcup_c \chi(c)$ with $c \in C(v)$\; \label{line:defS}
		Define a homomorphism
		$\mu : S \rightarrow N(\G)$ s.t.
			$\val(v, \mu) = \bigoplus_{c \in C(v)} \val(c, \mu|_{X(c)} ) \oplus 
						    \bigoplus_{h \in \rho(v)} 	w \big( \mu(h) \big)$\; \label{line:homo}
		$\val(v, \theta) = 
			\min \big\{  \val(v, \mu) \mid \mu : S \rightarrow N(\G), \textrm{ s.t. } \mu|_{\chi(v)} = \theta  \big\}$ \label{line:takemin}
	}
}
\BlankLine

\Return{$\min_{\theta} \big\{ \val(r, \theta) \big\}$ \label{line:end}} 

\end{algorithm}

\section{Related Work}
\label{sec:relatedWork}

\introparagraph{Top-$k$}
Top-k queries received significant attention in the database community
\cite{Agrawal:2009:CJA:1546683.1547478,akbarinia11topk,DBLP:conf/vldb/BastMSTW06,bruno02,chaudhuri99,ilyas08survey,DBLP:journals/vldb/TheobaldBMSW08,tsaparas03topk}.
Much of that work relies on the value of $k$ given in advance in order to
\emph{prune the search space}. Besides, the cost model introduced by the seminal
Threshold Algorithm (TA)~\cite{fagin03} only accounts for the \emph{cost of fetching} input tuples
from external sources. Later work such as
J*~\cite{natsev01},
Rank-Join~\cite{ilyas04},
LARA-J* \cite{mamoulis07lara},
and a-FRPA \cite{finger09frpa}
generalizes TA to more complex join patterns,
yet also focuses on minimizing the number of accessed input tuples.
While some try to find a balance between the cost of accessing tuples and the
cost of detecting termination, previous work on top-k queries is
\emph{sub-optimal when accounting for all steps of
the computation}, including intermediate result size
(see~\Cref{sec:top_suboptimal}).
We also refer the reader to a recent tutorial \cite{tziavelis20tutorial} that explores
the relationship between top-k and the paradigms discussed in this paper.

\introparagraph{Optimality in Join Processing}
Acyclic Boolean queries can be evaluated optimally in $\O(n)$ data complexity by the Yannakakis algorithm \cite{DBLP:conf/vldb/Yannakakis81}.
The AGM bound~\cite{AGM}, a tight bound on the worst-case output size for full conjunctive queries,
motivated worst-case optimal algorithms
\cite{navarro19wco,
ngo2018worst,
Ngo:2014:SSB:2590989.2590991,
veldhuizen14leapfrog}
and was extended to more general scenarios, such as the presence of functional dependencies~\cite{gottlob12fds}
or degree constraints~\cite{abo2016degree,khamis17panda}. 
The upper bound for cyclic Boolean CQs was improved over the years with decomposition methods
that achieve ever smaller width-measures,
such as treewidth~\cite{RobertsonS:1986}, 
(generalized) hypertree width (\ghtw) \cite{GottlobLS:2002,
gottlob03width,
gottlob09generalized,
greco17greedy,
greco17consistency}, 
fractional hypertree width (\fhtw) \cite{grohe14fhtw},
and submodular width (\subw) \cite{Marx:2013:THP:2555516.2535926}.
Current understanding suggests that achieving 
the improvements of \subw over \fhtw 
requires decomposing a cyclic query into a union of acyclic queries.
Our method can leverage this prior work on \subw~\cite{khamis17panda,Marx:2013:THP:2555516.2535926}
to match the \subw bound of Boolean CQs for TTF.
We also show that it is possible to achieve better complexity for TTL 
than sorting the output of any of these batch computation algorithms.

\introparagraph{Unranked enumeration of query results}
Enumerating the answers to CQs \emph{with projections} in no particular order
can be achieved only for some classes of CQs with constant delay,
and much effort has focused on identifying those classes
\cite{bagan07constenum, 
  Berkholz:2017:ACQ:3034786.3034789,
  DBLP:conf/pods/CarmeliK19,
  DBLP:journals/sigmod/Segoufin15,
  segoufin_et_al:LIPIcs:2017:7060,
  }.
If the ranking function is defined over the Boolean semiring, our technique achieves
constant delay if we replace the priority queues with simple unsorted lists.
However, we consider only \emph{full} CQs, 
eschewing the difficulties introduced by projections and focusing instead on the challenges
of ranking.
A recent paper by Berkholz and Schweikard~\cite{berkholz19submodular}
also uses a union of tree decompositions based on \subw.
Our focus is on the issues arising from imposing a rank on the output tuples, 
which requires solutions for \emph{pushing sorting into such enumeration algorithms}.

\introparagraph{Factorization and Aggregation}
Factorized databases 
\cite{bakibayev12fdb,
olteanu16record,olteanu12ftrees,
DBLP:conf/sum/SchleichOK0N19}
exploit the distributivity of product over union to represent query results compactly
and generalize the results on bounded \fhtw to the non-Boolean case~\cite{olteanu15dtrees}. 
Our encoding as a DP graph leverages the same principles and is at least as efficient space-wise.
Finding the top-1 result is a case of aggregation that is supported by both factorized databases,
as well as the FAQ framework 
\cite{AboKhamis:2019:FAQ:3294052.3319694,abo16faq}
that captures a wide range of aggregation problems over semirings.
Factorized representations 
can also enumerate the query results with constant delay 
according to lexicographic orders of the variables~\cite{bakibayev13fordering},
which is a special case of the ranking that we support
(\Cref{sec:ranking}).
For that to work, \emph{the desired lexicographic order has to agree with the factorization order}; 
a different order requires a restructuring operation 
that could result in a quadratic blowup even for a simple binary join
(see~\Cref{sec:fdb_order} 
for the full example).
Related to this line of work are 
different representation schemes \cite{DBLP:conf/icdt/KaraO18} and
the exploration of the continuum between representation size and enumeration delay \cite{deep18compressed}.

\introparagraph{Ranked enumeration}
Both Chang et al. \cite{chang15enumeration} and
Yang et al. \cite{yang2018any} provide any-$k$ algorithms for
\emph{graph queries} instead of the more general CQs;
they describe the ideas behind \LAZY and \MIN respectively. 
Kimelfeld and Sagiv \cite{KimelfeldS2006} give an any-$k$ algorithm for acyclic queries with polynomial delay. 
Similar algorithms have appeared for the equivalent Constraint Satisfaction Problem (CSP) \cite{DBLP:journals/jcss/GottlobGS18,DBLP:conf/cp/GrecoS11}.
These algorithms fit into our family \RPDP, yet do not exploit common structure
between sub-problems hence have weaker asymptotic guarantees for delay than any of the
any-$k$ algorithms discussed here. After we introduced the general idea of ranked enumeration
over \emph{cyclic} CQs based on multiple tree decompositions~\cite{YangRLG18:anyKexploreDB},
an unpublished paper \cite{deep19} on arXiv proposed an algorithm for it.
Without realizing it, the authors reinvented the REA algorithm~\cite{jimenez99shortest}, which
corresponds to \RECURSIVE, for that specific context.
We are the first to \emph{guarantee optimal time-to-first result and optimal delay for both
acyclic and cyclic} queries. For instance, we return the top-ranked result
of a 4-cycle in $\O(n^{1.5})$, while 
Deep and Koutris \cite{deep19} require $\O(n^{2})$.
{Furthermore, our work (1) addresses the more general problem of ranked enumeration
for DP over a union of trees, 
(2) unifies several approaches that have appeared in the past,
from graph-pattern search to $k$-shortest path, and shows that neither dominates all others,
(3) provides a theoretical and
experimental evaluation of trade-offs including algorithms that perform best for
small $k$, 
and (4) is the first to prove that it is possible to achieve a time-to-last
that asymptotically improves over batch processing by exploiting the stage-wise
structure of the DP problem.}

\introparagraph{$k$-shortest paths}
The literature is rich in algorithms for finding the $k$-shortest paths in general graphs
\cite{azevedo93kshortest,
bellman60kbest, dreyfus69shortest, eppstein1998finding,  
hoffman59shortest, 
jimenez03shortest, jimenez99shortest, 
katoh82kshortest, 
lawler72,
martins01kshortest, martins03kshortest,
yen1971finding}.
Many of the subtleties of the variants arise from issues caused by cyclic
graphs whose structure is more general than the acyclic multi-stage graphs in our
DP problems.
Hoffman and Pavley \cite{hoffman59shortest}
introduces the concept of ``deviations'' as a sufficient condition for finding the
$k^{\textrm{th}}$ shortest path. Building on that idea, Dreyfus~\cite{dreyfus69shortest}
proposes an algorithm that can be seen as a modification to the procedure of
Bellman and Kalaba~\cite{bellman60kbest}. The \emph{Recursive Enumeration Algorithm}
(REA)~\cite{jimenez99shortest} uses the same set of equations as Dreyfus,
but applies them in a top-down recursive manner. 
Our \REDP builds upon REA.
To the best of our knowledge, prior work has ignored the fact that this algorithm
reuses computation in a way that can asymptotically outperform sorting in some cases.
In another line of research, Lawler \cite{lawler72} generalizes an earlier algorithm of 
Murty \cite{murty1968} and applies it to $k$-shortest paths. 
Aside from $k$-shortest paths, the Lawler procedure has been widely used for a variety of problems in the database community~\cite{golenberg11parallel}.
Along with the Hoffman-Pavley deviations, they are one of the main ingredients
of our \RPDP approach.
Eppstein's algorithm \cite{eppstein1998finding,jimenez03shortest} achieves the
best known asymptotical complexity, albeit with a complicated construction
\emph{whose practical performance is unknown}.
His ``basic'' version of the algorithm has the same complexity as \EAGER,
while our \HEAP algorithm matches the complexity of the ``advanced'' version 
for our problem setting where output tuples are materialized explicitly.

\subsection{Detailed comparison to other paradigms}

\subsubsection{WCO join algorithms}
\label{sec:NPRR_and_TTF}

We now show how the \NPRR algorithm \cite{ngo2018worst} fails to find the top ranked result in the same time bound as our approach. The key innovation of such worst-case optimal join algorithms is that they achieve the same complexity as the worst-case size of the output for every query. In the case of a 4-cycle query, \NPRR produces the full join result in $\bigO(n^2)$, a tight worst-case optimal bound. We next demonstrate with the help of the example database $I_1$ in \Cref{fig:NPRR} that it requires $\bigO(n^2)$ for the top-1 result as well, which cannot be easily fixed, whereas the techniques presented in this paper yield $\bigO(n^{1.5})$.

\smallsection{\NPRR execution on $I_1$} 
We follow the general treatment and formalism of \cite{ngo2018worst}.

(Step 1) WLOG, we use the total order of attributes $B \rightarrow C  \rightarrow A  \rightarrow D$, which implies choosing relation $W(A,D)$ 
in the first iteration: 
$f = \{A, D\}$, 
$\overline{f} = \{B, C\}$. 
The implied relations are: 
$E_1 = \{(B, C), (A, B), (C, D)\}$, and 
$E_2 = \{(A, D), (A, B), (C, D)\}$. 

(Step 2) The algorithm will compute an intermediate set of values $L$ for attributes in $\overline{f}$
with a recursive call on $E_1$.
In our example, $L$ will be a set of $(b, c)$ pairs that satisfy $R(a_i, b), S(b, c), T(c, d_j)$ for some $a_i$ and $d_j$.
Its size is $|L| = 2n$. 

(Step 3) For every $(b, c)$ pair in $L$, there are two ways to find an $(a, d)$ pair that forms a 4-cycle $(a, b, c, d)$:
\begin{itemize}[nolistsep]
    \item \emph{light pair:} if the number of $(a, d)$ pairs joining with $(b, c)$ in $R(a ,b), S(b, c), T(c, d)$ is smaller than $|W(d, a)|$, iterate through those $(a, d)$ pairs. For any such pair, if it is in $W(D, A)$ output $(a, b, c, d)$.
    \item \emph{heavy pair:} if $|W(d, a)|$ is smaller, then check for each $(d, a)$ in $W(D, A)$, whether $(a, b)$ is in $R(a, b)$ and $(c, d)$ is in $T(c, d)$. Output $(a, b, c, d)$ if both conditions are true.
\end{itemize}
All $(b, c)$ pairs in $L$ are light pairs in our $I_2$ example, since the join-generated pairs are of size $n$, while $|W| = 2n$.

\begin{figure}[t]
	\small
\centering
	\centering
	\setlength{\tabcolsep}{0.4mm}
		\hspace{0mm}
				\mbox{
				\begin{tabular}[t]{ >{$}c<{$} | >{$}c<{$}  >{$}c<{$} }
	 			\mathbf{R}	&  A 	& B \\
				\hline
				& a_1 & \markZwicky(R1){$b_0$} 	\\
				& a_2 & \markZwicky(R2){$b_0$} 	\\
				& \ldots & \ldots \\
				& a_n & \markZwicky(R3){$b_0$} 	\\
				& a_0 & \markZwicky(R4){$b_1$} 	\\
				& a_0 & \markZwicky(R5){$b_2$} 	\\
				& \ldots & \ldots \\
				& a_0 & \markZwicky(R6){$b_n$}
				\end{tabular}
		}
		\hspace{0mm}
		\mbox{
				\begin{tabular}[t]{ >{$}c<{$} | >{$}c<{$}  >{$}c<{$} }
	 			\mathbf{S}	&  B 	& C \\
				\hline
				& \markZwicky(S1){$b_0$} & \markZwicky(SS1){$c_1$}  	\\
				& \markZwicky(S2){$b_0$} & \markZwicky(SS2){$c_2$} 	\\
				& \ldots & \ldots \\
				& \markZwicky(S3){$b_0$} & \markZwicky(SS3){$c_n$} 	\\
				& \markZwicky(S4){$b_1$} & \markZwicky(SS4){$c_0$} 	\\
				& \markZwicky(S5){$b_2$} & \markZwicky(SS5){$c_0$}	\\
				& \ldots & \ldots \\
				& \markZwicky(S6){$b_n$} & \markZwicky(SS6){$c_0$}
				\end{tabular}
		}
		\hspace{0mm}
		\mbox{
				\begin{tabular}[t]{ >{$}c<{$} | >{$}c<{$}  >{$}c<{$} }
	 			\mathbf{T}	&  C 	& D \\
				\hline
				& \markZwicky(T1){$c_1$} & \markZwicky(TT1){$d_0$}	\\
                & \markZwicky(T2){$c_2$} & \markZwicky(TT2){$d_0$}	\\
				& \ldots & \ldots \\
				& \markZwicky(T3){$c_n$} & \markZwicky(TT3){$d_0$}	\\
				& \markZwicky(T4){$c_0$} & \markZwicky(TT4){$d_1$}	\\
				& \markZwicky(T5){$c_0$}& \markZwicky(TT5){$d_2$}	\\
				& \ldots & \ldots \\
				& \markZwicky(T6){$c_0$} & \markZwicky(TT6){$d_n$}
				\end{tabular}
		}
		\hspace{0mm}
		\mbox{
				\begin{tabular}[t]{ >{$}c<{$} | >{$}c<{$}  >{$}c<{$} }
	 			\mathbf{W}	&  D 	& A \\
				\hline
				& \markZwicky(W1){$d_0$} & a_1  	\\
				& \markZwicky(W2){$d_0$} & a_2  	\\
				& \ldots & \ldots \\
				& \markZwicky(W3){$d_0$} & a_n  	\\
				& \markZwicky(W4){$d_1$} & a_0  	\\
				& \markZwicky(W5){$d_2$} & a_0 	\\
				& \ldots & \ldots \\
				& \markZwicky(W6){$d_n$} & a_0
				\end{tabular}
		}
\caption{
Database $I_1$ showing sub-optimality of \NPRR for TTF.
}
\label{fig:NPRR}
\tikzZwicky[blue](R1.east)(S1.west)
\tikzZwicky[blue](R1.east)(S2.west)
\tikzZwicky[blue](R1.east)(S3.west)
\tikzZwicky[blue](R2.east)(S1.west)
\tikzZwicky[blue](R2.east)(S2.west)
\tikzZwicky[blue](R2.east)(S3.west)
\tikzZwicky[blue](R3.east)(S1.west)
\tikzZwicky[blue](R3.east)(S2.west)
\tikzZwicky[blue](R3.east)(S3.west)
\tikzZwicky[blue](R4.east)(S4.west)
\tikzZwicky[blue](R5.east)(S5.west)
\tikzZwicky[blue](R6.east)(S6.west)
\tikzZwicky[blue](SS1.east)(T1.west)
\tikzZwicky[blue](SS2.east)(T2.west)
\tikzZwicky[blue](SS3.east)(T3.west)
\tikzZwicky[blue](SS4.east)(T4.west)
\tikzZwicky[blue](SS4.east)(T5.west)
\tikzZwicky[blue](SS4.east)(T6.west)
\tikzZwicky[blue](SS5.east)(T4.west)
\tikzZwicky[blue](SS5.east)(T5.west)
\tikzZwicky[blue](SS5.east)(T6.west)
\tikzZwicky[blue](SS6.east)(T4.west)
\tikzZwicky[blue](SS6.east)(T5.west)
\tikzZwicky[blue](SS6.east)(T6.west)
\tikzZwicky[blue](TT1.east)(W1.west)
\tikzZwicky[blue](TT1.east)(W2.west)
\tikzZwicky[blue](TT1.east)(W3.west)
\tikzZwicky[blue](TT2.east)(W1.west)
\tikzZwicky[blue](TT2.east)(W2.west)
\tikzZwicky[blue](TT2.east)(W3.west)
\tikzZwicky[blue](TT3.east)(W1.west)
\tikzZwicky[blue](TT3.east)(W2.west)
\tikzZwicky[blue](TT3.east)(W3.west)
\tikzZwicky[blue](TT4.east)(W4.west)
\tikzZwicky[blue](TT5.east)(W5.west)
\tikzZwicky[blue](TT6.east)(W6.west)
\end{figure}

\smallsection{Ranked enumeration with \NPRR} 
A straightforward way to to turn this algorithm into a ranked enumeration algorithm is to compute all output tuples 
$(a, b, c, d)$ and then sort them, which incurs $\O(n^2 \log n)$ for TTF. Is it possible to do better than that? 
We will next show that any reasonable attempt to ``retrofit'' this algorithm fails to achieve $\O(n^{1.5})$ TTF for our example.
After Step 2 above, we have $2n$ pairs in $L$, the weight of which is known. Let us revisit Step 3 and break it further into the following parts:

(i) For every $(b, c)$ pair in $L$, 
we know its weight $w_S$ (the subscript refers to the relation of origin). 
We have already established that the number of $(a, d)$ pairs that can be connected from any
$(b, c)$ is 
$1 \cdot n = n < |W(D, A)| = 2n$. 
Since all (b, c) in L are all light pairs, the execution plan is always to compute $(a, d)$ pairs that satisfy $R(a,b), S(b, c), T(c, d)$.
In this step, we can compute their weights as $w_{\mathrm{light}} = w_R + w_S + w_T$. 
Therefore, for each $(b, c)$ pair, 
we have a pool of matching pairs $(a, d)$, each associated with a weight.

(ii) There are $2n$ $(b, c)$ pairs and each one has $n$ matching pairs of $(a, d)$ in its pool. 
To find the result with the minimum weight, we need to go through the pool for each such $(b, c)$ pair. 
For each combination, we have to verify that it is a result by checking $W(D, A)$ and also compute the total weight by adding $w_W$. 
Thus, $\O(n^2)$ in total.

\introparagraph{Experimental results}
To better illustrate our point, we run \NPRR against our algorithms ($\RECURSIVE$ and $\LAZY$) 
on a 4-cycle query $Q_{C4}$ and database $I_1$ (\cref{fig:NPRR}) for various sizes $n$. 
Notice that even though our decomposition method guarantees $\O(n^{1.5})$ for a 4-cycle query, it only needs $\O(n)$ on $I_1$, since every relation has only one heavy value.
We use the same experimental setup as in \Cref{sec:experiments} and plot the time-to-first (TTF) and time-to-last (TTL). 
For \NPRR we only plot the TTF, since TTL is very similar.
We also plot two lines that show the trend of a linear and a quadratic function.

\begin{figure}[t]
    \centering
    \includegraphics[width=0.5\linewidth]{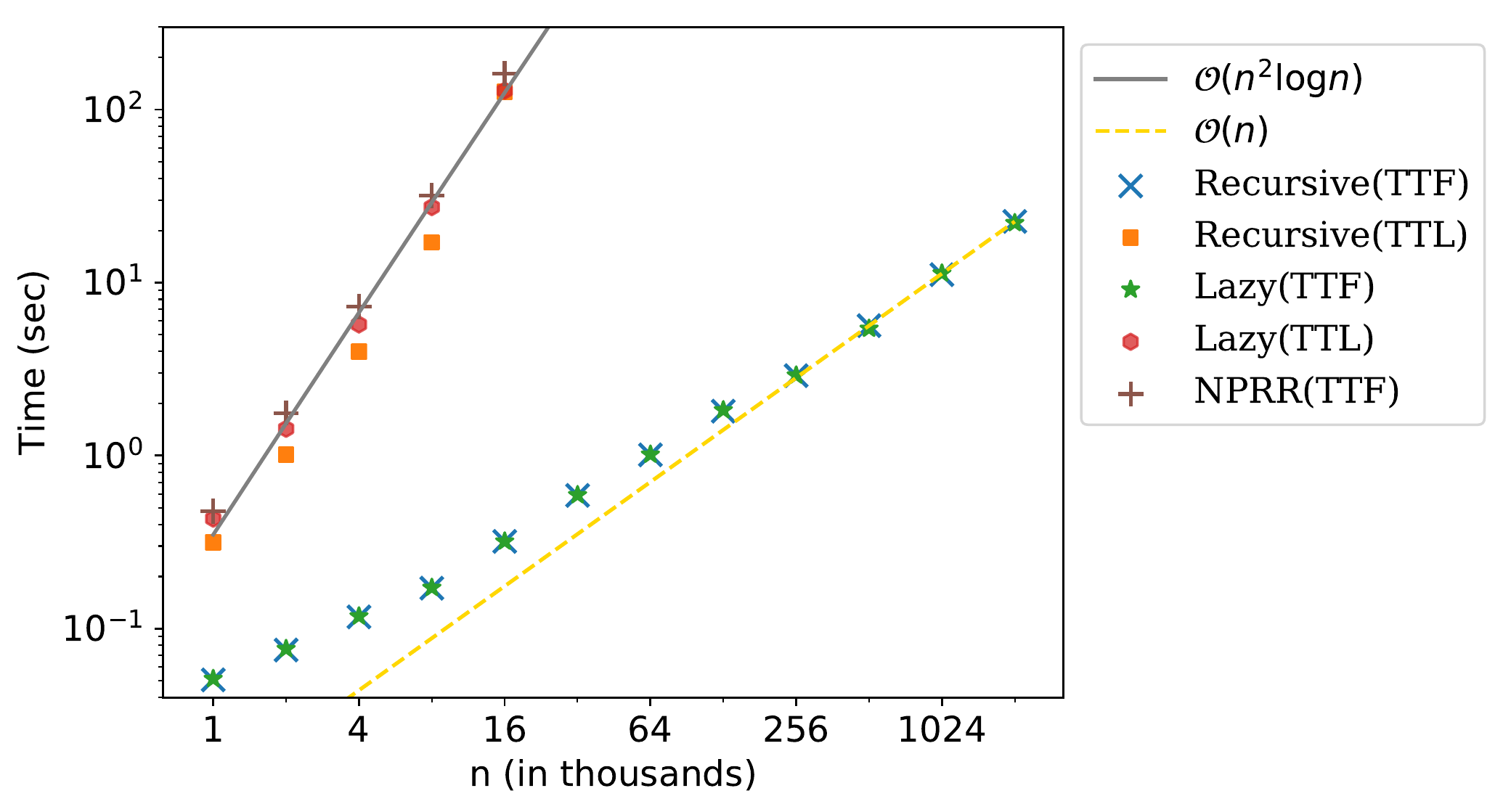}
    \caption{TTF of \NPRR vs our algorithms
	}
	\label{exp:nprr_n2}
\end{figure}

\Cref{exp:nprr_n2} shows the results. 
We can clearly see that \NPRR, as well as the TTL of our algorithms grow quadratically with $n$ and soon become infeasible for large $n$. 
On the contrary, despite an initial overhead for small $n$, the TTF of our algorithms closely follows the linear line and is viable even for $n$ in the order of millions of tuples:
For example, for 16k tuples, our algorithm returns the top-1 result in 300 msec, 
while \NPRR takes over 100 secs.

\begin{figure}[tb]
	\small
\centering
	\centering
	\setlength{\tabcolsep}{0.4mm}
		\hspace{0mm}
				\mbox{
				\begin{tabular}[t]{ >{$}c<{$} | >{$}c<{$}  >{$}c<{$} }
	 			\mathbf{R}	&  A 	& B \\
				\hline
				& $1$ & \markZwicky(R1){$1$} 	\\
				& $2$ & \markZwicky(R2){$1$} 	\\
				& \ldots & \ldots \\
				& $n$ & \markZwicky(R3){$1$}
				\end{tabular}
		}
		\hspace{0mm}
		\mbox{
				\begin{tabular}[t]{ >{$}c<{$} | >{$}c<{$}  >{$}c<{$} }
	 			\mathbf{S}	&  B 	& C \\
				\hline
				& \markZwicky(S1){$1$} & $1$ 	\\
				& \markZwicky(S2){$1$} & $2$	\\
				& \ldots & \ldots \\
				& \markZwicky(S3){$1$} & $n$
				\end{tabular}
		}
\caption{\Cref{ex:fdbs_lexicographic}:
Database showing sub-optimality of factorised databases for certain lexicographic orders.
}
\label{fig:FDBs}
\tikzZwicky[blue](R1.east)(S1.west)
\tikzZwicky[blue](R1.east)(S2.west)
\tikzZwicky[blue](R1.east)(S3.west)
\tikzZwicky[blue](R2.east)(S1.west)
\tikzZwicky[blue](R2.east)(S2.west)
\tikzZwicky[blue](R2.east)(S3.west)
\tikzZwicky[blue](R3.east)(S1.west)
\tikzZwicky[blue](R3.east)(S2.west)
\tikzZwicky[blue](R3.east)(S3.west)
\end{figure}

\subsubsection{Comparison to Factorised Databases}\label{sec:fdb_order}

Factorised databases (FDBs)~\cite{bakibayev12fdb,olteanu16record,olteanu12ftrees,olteanu15dtrees}
support constant-delay enumeration of query results according to a desired lexicographic order on the attributes~\cite{bakibayev13fordering}.
Lexicographic orders are a special case of the ranking function considered in this paper and our approach supports them (see \Cref{sec:ranking}), albeit with logarithmic delay.
Here we look closer at the differences between the two approaches for this special case of lexicographic orders and show that our approach can be asymptotically better in certain cases despite the logarithmic delay. Also note that the lexicographic ordering we describe in \Cref{sec:ranking} is on the relations instead of the attributes but combined with the method in \cref{sec:attribute_weights}, we also support lexicographic orders on the attributes.

First, we provide a very short description of the main idea behind factorised databases and we refer the reader to the original papers for an in-depth presentation. 
To achieve a succinct representation, factorised databases repeatedly apply the distributivity law in an order described by a tree structure whose nodes are the attributes \cite{olteanu12ftrees}. 
Intuitively, if $X$ is the attribute of a node of the tree and $\mathrm{anc}(X)$ are its ancestor attributes, then every value $x \in X$ is represented at most once for each combination of values of $\mathrm{anc}(X)$.
D-representations~\cite{olteanu15dtrees} provide further succinctness by making the dependencies of each attribute in the tree explicit. 
This means that some attributes in $\mathrm{anc}(X)$ might not actually determine what the possible $X$ values are.
Truly dependent ancestor attributes of a node are denoted as $\mathrm{key}(X)$.
Each value $x \in X$ is then represented at most once for each combination of values of $\mathrm{key}(X)$.

These factorised representations provide unranked constant-delay enumeration out-of-the-box.
Yet for specific lexicographic orders, there are two conditions that have to be met: 
($i$) the order-by attributes have to be ``at the top'' of the tree and 
($ii$) the tree order has to agree with the lexicographic order.
If the tree order is not in agreement 
(e.g., we want $A$ before $B$ but $A$ is a child of $B$ in the tree), 
then the whole representation has to be restructured. 
The restructuring operation takes an input representation and transforms it to an 
output representation consistent with the lexicographic order 
in time linear (ignoring log factors) in the input \emph{and output representation sizes}.
However, the output representation itself could be inefficient.
We next illustrate the simplest example where an ill-chosen lexicographic order results in a quadratic representation for a simple binary join.

\begin{example}[Lexicographic orders]\label{ex:fdbs_lexicographic}
Consider the path-2 query $Q_{P2}(A, B, C) \datarule R(A, B), S(B, C)$. 
As usual, $n$ is the maximum number of tuples in a relation.
Ideally, we would want to factorise it using a tree that has $B$ as the root and $A, C$ as its children.
That way, every $A$ and $C$ value in the query result would be represented independently for each $B$ value.
However, for the lexicographic order $A \rightarrow C \rightarrow B$ this factorisation is not in agreement since $B$ comes after $A$ and $C$.
The \emph{only possible} tree that satisfies the condition ($ii$) above is a path from $A$ to $C$ to $B$. 
Note that the tree with $A$ as the root and $B$, $C$ as the children is not possible because of the path condition in factorised databases: attributes that belong to the same relation ($B$ and $C$ here) are in general dependent and have to lie in the same root-to-leaf path.
In that tree, $\mathrm{key}(B) = \{ A, C \}$.
According to Lemma 7.20 in \cite{olteanu15dtrees}, there exist arbitrarily large databases such that the number of $B$ values in the representation is at least $n^{\rho^{*}(B \cup key(B))}$, where $\rho^{*}$ is the fractional edge cover, thus $\Omega(n^2)$.

\Cref{fig:FDBs} presents a concrete instance where this happens.
For this database, the single $B$-value $1$ will be represented once for each combination of $A, C$ values and there are $n^2$ of them. 
In contrast, our approach begins the enumeration after only linear time preprocessing.
Thus in this case, the preprocessing step of FDBs takes $\O(n^2)$ after which results can be enumerated in constant time.
In contrast, our approach has TTF in $\O(n)$ and TTL in $\O(n^2)$ with logarithmic delay.
\end{example}

\subsubsection{Comparison to top-$k$ join algorithms}
\label{sec:top_suboptimal}

Consider the database $I_2$ from \cref{fig:TA} with $\ell=3$ relations and $n=10$ tuples per relation.
The top output tuple is marked in blue; it consists of the lightest tuples from the first $\ell-1$ relations and the heaviest tuple from $R_\ell$.
J*~\cite{natsev01} and Rank-Join~\cite{ilyas04} access 
the tuples in the input relations by decreasing weight. 
Their cost model takes into account only the number of database accesses, 
hence they try to minimize the depth up to which the sorted relations have to be accessed in order to find the top-k results. 
In this case, 
\emph{both J* and Rank-Join will consider the $(n-1)^{\ell-1}$ combinations between $R$ and $S$
before getting to the 
the top-1 tuple $(r_0, s_0, t_0)$}. 
This happens because J* over-estimates their weight by using the large weight of $t_0$ to upper-bound them, 
while Rank-Join by default joins each newly encountered tuple with all the other ones seen so far.
In contrast, our approach achieves $O(n \cdot \ell)$ for the top ranked result.

\begin{figure}[t]
	\small
\centering
	\centering
	\setlength{\tabcolsep}{0.4mm}
		\hspace{0mm}
				\mbox{
				\begin{tabular}[t]{ >{$}c<{$} | >{$}c<{$}  >{$}c<{$} | >{$}r<{$} }
	 			\mathbf{R}	&  A 	& B 	& w\\
				\hline
				r_1		& a_1 	& b_1 	& \markZwicky(1-1){$10$}		\\			
				r_2		& a_2 	& b_1 	& \markZwicky(2-1){$9$}\\
				\cdots 	& \cdots & \cdots 	& \markZwicky(3-1){$\cdots$\vphantom{\large $0$}} \\
				r_9		& a_9 	& b_1 	& \markZwicky(4-1){$2$}	\\				
				r_{0}	& a_{0} & b_0 	& \markZwicky(0-1){$1$}								
				\end{tabular}			
		}
		\hspace{5mm}
		\mbox{
				\begin{tabular}[t]{ >{$}c<{$} | >{$}c<{$} >{$}c<{$} | >{$}r<{$}}
	 			\mathbf{S}	&  B 	& C 	& w\\
				\hline
				\markZwicky(1-2){\vphantom{\large $0$}$s_1$}	& b_1 & c_1 	& \markZwicky(1-3){$100$}	\\			
				\markZwicky(2-2){\vphantom{\large $0$}$s_2$}	& b_1 & c_2 	& \markZwicky(2-3){$90$}  	\\
				\markZwicky(3-2){\vphantom{\large $0$}$\cdots$} & \cdots  & \cdots 	& \markZwicky(3-3){$\cdots$} \\
				\markZwicky(4-2){\vphantom{\large $0$}$s_9$}	& b_1 & c_9 	& \markZwicky(4-3){$20$}	\\				
				\markZwicky(0-2){\vphantom{\large $0$}$s_0$}	& b_0 & c_0 	& \markZwicky(0-3){$10$}								
				\end{tabular}	
		}
		\hspace{5mm}
		\mbox{
				\begin{tabular}[t]{ >{$}c<{$} | >{$}c<{$} | >{$}r<{$}}
	 			\mathbf{T}	& C & w		\\
				\hline
				\markZwicky(0-4){$t_0$}	& c_0	& 1000\\	
				\markZwicky(1-4){$t_1$}	& c_1	& 1\\						
				\markZwicky(2-4){$\cdots$} & \cdots & \cdots \\				
				\markZwicky(3-4){$t_8$}	& c_8	& 1\\								
				\markZwicky(4-4){$t_9$}	& c_9	& 1
				\end{tabular}			
		}
\caption{
Database $I_2$ showing sub-optimality of J* and Rank-Join. (\Cref{sec:top_suboptimal})}
\label{fig:TA}
\tikzZwicky[blue](0-1.east)(0-2.west)
\tikzZwicky[blue](0-3.east)(0-4.west) %

\tikzZwicky[orange](1-1.east)(1-2.west)
\tikzZwicky[orange](1-1.east)(2-2.west)
\tikzZwicky[orange](1-1.east)(3-2.west)
\tikzZwicky[orange](1-1.east)(4-2.west)
\tikzZwicky[orange](2-1.east)(1-2.west)
\tikzZwicky[orange](2-1.east)(2-2.west)
\tikzZwicky[orange](2-1.east)(3-2.west)
\tikzZwicky[orange](2-1.east)(4-2.west)
\tikzZwicky[orange](3-1.east)(1-2.west)
\tikzZwicky[orange](3-1.east)(2-2.west)
\tikzZwicky[orange](3-1.east)(3-2.west)
\tikzZwicky[orange](3-1.east)(4-2.west)
\tikzZwicky[orange](4-1.east)(1-2.west)
\tikzZwicky[orange](4-1.east)(2-2.west)
\tikzZwicky[orange](4-1.east)(3-2.west)
\tikzZwicky[orange](4-1.east)(4-2.west)

\tikzZwicky[orange](1-3.east)(1-4.west)
\tikzZwicky[orange](2-3.east)(2-4.west)
\tikzZwicky[orange](3-3.east)(3-4.west)
\tikzZwicky[orange](4-3.east)(4-4.west)

\end{figure}

\section{Conclusions and Future Work}

We proposed a framework for ranked enumeration over a class of dynamic programming problems 
that generalizes seemingly different problems that to date had been studied in isolation. 
Uncovering those relationships enabled us to propose
the first algorithms with \emph{optimal time complexity for ranked enumeration}
of the results of both cyclic and acyclic full CQs.
In particular, our technique returns the top result in a time that meets the currently best known bounds for Boolean queries, 
and even beats the batch algorithm on some inputs when all results are produced.
It will be interesting to go beyond our worst-case analysis and study the \emph{average-case} behavior~\cite{SA0G18} of our algorithms.

\smallsection{Acknowledgements}

We are grateful to Hung Q. Ngo for reading drafts of this paper and providing valuable feedback. 
This work was supported in part by %
the National Institutes of Health (NIH) under award number R01 NS091421 and by
the National Science Foundation (NSF) under award number CAREER IIS-1762268.
The content is solely the responsibility of the authors and does not necessarily represent
the official views of %
NIH or NSF.

\balance

\printbibliography	%

\clearpage
\appendix

\section{Nomenclature}

\begin{table}[h]
\centering
\small
\begin{tabularx}{\linewidth}{@{\hspace{0pt}} >{$}l<{$}  @{\hspace{2mm}}  X @{}}
\hline
\textrm{Symbol}	& Definition 	\\
\hline
Q               & full conjunctive query $Q(\vec x) \datarule R_1(\vec x_1),\ldots, R_\stages(\vec x_\stages)$ \\
Q^B             & Boolean version of $Q$: $Q^B \datarule Q(\vec x) $ \\
\boldsymbol{\mathcal{Q}} & set of acyclic queries $\{ Q \}$ \\
\stages			& number of atoms in $Q$ or non-unary stages in DP\\
m				& number of variables in $Q$ \\
D               & input database \\
n				& number tuples in largest input relation in $D$ \\
w(r_i) \in W    & weight of input tuple $r_i$ \\
w(r) \in W      & weight of result tuple $r$ with witness $(r_1,\ldots, r_\stages)$: $w(r) = w(r_1) \otimes \cdots \otimes w(r_\stages)$ \\
\mathrm{out}    & set of output tuples for $Q$ on $D$ \\
s               & state in DP problem\\
s_0, t          & start and terminal state of DP problem \\
\Sset_\sgiter	& set of states in stage $\sgiter$ \\
\Sset           & set of all states: $\Sset = \bigcup_\sgiter \Sset_\sgiter$ \\
\SsetR_\sgiter	& set of states in stage $\sgiter$ after bottom-up phase\\
\SsetR           & set of all states after bottom-up phase: $\SsetR = \bigcup_\sgiter \SsetR_\sgiter$ \\
\Dec            & set of possible decisions: $\Dec \subseteq S \times S$ \\
\DecR           & set of possible decisions after bottom-up phase\\
\sol(s)			& solution from state $s$ \\
\sol			& complete solution: $\sol = \sol(s_0)$ \\
\sol_\prank(s) 	& $\prank$-th best solution from state $s$ \\
\solW_\prank(s) & cost or weight of $\prank$-th best solution from state $s$\\
\langle s_1 \ldots s_\sglim \rangle & solution prefix of length $\sglim \in \N_1^\stages$ \\
\langle s_\sglim \ldots t \rangle & solution suffix starting at state $s_\sglim$ \\
				& %
				\\
\weight(s, s')	& cost/weight of transitioning from state $s$ to $s'$ \\
\solW(s_r)		& cost/weight of a solution/path suffix 
				$\langle s_r s_{r+1} s_{r+2} \ldots s_\stages t \rangle$: $\solW(s_r) 
				= w(s_r,s_{r+1}) \otimes w(s_{r+1},s_{r+2}) \otimes\cdots\otimes w(s_\stages, t)$ \\
\Choices_j(s)   & set of suffixes starting at $s$ from which the $j$-th best suffix is selected \\
\Cand           & Set of prefixes containing the prefix of the next result to be returned\\
\Suc(s, s')     & successors of $s'$ at state $s$ that are considered by the Lawler-based partitioning approaches \RPDP \\
\subw           & submodular width of a query \\
\hline
\end{tabularx}
\end{table}

\section{Additional information on the queries used for the experiments}
\label{sec:sql_queries}

For completeness, we list the SQL code for the queries we used in our experiments.
Each relation $R_i, i \geq 1$ has two attributes $A_1$ and $A_2$ and also an additional attribute $W$ that contains the tuple weight.
For the real datasets that correspond to networks, all relations $R_i$ correspond to the same relation $\mathrm{EDGES}(\mathrm{fromNode}, \mathrm{toNode}, \mathrm{edgeWeight})$. 
In other words, $A_1$ corresponds to the source of a directed edge, $A_2$ corresponds to the destination and $W$ corresponds to the edge weight.

The 4-Path query is:
\begin{verbatim}
    SELECT   R1.A1, R2.A1, R3.A1, R4.A1, R4.A2
    FROM     R1, R2, R3, R4
    WHERE    R1.A2=R2.A1 AND R2.A2=R3.A1 AND 
             R3.A2=R4.A1
    ORDER BY R1.W + R2.W + R3.W + R4.W ASC
    LIMIT    k
\end{verbatim}

The 4-Star query is:
\begin{verbatim}
    SELECT   R1.A1, R2.A2, R3.A2, R4.A2, R4.A2
    FROM     R1, R2, R3, R4
    WHERE    R1.A1=R2.A1 AND R1.A1=R3.A1 AND 
             R1.A1=R4.A4
    ORDER BY R1.W + R2.W + R3.W + R4.W ASC
    LIMIT    k
\end{verbatim}

The 4-Cycle query is:
\begin{verbatim}
    SELECT   R1.A1, R2.A1, R3.A1, R4.A1
    FROM     R1, R2, R3, R4
    WHERE    R1.A2=R2.A1 AND R2.A2=R3.A1 AND 
             R3.A2=R4.A1 AND R4.A2=R1.A1
    ORDER BY R1.W + R2.W + R3.W + R4.W ASC
    LIMIT    k
\end{verbatim}

The shorter or longer queries are similar and thus omitted.

\end{document}